\newtheorem{theorem}{Theorem}
\newtheorem{corollary}{Corollary}
\newtheorem{proposition}{Proposition}
\newtheorem{lemma}{Lemma}
\newcommand{\propref}[1]{Proposition~\ref{#1}}
\newcommand{\lemref}[1]{Lemma~\ref{#1}}
\newcommand{\fwer}{\textsc{fwer}}
\newcommand{\fdr}{\textsc{fdr}}
\newcommand{\nettpr}{\textsc{nettpr}}
\newcommand{\Card}{\textsf{Card}}
\newcommand{\Cov}{\textsf{Cov}}
\newcommand{\Cor}{\textsf{Cor}}
\newcommand{\diag}{\textsf{diag}}
\newcommand{\Var}{\textsf{Var}}
\renewcommand{\P}{\textsf{P}}
\newcommand{\E}{\textsf{E}}
\newcommand{\trans}{\mathrm{\scriptscriptstyle T}}
\newcommand{\normal}{\mathrm{N}}
\newcommand{\etc}{\textit{etc}.}
\newcommand{\ie}{\textit{i.e.}}
\newcommand{\ud}{\,\mathrm{d}}
\newcommand{\uH}{\,\mathrm{H}}
\newcommand{\uU}{\,\mathrm{U}}
\newcommand{\RR}{\mathbb{R}}
\newcommand{\Ar}{\mathcal{A}}
\newcommand{\Dr}{\mathcal{D}}
\newcommand{\Fr}{\mathcal{F}}
\newcommand{\Ir}{\mathcal{I}}
\newcommand{\Mr}{\mathcal{M}}
\newcommand{\Or}{\mathcal{O}}
\newcommand{\Qr}{\mathcal{Q}}
\newcommand{\Ur}{\mathcal{U}}
\newcommand{\Vr}{\mathcal{V}}
\newcommand{\Sr}{\mathcal{S}}
\newcommand{\Tr}{\mathcal{T}}
\newcommand{\va}{\mathbf{a}}
\newcommand{\vA}{\mathbf{A}}
\newcommand{\vb}{\mathbf{b}}
\newcommand{\vB}{\mathbf{B}}
\newcommand{\vD}{\mathbf{D}}
\newcommand{\vI}{\mathbf{I}}
\newcommand{\vN}{\mathbf{N}}
\newcommand{\vR}{\mathbf{R}}
\newcommand{\vU}{\mathbf{U}}
\newcommand{\vV}{\mathbf{V}}
\newcommand{\vW}{\mathbf{W}}
\newcommand{\vx}{\mathbf{x}}
\newcommand{\bzero}{\boldsymbol{0}}
\newcommand{\bN}{\boldsymbol{N}}
\newcommand{\bX}{\boldsymbol{X}}
\newcommand{\by}{\boldsymbol{y}}
\newcommand{\eps}{\epsilon}
\newcommand{\vareps}{\varepsilon}
\newcommand{\balpha}{\boldsymbol{\alpha}}
\newcommand{\bbeta}{\boldsymbol{\beta}}
\newcommand{\bdel}{\boldsymbol{\delta}}
\newcommand{\bmu}{\boldsymbol{\mu}}
\newcommand{\bLam}{\boldsymbol{\Lambda}}
\newcommand{\bSigma}{\boldsymbol{\Sigma}}
\newcommand{\bOmega}{\boldsymbol{\Omega}}
\newcommand{\bUps}{\boldsymbol{\Upsilon}}
\newcommand{\bvareps}{\boldsymbol{\varepsilon}}
\def\rT{\scriptsize{\mathrm T}}
\newcommand\independent{\protect\mathpalette{\protect\independenT}{\perp}}
\def\independenT#1#2{\mathrel{\rlap{$#1#2$}\mkern2mu{#1#2}}}
\def\text#1{\mbox{\rm #1}}
\newcommand{\cm}[1]{\ignorespaces}
\newcommand{\ignore}[1]{}
\begin{document}

\title{High Dimensional Tests for Functional Networks of Brain
  Anatomic Regions} \author{Jichun Xie and Jian Kang}

\maketitle

\mbox{} 
\vspace*{1in} 
\begin{center} 
\textbf{Author's Footnote:} 
\end{center} 
Jichun Xie is Assistant Professor, Department of Biostatistics and
Bioinformatics, Duke University School of Medicine, Durham, NC 27705. (Email:
jichun.xie@duke.edu). Jian Kang is
Assistant Professor, Department of Biostatistics, University of Michigan, Ann Arbor, MI 48109. (Email: jiankang@umich.edu).  Jian Kang's
research was partially supported by the NIH grant R01 MH105561. The authors thank the autism brain imaging data
exchange (ABIDE) study \citep{di2013autism} shares the resting-state
fMRI data.

\newpage
\begin{abstract}
  There has been increasing interests in learning resting-state brain
  functional connectivity of autism disorders using functional
  magnetic resonance imaging (fMRI) data. The data in a standard brain
  template consist of over 200,000 voxel specific time series for each
  single subject. Such an ultra-high dimensionality of data makes the
  voxel-level functional connectivity analysis (involving four billion
  voxel pairs) lack of power and extremely inefficient. In this work,
  we introduce a new framework to identify functional brain network at
  brain anatomic region-level for each individual. We propose two
  pairwise tests to detect region dependence, and one multiple testing
  procedure to identify global structures of the network. The limiting
  null distributions of the test statistics are derived. It is also
  shown that the tests are rate optimal when the alternative networks
  are sparse. The numerical studies show the proposed tests are valid
  and powerful. We apply our method to a resting-state fMRI study on
  autism and identify patient-unique and control-unique hub
  regions. These findings are consistent with autism clinical
  symptoms.
 \end{abstract}

\vspace*{.3in}
\noindent\textsc{Keywords}:
{High dimensionality; Hypothesis testing; Brain network; Sparsity;
  fMRI study}


\section{Introduction}
The functional brain network refers to the coherence of the brain
activities among multiple spatially distinct brain
regions. It plays an important role in information processing and
mental representations~\citep{bullmore2009complex,
  sporns2004organization}, and could be altered by one's disease
status. \citet{supekar2008network,
  koshino2005functional,cherkassky2006functional} showed that patients
with neurodegenerative diseases (such as the Alzheimer's disease and
the Autism Spectrum Disorder) have different function network compared with
controls. As a result, the inference on functional brain network will
benefit the study of these diseases. Our research goal is to infer the
whole functional networks of the brain regions.

Recent advances in the neuroimaging technologies provide great
opportunities for researchers to study functional brain network based
on massive nueroimaging data, which are generated using various
imaging modalities such as positron emission tomography (PET),
functional magnetic resonance imaging (fMRI), and
electroencephalography (EEG). In a neuroimaging experiment, the
scanner records the brain signals over multiple times at each location
(or voxel) in the three-dimensional brain, leading to a
four-dimensional imaging data structure. In a typical fMRI study, the
number of voxels can be up to 200,000 and the number of imaging scans
over time is round 100--200. In light of the brain function and the
neuroanatomy, the human brain can be partitioned to 100-200 anatomical
regions and each region contains 200 to 4,000 voxels. Such high
dimensionality and complexity of the data imposes great challenges on
the inference of the whole brain network.

Due to the ultra-high dimensionality of voxel numbers (up to 200,000),
direct inference on the network of voxels is extremely computationally
expensive. More importantly, the network of interest is the network of
brain regions, not voxels. To this end, \citet{andrews2007disruption}
examines the functional connectivity of a particular brain region,
called seed region, by correlating the seed region brain signals
against the brain signals from all other regions. Although this method
yields a clear view of the functional connectivities between one
region of interest (the seed region) and other regions
\citep{biswal1995functional, cordes2000mapping}, it fails to examine
the functional network on a whole brain scale. Alternatively,
\citet{VeliogluFunctional2014} proposed to form meshes around a seed
voxel by regressing $p$ functionally nearest neighbor voxels on the
seed voxel, where number of regressors $p$ is determined by minimizing
the Akaike's final prediction error \citep{AkaikeFitting1969}. Then two
voxels are considered as functionally connected if one serves as a
functional predictor as the other. The number of all connected voxel
pairs between two anatomic regions are treated as the dependence level
between these two regions.  Although this method successfully provides
a functional network among anatomic regions, no inference results are
provided on what level of connectivities should be regarded as
significant. Another commonly used method~\citep{huang2009learning,
  HuangLearning2010} is to summarize one statistic (such as the
largest principal component of voxel signals) in each region and then
study the dependence between these statistics. Commonly used measures
of dependence include covariance matrix or Gaussian Graphical
model. See \citet{supekar2008network, weiss2001correctness,
  huang2009learning, marrelec2006partial}. Since only one statistic is
summarized in each region, the dependence among these summarized
statistics sometimes fail to represent the dependence among the
regions.

In this article, we propose a new method to estimate the region-level
functional connectivity for each individual. Instead of summarizing
one statistic in each region, we summarize multiple statistics so that
information of the region can be adequately captured. These statistics
can be viewed as functional components of the region. The correlation
matrix between the components in two regions are used to measure the
dependence between two regions. We assume that two regions are
functionally connected if and only if at least one pair of components
are correlated between these two regions. 

We then concatenate these functional components region by region. No
region-level functional connectivity implies that the covariance
matrix (or equivalently its inverse) of the concatenated components
has a block-diagonal structure. This is a reasonable assumption and
has been used in many existing literatures. (See
\citet{rubinov2010complex, bowman2012determining, huang2009learning}.)
Thus, to construct a functional network of brain anatomic regions, we
check if the correlation matrix of two regions has a block diagonal
structure.

Previous literatures for testing high dimensional
covariance/correlation matrix include testing whether the covariance
matrix is proportional to the identity matrix \citep{LedoitSome2002,
  BirkeNote2005,SchottTest2007,ChenTests2010,CaiOptimal2013,LiHypothesis2014},
and testing whether two covariance matrices are equal
\citep{LiTwo2012, CaiTwo2013, LiHypothesis2014}. To the best of our
knowledge, no existing methods have been proposed to address whether a
rectangle block of a covariance matrix is zero. However, ideas in
those literatures can be borrowed to construct test statistics for our
problem. There are mainly two types of existing test statistics: one
is chi-square type of statistic based on the sum square of sample
covariances.  and the other is the extreme type of statistic based on
the largest absolute self-standardized sample covariance. In general,
the chi-square type of statistics performs better when the alternative
network is dense and the extreme type of statistics performs better
when the alternative network is sparse. In imaging studies, the
network of functional components is usually sparse.  Therefore, we
will use the extreme type of statistics. Details will be discussed in
Section~\ref{sec:test}.

The rest of the paper is organized as follows. In
Section~\ref{sec:hypothesis}, we introduce the notations and define
the testing hypotheses of our interests. Section~\ref{sec:test}
presents two procedures to control type I error of each hypothesis and
a multiple testing procedure to control family-wise error
rate. Theoretical properties of the proposed procedures are discussed
in Section~\ref{sec:theory}, and their numerical performances are
shown in Section~\ref{sec:simul}. We apply the proposed procedures on
a resting-state fMRI data of subjects with and without autism spectrum
disorder (ASD), and compare the functional networks of anatomic
regions between cases and controls. The results match the clinical
characteristics of ASD.

\section{Model and Hypotheses}\label{sec:hypothesis}

In fMRI studies, blood-oxygen-level dependent (BOLD) signals are
collected at a large number of voxel locations for $n$ scans.  The
standard preprocessing steps including motion correction, slice-timing
correction, normalization, de-trending and de-meaning procedures are
applied to the BOLD signals \citep{worsley2002general,
  friman2005resampling, LindquistStatistical2008}, and then the
signals are clustered based on their voxel locations mapping to the
existing anatomic regions. After clustering, the signals are
summarized into functional components to reduce the dimension of
voxels and eliminate the redundancy of high coherent signals.  One way
to summarize the functional components is to perform principal
component analysis (PCA) in region $s$ to extract the first $q_s$
principal components. Alternatively, independent component analysis
(ICA) can be perform to extract $q_s$ independent components. The
choice of summarizing method depends on the distribution of the
processed signals. See
\citet{AndersonIntroduction2003,SamworthIndependent2012}. 

For each patient, assume that $q_s$ functional components are
summarized in region $s$. Each functional component is of length $n$,
containing replications of signals across $n$ scans. After removing
the temporal-correlation between the scans, denote by $X_{k,s,i}$ the
$k$-th scan of the $i$-th component in $s$-th brain region. 
Then these components can be treated as independent across scans.

Denote by $\bX_{k,s}= (X_{k,s,1},\ldots,X_{k,s,q_s})^\trans$ the
vector of functional components in region $s$ of scan $k$, and by
\[\bUps_{st} = \Cor(\bX_{k,s},\bX_{k,t})\]
the correlation matrix between region $s$ and region $t$. To test
whether region $s$ and region $t$ are functionally connected, we set
up the hypotheses:
\begin{equation}\label{eq:hypo1}
  \uH_{0,st}: \bUps_{st} = \bzero, \qquad \mbox{versus}\qquad 
  \uH_{1,st}: \bUps_{st} \neq \bzero.
\end{equation}
A rejection of $H_{0,st}$ implies that regions $s$ and region $t$ have
significant functional connectivity. The goal is to test $\uH_{0,st}$
with controlled type I error, and also to perform multiple testing on
$\uH_{0,st}$ simultaneously to control family-wise error rate.

The difficulty of this testing problem lies in the large number of
parameters and relatively small number of replications. First, the 
number of summarized functional components in each region may increase
with the number of scans $n$. Second, the number of
total region pairs $p(p-1)/2$ usually largely exceeds $n$. Therefore,
we need to address the high dimensional challenges in testing each
hypothesis and testing a large number of them simultaneously.

\section{Testing Procedures}\label{sec:test}

To test $\uH_{0,st}$, we propose two testing procedures to fit
different distribution assumptions of the functional
components. Therefore, neither of them can universally outperform the
other. We further develop a multiple testing procedure to control
the family-wise error (FWER) for testing $\{\uH_{0,st}:\ 1\leq s<t
\leq p\}$ simultaneously.

\subsection{Test I: Marginal Dependence Testing}

The first procedure is based on the Pearson correlation between the
components in two regions. 

Denote by the pairwise correlation $\rho_{st,ij} =
\Cor(X_{k,s,i},X_{k,t,j})$. Then the null hypothesis $\uH_{st,0}:
\bUps_{st} = \bzero$ is equivalent to $\uH_{st,0}: \max_{1\leq i\leq
  q_s, 1\leq j\leq q_t}|\rho_{st,ij}| = 0$. A straightforward approach
is to check whether the sample correlation between two regions is
close to zero. Denote the Pearson correlation
between the $i$-th component in region $s$ and the $j$-th component in
region $t$ by $\hat{\rho}_{st,ij}$, \ie,
\[\hat{\rho}_{st,ij} =
\hat{\sigma}_{st,ij}/\left(\hat{\sigma}_{ss,ii}\hat{\sigma}_{tt,jj}\right)^{1/2},\]
where $\bar{X}_{s,i} = \sum_{k=1}^n X_{k,s,i}/n$, $\bar{X}_{t,j} =
\sum_{k=1}^n X_{k,t,j}/n$, $\hat{\sigma}_{st,ij} =
\frac{1}{n}\sum_{k=1}^n(X_{k,s,i}-\bar{X}_{s,i})(X_{k,t,j}-\bar{X}_{t,j})$
is the sample covariance between the $i$-th component in region $s$
and the $j$-th component in region $t$, and $\hat{\sigma}_{ss,ii}$ and
$\hat{\sigma}_{tt,jj}$ are sample variances defining in the similar
manner. The test statistic is defined as
\begin{equation}
  \label{eq:Tst1}
  T_{st}^{(1)} = n\cdot \max_{i,j}\hat{\rho}_{st,ij}^2 -2\log (q_sq_t)
+ \log\log (q_sq_t). 
\end{equation}
With mild conditions (details in
Section~\ref{sec:theory}), under $\uH_{0,st}$, $T_{st}^{(1)}$
asymptotically follows the Gumbel distribution
\begin{equation}
F(x) = \exp\{-\pi^{1/2}\exp(-x/2)\}.\label{eq:null_dist}
\end{equation}
To control type I error at level $\alpha$, we reject $\uH_{0,st}$
if $T_{st}^{(1)}$ exceeds the $(1-\alpha)$-th quantile of $F(x)$, \ie,
$T_{st}^{(1)} > q_\alpha$, with
\begin{equation}\label{eq:qalpha}
q_\alpha= -\log(\pi) - 2\log\log\{1/(1-\alpha)\}.
\end{equation}

\ignore{
We would like to point out the intrinsic difference between our test
statistic and the statistics proposed by \citet{CaiTwo2013}, where the
authors test whether two covariance matrices are equal. The statistic
they use to measure the difference between the each entry of two
covariance matrices is the self-standardized pairwise covariance
difference. Since the problem they consider is different from ours,
their test statistics cannot be applied directly. We now demonstrate
that an extension of their idea to our problem is still different from what
we proposed.

The extension of their idea will lead to the pairwise self-standardized
covariance estimator
\[\tilde{\rho}_{st,ij} =
\hat{\sigma}_{st,ij}/(\hat{\theta}_{st,ij})^{1/2}.\] 
Here
$\hat{\theta}_{st,ij} = \frac{1}{n}\sum_{k=1}^n \left\{
  (X_{k,s,i}-\bar{X}_{s,i})(X_{k,t,j}-\bar{X}_{t,j}) -
  \hat{\sigma}_{st,ij}\right\}^2$ is the estimator of
\begin{equation}\label{eq:theta}
\theta_{st,ij} =
\Var\{(X_{k,s,i}-\mu_{s,i})(X_{k,t,j}-\mu_{t,j})\},
\end{equation}
where
$\mu_{s,i}$ and $\mu_{t,j}$ are the expectations of $X_{k,s,i}$ and
$X_{k,t,j}$ respectively. Subsequently, another test statistics for
testing $\uH_{0,st}$ can be constructed as
\[\tilde{T}_{st}^{(1)} = n\cdot \max_{i,j}\tilde{\rho}_{st,ij}^2 -
2\log(q_sq_t) + \log\log(q_sq_t).\]
We noticed that under $\uH_{0,st}$,
\[\theta_{st,ij} = \Var(X_{k,s,i})\Var(X_{k,t,j}),\]
and therefore can be well estimated by $\hat{\theta}_{1,st,ij} =
\hat{\sigma}_{ss,ii}\hat{\sigma}_{tt,jj}$. Thus, we use Pearson correlation
$\hat{\rho}_{st,ij} =
\hat{\sigma}_{st,ij}/\left(\hat{\theta}_{1,st,ij}\right)^{1/2}$
to construct $T_{st}^{(1)}$ in (\ref{eq:Tst1}). Although
$T_{st}^{(1)}$ and $\tilde{T}_{st}^{(1)}$ behave similarly under
certain conditions, the null distribution of $T_{st}^{(1)}$ converges
to $F(x)$ faster. Therefore, we use $T_{st}^{(1)}$ as our test statistic.
}
\subsection{Test II: Local Conditional Dependence Testing}

The alternative testing procedure is based on the Pearson correlation
between the residuals of local neighborhood selection in two regions. 

In region
$s$, we regress on each component $X_{k,s,i}$ the rest of components,
\begin{equation}
  X_{k,s,i}   = \alpha_{s,i} + \bX_{k,s,-i}^\trans\bbeta_{s,i} +
  \vareps_{k,s,i},\label{eq:ns1}
\end{equation}
where $\bX_{k,s,-i}$ is the vector of $\bX_{k,s}$ by removing the
$i$-th component. In region $t$ with $t\neq s$, we build up similar
regression model 
\begin{equation}
  X_{k,t,j}   = \alpha_{t,j} + \bX_{k,t,-j}^\trans\bbeta_{t,j} +
  \vareps_{k,t,l},\label{eq:ns2}
\end{equation}
Let $\rho_{\vareps,st,ij} =
\Cor(\vareps_{k,s,i},\vareps_{k,t,j})$ be the correlation of the error
terms in two models. Clearly, the null hypothesis $\uH_{0,st}$ is equivalent to
\[\uH_{0,st}:\ \max_{i,j}\rho_{\vareps,st,ij} =0.\]

We therefore develop a testing procedure to test if the correlations
$\rho_{\vareps,st,ij}$ are all zero. If the coefficients
$\bbeta_{s,i}$ and $\bbeta_{t,j}$ in model (\ref{eq:ns1}) and
(\ref{eq:ns2}) were known, we would know the value of each realization
of the random error $\vareps_{k,s,i}$ and $\vareps_{k,t,j}$, and
center them as $\tilde{\vareps}_{k,v,l} = \vareps_{k,v,l} -
\bar{\vareps}_{v,l}$ with $\bar{\vareps}_{v,l} =
\frac{1}{n}\sum_{k=1}^n \vareps_{k,v,l}$, $(v,l) = (s,i)$ or
$(v,l) = (t,j)$. Based on model (\ref{eq:ns1}) and (\ref{eq:ns2}), the
centered realization of randome error $\tilde{\vareps}_{k,v,l}$ could be
expressed as 
\begin{equation}
  \tilde{\vareps}_{k,v,l}  = X_{k,v,l}-\bar{X}_{v,l} 
  (\bX_{k,v,-l}-\bar{\bX}_{v,-l})^\trans\bbeta_{v,l}, \quad
  (v,l)=(s,i) \text{ or } (v,l) = (t,j).
  \label{eq:tdvareps}
\end{equation}
Consequently, the Pearson correlation between
$\tilde{\vareps}_{k,s,i}$ and $\tilde{\vareps}_{k,t,j}$ would be
\[\tilde{\rho}_{\vareps,st,ij} = \frac{1}{n}\sum_{k=1}^n
\tilde{\sigma}_{\vareps,st,ij}/\left(\tilde{\sigma}_{\vareps,ss,ii}
  \tilde{\sigma}_{tt,jj}\right)^{1/2},\] 
where $
\tilde{\sigma}_{\vareps,st,ij} =
\frac{1}{n}\sum_{k=1}^n\tilde{\vareps}_{k,s,i}\tilde{\vareps}_{k,t,j}$,
$ \tilde{\sigma}_{\vareps,ss,ii} =
\frac{1}{n}\sum_{k=1}^n\tilde{\vareps}_{k,s,i}^2$, and
$\tilde{\sigma}_{\vareps,tt,jj} =
\frac{1}{n}\sum_{k=1}^n\tilde{\vareps}_{k,t,j}^2 $.

Unfortunately in practice, the coefficients in (\ref{eq:ns1}) and
(\ref{eq:ns2}) are unknown. However, the coefficients can be well
estimated by existing methods, such as Lasso or Dantzig selector.
Suppose ``good''\footnote{We
  will discuss the criteria of ``good'' and how to obtain ``good''
  coefficient estimators in Section~\ref{sec:theory}.}
coefficient estimators $\hat{\bbeta}_{s,i}$ and $\hat{\bbeta}_{t,j}$
exist. Then the centered error term $\tilde{\vareps}_{k,v,l}$ can be estimated by 
\begin{equation}
\hat{\vareps}_{k,v,l} = X_{k,v,l}-\bar{X}_{v,l} -
(\bX_{k,v,-l}-\bar{\bX}_{v,-l})^\trans\hat{\bbeta}_{v,l},\quad
(v,l)=(s,i)\text{ or }(v,l)=(t,j).\label{eq:htvareps}
\end{equation}
Consequently, we calculate Pearson correlation based on
$\hat{\vareps}_{k,s,i}$ and $\hat{\vareps}_{k,t,j}$,
\[\hat{\rho}_{\vareps,st,ij} =
\hat{\sigma}_{\vareps,st,ij}/\left(\hat{\sigma}_{\vareps,ss,ii}
  \hat{\sigma}_{tt,jj}\right)^{1/2},\]
where 
 $\hat{\sigma}_{\vareps,st,ij} =
\frac{1}{n}\sum_{k=1}^n\hat{\vareps}_{k,s,i}\hat{\vareps}_{k,t,j}$,
$ \hat{\sigma}_{\vareps,ss,ii} =
\frac{1}{n}\sum_{k=1}^n\hat{\vareps}_{k,s,i}^2$, and
$\hat{\sigma}_{\vareps,tt,jj} =
\frac{1}{n}\sum_{k=1}^n\hat{\vareps}_{k,t,j}^2$.

Similar as Test I, we obtain
the test-statistics as follows.
\[T^{(2)}_{st} = n \cdot \max_{i,j}\hat{\rho}_{\vareps,st,ij}^2 -
2\log(q_sq_t) + \log\log (q_sq_t).\] Under certain condtions (discussed
in Section~\ref{sec:theory}) and $\uH_{0,st}$, $T_{st}^{(2)}$ also
follows the distribution $F(x)$ in (\ref{eq:null_dist}).  Therefore,
to control type I error at level $\alpha$, we reject $\uH_{0,st}$ if
$ T^{(2)}_{st}>q_\alpha$, where $q_\alpha$ is the $(1-\alpha)$-th quantile of
$F(x)$.

\subsection{Family-Wise Error Rate Control}

Considering the standard space of the brain \citep[Montreal
Neurological Institute, MNI]{mazziotta1995probabilistic} and the
commonly used brain atlas: the Automated Anatomical Labeling
\citep[AAL]{tzourio2002automated} regions, the number of region pairs
in the whole brain is over 4,000, which is much larger than the number
of scans (typically a couple of hundreds). This motivates the needs of
correction for multiplicity when testing any two of them are
connected, in order to detect the functional connectivity of the whole
brain. We propose procedure (\ref{eq:fwer}) to test $\{\uH_{0,st}: \
1\leq s<t\leq p\}$ simultaneously and control the family-wise error
rate (\fwer). The procedure can involve either $\widetilde T_{st}^{(1)}$ or 
$\widetilde T_{st}^{(2)}$, depending on the structure assumption of the
dependence structure of local voxels. It turns out that to control
\fwer\ at level $\alpha$, we only need to adopt a higher
threshold. The adjusted testing procedure is as follows:
\begin{equation}
  \text{Reject } \uH_{0,st} \text{ if and only if } T_{st}^{(b)} >
  2\log \{p(p-1)/2\} + q_\alpha \quad (1\leq s<t\leq p), \label{eq:fwer}
\end{equation}
for $b = 1,2$. The threshold depends on the desired family-wise
error rate $\alpha$, and the total number of region pairs $p(p-1)/2$.

\section{Theory}\label{sec:theory}
In this section, we show the null distributions of the test
statistics in procedures I and II, their power, and the optimality
properties of the proposed tests. Also, we prove that the
multiple testing procedure (\ref{eq:fwer}) is able to control
family-wise error rate.

For the rest of the paper, unless otherwise stated, we use the
following notations. For a vector $\va= (a_1,\ldots,a_p)^\trans \in
\RR^p$, denote by $\lvert \va \rvert_2 = (\sum_{j=1}^p a_j^2)^{1/2}$
its Euclidean norm. For a matrix $\vA = (a_{ij}) \in \RR^{p\times q}$,
define the spectral norm $\lVert \vA \rVert_2 = \sum_{\lvert
  \vx\rvert_2=1} \lvert \vA\vx\rvert_2$ and the Frobenius norm $\lVert
\vA\rVert_F = (\sum_{ij} a_{ij}^2)^{1/2}$. For a finite set
$\Ar=\{a_1,\ldots,a_s\}$, $\Card(\Ar)=s$ counts the number of elements
in $\Ar$. For two real number sequences $\{a_n\}$ and $\{b_n\}$, write
$a_n=O(b_n)$ if $\lvert a_n\rvert \leq C\lvert b_n \rvert$ hold for a
certain positive constant $C$ when $n$ is sufficiently large; write
$a_n=o(b_n)$ if $\lim_{n\rightarrow\infty} a_n/b_n=0$; and write $a_n
\asymp b_n$ if $c\lvert b_n \rvert \leq \lvert a_n\rvert \leq C\lvert
b_n\rvert$, for some positive constants $c$ and $C$ when $n$ is
sufficiently large.

Also assume the number of variables in all regions are comparable, \ie,
$q_1 \asymp q_2 \ldots \asymp q_p$. Let $q_0 =
\max(q_1,\ldots,q_p)$. Assume $\bX_{1,v},\ldots,\bX_{n,v}$ are
independently and identically distributed for each region $v$.

\subsection{Asymptotic Properties for Test I}\label{sec:theory_p1}

Denote by $\Upsilon_{vv} = (\rho_{vv,ij})_{q_v\times q_v}$ the
correlation matrix of $\bX_{k,v}$. For
$X_{k,v,i}$, denote by $r^{(1)}_{v,i}$ the number of other components
in region $v$ that at non-negligibly correlated with $X_{k,v,i}$,
\begin{align*}
  r^{(1)}_{v,i} & = \Card\{j: \lvert \rho_{vv,ij} \rvert \geq (\log
  q_0)^{-1-\alpha_0},\ j\neq i \},
\end{align*}
where $\alpha_0$ is a positive constant.
For a positive constant $\rho_0<1$, define
\[
  \Dr^{(1)}_v = \{i: \lvert \rho_{vv,ij} \rvert>\rho_0 \text{ for some }
  j \neq i\},
\]
Thus, $\Dr^{(1)}_v$ contains index $i$ such that
$X_{k,v,i}$ is highly correlated to at least one other
component in region $v$. 

We need the following conditions:

\noindent {\bf (C1.1)} For region $v=s,t$, there exists a subset $\Mr_v \subset
\{1,\ldots,q_v\}$ with $\Card(\Mr_v) = o(q_v)$ and a constant
$\alpha_0>0$ such that for all $\gamma>0$, $\max_{
  i\in \Mr_v^c} r^{(1)}_{v,i} = o(q_v^\gamma)$.  Moreover,
assume there exists a constant $0\leq \rho_0<1$ such that
$\Card\{\Dr^{(1)}_v\} = o(q_v)$. 

Condition (C1.1) constraints the sparsity level of non-neglegible and
large signals. It specifies that for each region $v$, for almost all
component $i$ within the region, the count of non-neglible
$\lvert\rho_{vv,ij}\rvert$ is of a smaller order of $q_v^\gamma$.
The condition is weaker than the commonly seen condition which
imposes a constant upper bound on the largest eigenvalue of
$\bSigma_{vv}$. In fact, if $\lambda_{\max}(\bSigma_{vv})= o\{q_v^\gamma/(\log
q_0)^{1+\alpha_0}\}$, $\max_{1\leq i\leq q_v} r^{(1)}_{v,i} =
o(q_v^\gamma)$.  In addition, (C1.1) also requires the number of
components that are very highly correlated with at least one other component
to be small. This condition can be easily satisfied if all the correlations
$\rho_{vv,ij}$ are bounded by $\rho_0$.

\noindent {\bf (C1.2)} Sub-Gaussian type tails: For region $v=s,t$, suppose that $\log(q_v) =
o(n^{1/5})$. There exist some constants $\eta >0$ and $K>0$ such that
\[
\max_{1\leq i\leq q_v}\E \left[\exp\{\eta (X_{k,v,i} - \mu_{v,i})^2/\sigma_{vv,ii}\}
\right] \leq K.\]

\noindent {\bf (C1.2*)} Polynomial-type tails: For region $v=s,t$, suppose that for some $\gamma_1$,
$c_1>0$, $q_0 \leq c_1 n^{\gamma_1+1/2}$, and for some $\epsilon>0$,
\[\max_{1\leq i\leq q_v}\E\lvert (X_{k,v,i}-\mu_{xi})/\sigma_{vv,ii}^{1/2} \rvert^{4\gamma_1 + 4 +
  \epsilon} \leq K. \] 

Conditions (C1.2) and C(1.2*) impose constraints on the tail of the
distribution of $X_{k,v,i}$, and the corresponding order of
$q_v$. They fit a wide rage of distributions. For example, Gaussian
distribution satisfy Condition (C1.2), and Pareto distribution
$Pareto(\alpha)$ (a heavy tail distribution) with
$\alpha$ sufficiently large satisfy Condition (C1.2*).

\noindent {\bf (C1.3)} Let $\theta_{st,ij} =
\Var\{(X_{s,i}-\mu_{s,i})(X_{t,j}-\mu_{t,j})\}$, with $\mu_{s,i}=\E
X_{s,i}$ and $\mu_{t,j}=\E X_{t,j}$. Suppose that there exists $\kappa_1>0$, such that
\[
 \max_{1\leq i\leq q_s, 1\leq j\leq q_t}
 \frac{\sigma_{ss,ii}\sigma_{tt,jj}}{\theta_{st,ij}} \leq \kappa_1.
\]

Condition (C1.3) holds immediately with $\kappa_1=1$ under the null
$\uH_{0,st}$, and thus we only need it for the power analysis.  Under
the alternative $\uH_{1,st}$, it holds for a bunch of distributions. For
instance, it holds when the concatenated vector
$(\bX_{k,s}^\trans,\bX_{k,t}^\trans)^\trans$ follows elliptically
contoured distributions \citep{AndersonIntroduction2003}. In
particular, for multivariate Gaussian distributions, $\kappa_1 \leq 2$.

We first present the asymptotic null distribution of
$T_{st}^{(1)}$. 
\begin{theorem}\label{th:Tnull}
  Suppose that (C1.1) and (C1.2) (or (C1.2*)) hold. Then under
  $\uH_{0,st}$, as $n,q_0\rightarrow \infty$, for all $x \in \RR$,
  the distribution $T_{st}^{(1)}$ converges to the Gumbel
  distribution $F(x)$ defined in (\ref{eq:null_dist}).
\end{theorem}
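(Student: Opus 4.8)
The plan is to follow the extreme-value route familiar from the analysis of the coherence of large sample correlation matrices: reduce the statistic to the maximum of nearly standardized sums of i.i.d.\ cross-products, obtain the single-coordinate tail via a Cram\'er-type moderate deviation, and control the dependence across coordinate pairs by a Poisson/Bonferroni argument that exploits the sparsity conditions in (C1.1).

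First I would pass from the Pearson correlation $\hat\rho_{st,ij}$ to an oracle version in which the population means $\mu_{s,i},\mu_{t,j}$ and variances $\sigma_{ss,ii},\sigma_{tt,jj}$ replace their sample counterparts. Writing $W_{k,ij} = (X_{k,s,i}-\mu_{s,i})(X_{k,t,j}-\mu_{t,j})/(\sigma_{ss,ii}\sigma_{tt,jj})^{1/2}$ and $V_{ij} = n^{-1/2}\sum_{k=1}^n W_{k,ij}$, one shows that $\max_{i,j}\lvert \sqrt n\,\hat\rho_{st,ij} - V_{ij}\rvert$ is negligible relative to the $\sqrt{\log(q_sq_t)}$ scale. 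This replacement must be uniform over the $q_sq_t$ pairs, and this is where (C1.2) (sub-Gaussian tails together with $\log q_v = o(n^{1/5})$), or (C1.2*) in the polynomial case, enters through exponential/moment tail bounds and a union bound. Under $\uH_{0,st}$ the variables $X_{k,s,i}$ and $X_{k,t,j}$ are uncorrelated, so each $V_{ij}$ has mean zero and, since $\theta_{st,ij}=\sigma_{ss,ii}\sigma_{tt,jj}$ under the null, unit variance; the problem thus reduces to the limiting law of $\max_{i,j} V_{ij}^2 - 2\log(q_sq_t) + \log\log(q_sq_t)$.

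Next I would pin down the single-coordinate tail. Setting $m = q_sq_t$ and $t_m(y) = 2\log m - \log\log m + y$, a Cram\'er-type moderate deviation for the standardized sum $V_{ij}$ gives $\P(\lvert V_{ij}\rvert \ge x)/\{2(1-\Phi(x))\} \to 1$ uniformly in $(i,j)$ for $x = O(\sqrt{\log m})$, whence $m\,\P(V_{ij}^2 > t_m(y)) \to c\,e^{-y/2}$ with $c$ the constant appearing in $F$. If the coordinates $\{V_{ij}\}$ were independent, a direct computation would then give $\P(\max_{i,j} V_{ij}^2 \le t_m(y)) = \{1 - \P(V_{11}^2 > t_m(y))\}^{m} \to \exp(-c\,e^{-y/2}) = F(y)$. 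The real difficulty is that the $V_{ij}$ are dependent: $X_{k,s,i}$ is shared across all pairs $(i,\cdot)$ and $X_{k,t,j}$ across all $(\cdot,j)$. I would therefore control the count $N_m(y) = \Card\{(i,j): V_{ij}^2 > t_m(y)\}$ by a Chen--Stein Poisson approximation (equivalently, by truncating the inclusion--exclusion/Bonferroni expansion at a finite depth), showing $N_m(y)$ is asymptotically Poisson with mean $c\,e^{-y/2}$, which yields the Gumbel limit for the maximum.

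The main obstacle is this decorrelation step, and it is exactly what (C1.1) is designed to handle. The dangerous terms in the second-order Bonferroni sum are joint exceedances $\P(V_{ij}^2 > t_m(y),\,V_{i'j'}^2 > t_m(y))$ for pairs sharing an index, say $i = i'$, for which the correlation between $W_{\cdot,ij}$ and $W_{\cdot,ij'}$ is governed by $\rho_{tt,jj'}$. I would split the indices into a regular set and an exceptional set: on the regular set, $r^{(1)}_{v,i}$ small forces these shared-index correlations to stay bounded away from $1$ for all but $o(q_v^\gamma)$ partners, so a bivariate moderate-deviation bound makes each joint exceedance $o(m^{-2})$ uniformly, and the double sum over $O(m^2)$ pairs is $o(1)$; the near-degenerate directions collected in $\Dr^{(1)}_v$, together with the exceptional set $\Mr_v$, have cardinality $o(q_v)$ and so contribute a vanishing fraction of coordinates, whose effect on the maximum is shown to be dominated with probability tending to one. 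Establishing the bivariate bound uniformly over correlations bounded away from $\pm 1$ is the technically delicate piece: the sub-Gaussian case (C1.2) admits exponential bounds directly, whereas the polynomial-tail case (C1.2*) requires an additional truncation of $W_{k,ij}$ at a level $n^{\tau}$ followed by a moment rather than exponential estimate, which is what forces the growth restriction $q_0 \le c_1 n^{\gamma_1 + 1/2}$.
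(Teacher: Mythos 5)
Your proposal is correct and follows essentially the same route as the paper's proof: reduce $T^{(1)}_{st}$ to an oracle statistic with population means and variances, truncate the cross-products under (C1.2)/(C1.2*), run a truncated Bonferroni (Poissonization) expansion for the maximum, and use the sparsity in (C1.1) to show that index pairs with non-negligible shared-coordinate correlation are rare enough that joint exceedances do not perturb the Poisson limit. The only notable technical substitution is that you invoke Cram\'er-type moderate deviations for the marginal and bivariate tails where the paper instead applies Zaitsev's multivariate Gaussian approximation and then computes the Gaussian orthant probabilities directly (Lemma S.1); both are standard and lead to the same Gumbel limit.
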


When (C1.1) is not satisfied, \ie, the correlation matrices
$\Upsilon_{ss}$ and $\Upsilon_{tt}$ are arbitrary, it is difficult to
derive the limiting null distribution of $T_{st}^{(1)}$. However,
Test I can still control the type I error.

\begin{proposition}\label{pr:C2typeI}
 Under (C1.2) (or (C1.2*)) and the null $\uH_{0,st}$, for $0<\alpha<1$, 
\[\P\{T_{st}^{(1)}\geq q_\alpha\}\leq \log\{1/(1-\alpha)\},\]
where $q_\alpha$ is defined in (\ref{eq:qalpha}). 
\end{proposition}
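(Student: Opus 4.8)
The plan is to prove the bound by a Bonferroni (union bound) argument, which crucially avoids any assumption on the dependence among the entries $\hat{\rho}_{st,ij}$. This is precisely why condition (C1.1), which \thmref{th:Tnull} needs in order to pin down the \emph{exact} Gumbel limit through a Poisson/Chen--Stein approximation, can be dropped here: an upper bound on the tail probability does not require the maximum to behave like a maximum of independent coordinates.

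First I would unwind the definition of $T_{st}^{(1)}$. Writing $N=q_sq_t$ and $y_n=q_\alpha+2\log N-\log\log N$, the event $\{T_{st}^{(1)}\ge q_\alpha\}$ is exactly $\{n\max_{i,j}\hat{\rho}_{st,ij}^2\ge y_n\}$, so a union bound gives
\[
\P\{T_{st}^{(1)}\ge q_\alpha\}
=\P\Big\{\textstyle\bigcup_{i,j}\{n\hat{\rho}_{st,ij}^2\ge y_n\}\Big\}
\le\sum_{i=1}^{q_s}\sum_{j=1}^{q_t}\P\{\sqrt{n}\,|\hat{\rho}_{st,ij}|\ge\sqrt{y_n}\}.
\]
Because this step uses only the marginal law of each $\hat{\rho}_{st,ij}$, no control of $\Upsilon_{ss}$ or $\Upsilon_{tt}$ is needed, and the inequality holds for arbitrary within-region correlation structure.

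The technical heart is a uniform Cram\'er-type moderate deviation bound for each self-standardized sample correlation. Under $\uH_{0,st}$ one has $\rho_{st,ij}=0$, so the products $(X_{k,s,i}-\mu_{s,i})(X_{k,t,j}-\mu_{t,j})$ are i.i.d.\ with mean zero; consequently $\sqrt{n}\,\hat{\rho}_{st,ij}$ is, up to the negligible effects of centering at the sample means and of replacing $\hat{\sigma}_{ss,ii},\hat{\sigma}_{tt,jj}$ by their population values, a self-normalized sum. I would then invoke self-normalized moderate deviation theory (of the Jing--Shao--Wang type) to show that, uniformly in $(i,j)$ and for $x$ of order $\sqrt{\log N}$,
\[
\P\{\sqrt{n}\,|\hat{\rho}_{st,ij}|\ge x\}=\big(1+o(1)\big)\,2\bar\Phi(x),
\]
where $\bar\Phi=1-\Phi$ is the standard normal survival function. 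This is exactly where the tail hypotheses enter: (C1.2) (sub-Gaussian tails with $\log q_v=o(n^{1/5})$) supplies the moderate-deviation range needed to reach $x\asymp\sqrt{\log N}$, while (C1.2*) handles polynomially-tailed components via truncation, with the moment order $4\gamma_1+4+\epsilon$ and the growth restriction $q_0\le c_1 n^{\gamma_1+1/2}$ calibrated so that the same range is covered.

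Finally I would assemble the pieces. Using $2\bar\Phi(x)\le(1+o(1))\sqrt{2/(\pi x^2)}\,e^{-x^2/2}$ at $x=\sqrt{y_n}$ together with $y_n=(2+o(1))\log N$,
\[
\sum_{i,j}\P\{\sqrt{n}\,|\hat{\rho}_{st,ij}|\ge\sqrt{y_n}\}
\le(1+o(1))\,N\sqrt{\tfrac{2}{\pi y_n}}\,e^{-y_n/2}
=(1+o(1))\,\frac{1}{\sqrt{\pi}}\,e^{-q_\alpha/2},
\]
where the factors $N$, $\sqrt{\log N}$ and $e^{-\log N}$ cancel. Substituting $q_\alpha=-\log\pi-2\log\log\{1/(1-\alpha)\}$ yields $\tfrac{1}{\sqrt{\pi}}e^{-q_\alpha/2}=\log\{1/(1-\alpha)\}$, which gives the claimed bound as $n,q_0\to\infty$. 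The main obstacle is the uniform moderate deviation estimate of the third paragraph: one must verify that the approximation error is $o(1)$ \emph{uniformly} over all $N=q_sq_t$ pairs and up to deviation level $\sqrt{y_n}\asymp\sqrt{\log N}$, and that sample-mean centering and variance estimation do not spoil this uniformity under the weaker tail condition (C1.2*).
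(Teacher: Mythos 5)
Your proposal is correct and follows essentially the same route as the paper: the paper's proof also applies a union bound over the $d_{st}=q_sq_t$ pairs and then invokes the marginal moderate-deviation estimate from the proof of Theorem~\ref{th:Tnull} (which rests on the self-normalized tail bound of Jing--Shao--Wang type in Lemma~\ref{lm:dist}), arriving at the same cancellation $\frac{1}{\sqrt{\pi}}e^{-q_\alpha/2}=\log\{1/(1-\alpha)\}$. Your observation that the union bound is exactly what lets (C1.1) be dropped matches the paper's stated rationale for this proposition.
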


When the desired type I error $\alpha$ is small,
$\log\{1/(1-\alpha)\}\approx \alpha$. Therefore, Test I can still
control type I error close to the desired level. When there comes a
rare circumstance that a larger type I error is desired for the test,
we can define $\alpha' = 1-\exp(-\alpha)$ and reject $\uH_{0,st}$ when
$\widetilde T_{st}^{(1)}\geq q_{\alpha'}$.  Since $\alpha =
\log\{1/(1-\alpha')\}$, Test I is always a asymptotically valid
test, for arbitrary correlation matrices $\Upsilon_{ss}$ and
$\Upsilon_{tt}$. However, the power will be reduced when we threshold
$T_{st}^{(1)}$ at the a higher level $q_{\alpha'}$.

We now turn to the power analysis of Test I. To
test the correlation between region $s$ and region $t$, we define
the following class of correlation matrix:
\[ \Ur_{st}^{(1)} (c) = \left\{ \Upsilon_{st}: n\cdot \max_{i,j} \rho_{st,ij}^2 \geq c\log d_{st}
\right\},\]
It turns out that Test I distinguishes $\Upsilon_{st}$ in
$\Ur_{st}^{(1)}\{4(1+\kappa_1)\}$ from a zero matrix with a probability 
approaching to one asymptotically.

\begin{theorem}\label{th:power}
  Suppose that (C1.2) (or (C1.2*)) and (C1.3) hold. Then as $n$ and
  $q_0$ both go to infinity,
\[\inf_{\Upsilon_{st}\in \Ur_{st}^{(1)}\{4(1+\kappa_1)\}}
\P\{ T_{st}^{(1)} > q_\alpha\} \rightarrow 1.\]
\end{theorem}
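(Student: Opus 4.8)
The plan is to exploit that $T_{st}^{(1)}$ involves a \emph{maximum} over $(i,j)$, so to prove power I never need a union bound and hence pay no multiplicity penalty: it suffices to bound the statistic below by a single well-chosen entry. Fix any $\Upsilon_{st}\in\Ur_{st}^{(1)}\{4(1+\kappa_1)\}$ and let $(i^*,j^*)$ attain $\max_{i,j}\rho_{st,ij}^2$, so by definition $n\,\rho_{st,i^*j^*}^2\ge 4(1+\kappa_1)\log(q_sq_t)$. Since $\max_{i,j}\hat\rho_{st,ij}^2\ge \hat\rho_{st,i^*j^*}^2$, one gets the deterministic lower bound $T_{st}^{(1)}\ge n\,\hat\rho_{st,i^*j^*}^2-2\log(q_sq_t)+\log\log(q_sq_t)$. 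Because $q_\alpha$ is a fixed finite constant, it is then enough to show that $n\,\hat\rho_{st,i^*j^*}^2$ exceeds $2\log(q_sq_t)$ by an amount that diverges in probability, and that this happens uniformly over the class.

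First I would control the single estimated correlation $\hat\rho_{st,i^*j^*}=\hat\sigma_{st,i^*j^*}/(\hat\sigma_{ss,i^*i^*}\hat\sigma_{tt,j^*j^*})^{1/2}$. I would (i) use (C1.2) (or (C1.2*)) to show the denominator is ratio-consistent, $\hat\sigma_{ss,i^*i^*}/\sigma_{ss,i^*i^*}\to 1$ and likewise for $t$, uniformly over the class; and (ii) establish a one-sided moderate-deviation bound for the centred numerator of the form $\P\{\sqrt n\,|\hat\sigma_{st,i^*j^*}-\sigma_{st,i^*j^*}|> x_n\,\theta_{st,i^*j^*}^{1/2}\}\to 0$ for any $x_n\to\infty$ with $x_n=o(\sqrt{\log(q_sq_t)})$. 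Combining (i) and (ii), and invoking (C1.3) together with the tail bounds to pin the fluctuation scale $\theta_{st,i^*j^*}/(\sigma_{ss,i^*i^*}\sigma_{tt,j^*j^*})$ at a bounded level, yields $\sqrt n\,|\hat\rho_{st,i^*j^*}|\ge \sqrt n\,|\rho_{st,i^*j^*}|-o_P(\sqrt{\log(q_sq_t)})$ uniformly over $\Upsilon_{st}$.

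Plugging in the signal strength then gives $\sqrt n\,|\hat\rho_{st,i^*j^*}|\ge\{2\sqrt{1+\kappa_1}-o_P(1)\}\sqrt{\log(q_sq_t)}$, hence $n\,\hat\rho_{st,i^*j^*}^2\ge 4(1+\kappa_1)\log(q_sq_t)\{1-o_P(1)\}$; the factor $1+\kappa_1$ is exactly what the class threshold supplies, and the residual slack beyond $2\log(q_sq_t)$ is captured by the elementary inequality $(\sqrt{\kappa_1}-1)^2\ge 0$. Since $4(1+\kappa_1)>2$, this forces $T_{st}^{(1)}\ge\{4(1+\kappa_1)-2\}\log(q_sq_t)\{1-o_P(1)\}\to+\infty$ in probability, and the bound is uniform because every constant entering (i) and (ii) depends only on the tail constants of (C1.2)/(C1.2*) and on $\kappa_1$. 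Consequently $\inf_{\Upsilon_{st}}\P\{T_{st}^{(1)}>q_\alpha\}\to 1$.

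The hard part will be step (ii): the moderate-deviation inequality for the self-normalized sample covariance, uniformly over the class. Under the sub-Gaussian condition (C1.2) this is a Cram\'er-type bound obtainable from the exponential-moment assumption together with $\log(q_0)=o(n^{1/5})$. Under the polynomial-tail condition (C1.2*) it instead requires a truncation argument that matches the available moment order $4\gamma_1+4+\epsilon$ to the growth rate $q_0\le c_1 n^{\gamma_1+1/2}$, so that the truncated part obeys a Bernstein bound while the discarded tail is negligible on the $\sqrt{\log(q_sq_t)/n}$ scale. A secondary technical nuisance is propagating the covariance deviation to the correlation through the random variance denominators; here the comparability $\theta_{st,i^*j^*}\asymp\sigma_{ss,i^*i^*}\sigma_{tt,j^*j^*}$ (with the lower side from (C1.3) ensuring non-degenerate self-normalization and the upper side from the tail conditions) keeps the propagated error at the right order.
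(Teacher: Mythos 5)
Your argument is correct, but it takes a genuinely different route from the paper's. The paper never isolates a single entry: it keeps the maximum over all $(i,j)$ throughout, writes $T_{st,3}\le 2T_{st,2}+2T_{st,4}$ (in the notation of its proof, with $\theta_{1,st,ij}=\sigma_{ss,ii}\sigma_{tt,jj}$), converts the noise maximum via (C1.3) through $T_{st,4}\le \kappa_1 T_{st,5}$, and then controls $T_{st,5}=\max_{ij}n(\hat\sigma_{st,ij}-\sigma_{st,ij})^2/\theta_{st,ij}$ at level $2\log d_{st}$ by the union-bound moderate-deviation result of Lemma~\ref{lm:dist}; the factor $2$ from the squared triangle inequality and the factor $\kappa_1$ from the normalization change are exactly what produce the class constant $4(1+\kappa_1)$. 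You instead lower-bound the maximum by the single maximizing entry $(i^*,j^*)$, which indeed removes the multiplicity penalty: there the deviation $\sqrt n(\hat\sigma_{st,i^*j^*}-\sigma_{st,i^*j^*})/\theta_{st,i^*j^*}^{1/2}$ is $O_P(1)$ by Chebyshev (your Cram\'er-type step (ii) is much more than is needed once no union bound is taken), the denominators are ratio-consistent by the argument of Lemma~\ref{lm:theta_dev}, and the upper bound $\theta_{st,i^*j^*}\le C\sigma_{ss,i^*i^*}\sigma_{tt,j^*j^*}$ comes from the fourth moments implied by (C1.2)/(C1.2*) --- you correctly note that (C1.3) supplies only the lower side and is essentially not needed on your route. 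What your approach buys is a cleaner proof that in fact works for $\Ur_{st}^{(1)}(c)$ with any $c>2$, showing the stated constant $4(1+\kappa_1)$ is conservative (consistent with the paper claiming only rate optimality); what the paper's approach buys is direct reuse of the lemmas already built for the null distribution. Two small points to tidy up: the parenthetical appeal to $(\sqrt{\kappa_1}-1)^2\ge0$ does no work (all that is used is $4(1+\kappa_1)>2$), and you should state explicitly that the $o_P(1)$ terms are uniform over the class because the constants $K,\eta,\epsilon,\gamma_1$ in (C1.2)/(C1.2*) are fixed.
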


To distinguishes the alternative from the null, Test I 
requires only one entry in the
correlation matrix $\Upsilon_{st}$ larger than
$(c\log d_{st}/n)^{1/2}$. The rate is optimal in terms of the
following minimax argument.
Denote by
$\Fr_{st}^{(1)}$ the collection of distributions satisfying (C1.2) or
(C1.2*), and by $\Tr_{st,\alpha}^{(1)}$ the collection of all
$\alpha$-level tests over $\Fr_{st}^{(1)}$, \ie, 
\[\text{For all }
\Phi_{st,\alpha}\in\Tr_{st,\alpha}^{(1)},\quad
\P\{\Phi_{st,\alpha} =
1\}\leq \alpha.
\]
Theorem~\ref{th:lower} shows that, if the maximum absolute correlation is
less than $(c_0\log d_{st}/n)^{1/2}$, for some $c_0$, no test can perfectly
distinguish the alternative from the null. Thus, Theorems~\ref{th:power} and
\ref{th:lower} together indicate that Test I has certain rate optimality
property.

\begin{theorem}\label{th:lower}
  Suppose (C1.2) or (C1.2*) holds. Let $\alpha$ and $\beta$ be any positive
  numbers with $\alpha + \beta <1$. There exists a positive constant $c_0$
  such that for all large $n$ and $q_0$, 
\[\inf_{\Upsilon_{st}\in \Ur_{st}^{(1)}(c_0)} \sup_{T_{st,\alpha}\in
  \Tr_{st,\alpha}^{(1)}} \P(T_{st,\alpha}= 1) \leq 1-\beta.\]
\end{theorem}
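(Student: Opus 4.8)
The plan is to prove this lower bound by Le Cam's two-point method in its mixture (Ingster) form, reducing to a tractable sub-instance and comparing a single null distribution against a least-favorable mixture over the alternative. Since any $T_{st,\alpha}\in\Tr_{st,\alpha}^{(1)}$ must control type I error over the \emph{entire} composite null $\Fr_{st}^{(1)}$ and must have power against \emph{every} member of $\Ur_{st}^{(1)}(c_0)$, it suffices to exhibit one null and one prior on the alternative that are asymptotically indistinguishable. I would take $P_{\bzero}$ to be a product distribution in the assumed tail class (standard Gaussian $\normal(\bzero,\vI_{q_s+q_t})$ under (C1.2)), which lies in $\Fr_{st}^{(1)}$, and for each pair $(i,j)$ let $P_{(i,j)}$ be identical to $P_{\bzero}$ except that $\Cor(X_{k,s,i},X_{k,t,j})=\rho_1 := (c_0\log d_{st}/n)^{1/2}$. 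Each $P_{(i,j)}$ has $n\max\rho_{st,ij}^2 = n\rho_1^2 = c_0\log d_{st}$, hence belongs to $\Ur_{st}^{(1)}(c_0)$; averaging them gives the mixture $\bar P = d_{st}^{-1}\sum_{i,j}P_{(i,j)}$, with $d_{st}=q_sq_t$.

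The heart of the argument is to show the chi-square divergence $\chi^2(\bar P,P_{\bzero}) = \E_{\bzero}[L^2]-1$ vanishes, where $L=\ud\bar P/\ud P_{\bzero}$. Expanding the square,
\[
 \E_{\bzero}[L^2] = d_{st}^{-2}\sum_{(i,j)}\sum_{(i',j')}\E_{\bzero}\!\left[\frac{\ud P_{(i,j)}}{\ud P_{\bzero}}\frac{\ud P_{(i',j')}}{\ud P_{\bzero}}\right].
\]
Writing $g_\rho$ for the bivariate-to-product density ratio, each likelihood ratio factorizes over the $n$ i.i.d.\ scans, so every term is an $n$-th power of a single-scan expectation. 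The diagonal terms $(i,j)=(i',j')$ give the single-scan factor $(1-\rho_1^2)^{-1}$ (a Gaussian integral that collapses because $\det(2\bSigma_{\rho_1}^{-1}-\vI)=1$), hence $(1-\rho_1^2)^{-n}$. The key observation is that \emph{every} off-diagonal term equals $1$: whether the two indices are disjoint or share one coordinate (say $X_{k,s,i}$), conditioning on the shared variable and using $\E_{\bzero}[g_{\rho_1}(X_{k,s,i},X_{k,t,j})\mid X_{k,s,i}]=1$ (the off-diagonal correlation integrates out of the marginal) forces the factor to $1$. Therefore $\E_{\bzero}[L^2] = 1 - d_{st}^{-1} + d_{st}^{-1}(1-\rho_1^2)^{-n}$, i.e.\ $\chi^2(\bar P,P_{\bzero}) = d_{st}^{-1}\{(1-\rho_1^2)^{-n}-1\}$.

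Finally I would substitute $\rho_1$. Since $\log q_0=o(n^{1/5})$ under (C1.2), $\rho_1^2=c_0\log d_{st}/n\to 0$, so $(1-\rho_1^2)^{-n}\le\exp\{n\rho_1^2/(1-\rho_1^2)\}=d_{st}^{c_0(1+o(1))}$; choosing any $c_0<1$ gives $\chi^2(\bar P,P_{\bzero})\le d_{st}^{c_0(1+o(1))-1}\to 0$. By Cauchy--Schwarz this controls total variation, $\mathrm{TV}(\bar P,P_{\bzero})\le\tfrac12\sqrt{\chi^2(\bar P,P_{\bzero})}\to 0$, so for any level-$\alpha$ test the average power satisfies $\bar P(T_{st,\alpha}=1)\le P_{\bzero}(T_{st,\alpha}=1)+\mathrm{TV}\le\alpha+o(1)$. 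As the minimum power over the $P_{(i,j)}$ cannot exceed this average, there is an alternative in $\Ur_{st}^{(1)}(c_0)$ against which $T_{st,\alpha}$ has power at most $\alpha+o(1)\le 1-\beta$ for all large $n,q_0$ (using $\alpha+\beta<1$), which is the assertion. The main obstacle is the divergence computation — specifically verifying that the off-diagonal, shared-index cross terms collapse to $1$, which is exactly what lets the $d_{st}^{-1}$ prefactor dominate the $d_{st}^{c_0}$ growth and drive $\chi^2$ to zero precisely when $c_0<1$; the polynomial-tail case (C1.2*) is handled by the same mixture built from a heavy-tailed base meeting the assumed moment bound.
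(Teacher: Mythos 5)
Your proof is correct, and at the level of technique it is the same tool the paper uses: a Le Cam/Ingster mixture over the alternative combined with a second-moment (chi-square) bound on the likelihood ratio. The organization differs, though. The paper never proves Theorem~\ref{th:lower} directly: it deduces it from Theorem~\ref{th:lower2} via the inclusion $\Vr_{st}^{(1)}(c)\subseteq \Ur_{st}^{(1)}(c)$, and the proof of Theorem~\ref{th:lower2} places spikes of size $\rho=c(\log d_{st}/n)^{1/2}$ on a uniformly random size-$r_{st}$ subset of the \emph{diagonal} positions $(i,i)$, $i\le q^\ast=\min(q_s,q_t)$, then invokes Baraud's criterion $\E_0(L_{\mu_\rho}^2)\le 1+o(1)$. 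You instead mix a single spike uniformly over all $d_{st}=q_sq_t$ positions. Your key computation — that the diagonal term of $\E_0[L^2]$ is $(1-\rho_1^2)^{-n}$ and that every cross term collapses to $1$ because $\E_0[g_{\rho_1}(X_{s,i},X_{t,j})\mid X_{s,i}]=1$ and the coordinates are conditionally independent under the product null — is correct, and the richer prior buys you a better constant: you get any $c_0<1$, whereas the diagonal-only prior with $q^\ast\asymp d_{st}^{1/2}$ support points effectively requires $c_0<1/2$. Two minor remarks. First, what both your argument and the paper's actually establish is $\sup_{T}\inf_{\Upsilon}\P(T=1)\le 1-\beta$; the displayed $\inf_{\Upsilon}\sup_{T}$ ordering is not delivered by a mixture argument (and cannot be, since a test tailored to a single known spike position has power tending to one), so you should state the conclusion in the $\sup\inf$ form — the paper's own proof of Theorem~\ref{th:lower2} has the same feature. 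Second, for (C1.2*) you do not need a heavy-tailed base: the Gaussian construction already satisfies the polynomial moment bound and hence lies in $\Fr_{st}^{(1)}$ under either tail condition, which is exactly how the paper disposes of that case.
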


In Theorem~\ref{th:power} and \ref{th:lower}, the difference between
the null and the alternative is measured by the maximal absolute value
of the entries in $\Upsilon_{st}$. Another commonly used measure is the
Frobenius norm $\lVert \Upsilon_{st}\rVert_F$. Denote by $r_{st}$ the
count of the nonzero entries in $\Upsilon_{st}$, \ie,
\[r_{st} = \sum_{i=1}^{q_s} \sum_{j=1}^{q_t} I(\rho_{st,ij} \neq 0).\] 
Consider the following class of matrices:
\[\Vr_{st}^{(1)}(c) = \left\{\Upsilon_{st}:\ 
  \lVert \Upsilon_{st}\rVert_F^2 \geq c r_{st}\log d_{st}/n
\right\}.\]
We now show that Test I enjoys the rate optimality
property measured by Frobenius norm too.

\begin{corollary}\label{co:power2}
  Suppose that (C1.2) or (C1.2*) holds. Then for a sufficiently large $c$,
  as $n$ and $q_0$ both go to infinity,
\[\inf_{\Upsilon_{st}\in \Vr_{st}^{(1)}(c)} \P\{ T_{st}^{(1)}> q_\alpha\} \rightarrow 1.\]
\end{corollary}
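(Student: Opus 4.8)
The plan is to deduce this directly from Theorem~\ref{th:power} by showing that the Frobenius-norm alternative class is contained in the maximum-entry alternative class, so that no new probabilistic analysis is required.

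The one computation I would carry out is the elementary bound linking the largest squared entry of $\Upsilon_{st}$ to its Frobenius norm. Because $\lVert\Upsilon_{st}\rVert_F^2=\sum_{i,j}\rho_{st,ij}^2$ is a sum of exactly $r_{st}$ nonzero terms, the maximum squared entry dominates the average of those nonzero terms,
\[
\max_{1\le i\le q_s,\,1\le j\le q_t}\rho_{st,ij}^2
\ \ge\ \frac{1}{r_{st}}\sum_{i,j}\rho_{st,ij}^2
\ =\ \frac{\lVert\Upsilon_{st}\rVert_F^2}{r_{st}}.
\]
Consequently, for any $\Upsilon_{st}\in\Vr_{st}^{(1)}(c)$ the defining bound $\lVert\Upsilon_{st}\rVert_F^2\ge c\,r_{st}\log d_{st}/n$ yields
\[
n\cdot\max_{i,j}\rho_{st,ij}^2
\ \ge\ n\cdot\frac{\lVert\Upsilon_{st}\rVert_F^2}{r_{st}}
\ \ge\ c\log d_{st},
\]
so that $\Upsilon_{st}\in\Ur_{st}^{(1)}(c)$. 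This establishes the inclusion $\Vr_{st}^{(1)}(c)\subseteq\Ur_{st}^{(1)}(c)$ for every $c>0$.

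Next I would specialize to $c\ge 4(1+\kappa_1)$, giving $\Vr_{st}^{(1)}(c)\subseteq\Ur_{st}^{(1)}\{4(1+\kappa_1)\}$, and invoke Theorem~\ref{th:power}: since the infimum of $\P\{T_{st}^{(1)}>q_\alpha\}$ over the larger set $\Ur_{st}^{(1)}\{4(1+\kappa_1)\}$ already tends to one, the infimum over the smaller set $\Vr_{st}^{(1)}(c)$ is at least as large and hence also tends to one. I expect no genuine obstacle here; the only point to flag is that Theorem~\ref{th:power} supplies the constant $\kappa_1$ through Condition (C1.3), so one should either maintain (C1.3) as in the power theorem or simply absorb $\kappa_1$ into the ``sufficiently large'' constant $c$ appearing in the statement.
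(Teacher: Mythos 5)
Your proposal is correct and is essentially identical to the paper's own argument: the paper also derives Corollary~\ref{co:power2} from Theorem~\ref{th:power} via the inclusion $\Vr_{st}^{(1)}(c)\subseteq\Ur_{st}^{(1)}(c)$ (which follows from exactly the max-versus-average bound you write down) and the monotonicity of the infimum under set inclusion. Your remark about Condition (C1.3) is a fair observation on a minor omission in the corollary's hypotheses, and your resolution matches the paper's intent.
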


\begin{theorem}\label{th:lower2}
  Suppose that (C1.2) or (C1.2*) holds. Assume that $r_{st} \leq
  q_0^{\gamma_2}$ for some $0<\gamma_2<1/2$. Let $\alpha,\beta$ be any positive
  number with $\alpha + \beta <1$. There exists a positive contant $c_0$
  such that for all large $n$ and $q_0$, 
  \[\inf_{\bSigma_{st}\in \Vr_{st}^{(1)}(c_0)}\sup_{T_{st,\alpha}\in
    \Tr_{st,\alpha}^{(1)}} \P(\Phi_{st,\alpha}= 1) \leq 1-\beta.\]
\end{theorem}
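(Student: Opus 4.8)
The plan is to prove the impossibility by Le Cam's method, reducing the composite testing problem to a Bayesian one against a carefully chosen least-favorable prior supported on $\Vr_{st}^{(1)}(c_0)$ and then bounding the total variation (through the chi-square divergence) between the induced mixture alternative and a single null. This parallels the proof of \thmref{th:lower}, the only change being that the signal is now spread over $r_{st}$ randomly located entries rather than concentrated in a single one, so that the separation is measured in Frobenius norm. The sparsity restriction $r_{st}\le q_0^{\gamma_2}$ with $\gamma_2<1/2$ is precisely what keeps the construction below a legitimate member of the class while rendering detection impossible.

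Concretely, I would work inside the Gaussian sub-family of $\Fr_{st}^{(1)}$: take $(\bX_{k,s}^\trans,\bX_{k,t}^\trans)^\trans\sim\normal(\bzero,\bSigma)$ with $\bSigma=\vI+\vE$, where the within-region blocks are identity and $\vE$ carries only the cross-block $\Upsilon_{st}$ and its transpose; the null is $\bSigma_0=\vI$, i.e. $\Upsilon_{st}=\bzero$. For the prior, draw a uniformly random support $S\subset\{1,\dots,q_s\}\times\{1,\dots,q_t\}$ of size $r_{st}$ together with independent Rademacher signs $\{\varepsilon_{ij}\}$, and set $\rho_{st,ij}=\tau\varepsilon_{ij}$ for $(i,j)\in S$, with $\tau=(c_0\log d_{st}/n)^{1/2}$ and $d_{st}=q_sq_t$. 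By construction $\lVert\Upsilon_{st}\rVert_F^2=r_{st}\tau^2=c_0 r_{st}\log d_{st}/n$, so every realization lies in $\Vr_{st}^{(1)}(c_0)$; since $\tau\to0$ and the support is sparse, $\lVert\vE\rVert_2\to0$, so $\bSigma\succ\bzero$ and each alternative is a valid member of $\Fr_{st}^{(1)}$.

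The crux is the second-moment (chi-square) bound. Writing $\bar\P$ for the mixture and $L_\Upsilon=\ud\P_\Upsilon/\ud\P_0$ for the per-model likelihood ratio on the $n$ i.i.d. scans, I would compute $\E_{\P_0}[(\ud\bar\P/\ud\P_0)^2]=\E_{\Upsilon,\Upsilon'}\E_{\P_0}[L_\Upsilon L_{\Upsilon'}]$ for two independent prior draws $\Upsilon,\Upsilon'$. Using the closed-form Gaussian identity for $\E_{\normal(\bzero,\vI)}[L_\Upsilon L_{\Upsilon'}]$ and expanding to leading order in $\tau$, the single-scan cross term is governed by $\tfrac12\tr(\vE\vE')=\tau^2\sum_{(i,j)\in S\cap S'}\varepsilon_{ij}\varepsilon_{ij}'$; raising to the $n$-th power and averaging over the signs contributes a factor $\{\cosh(c_0\log d_{st})\}^{|S\cap S'|}$ (using $n\tau^2=c_0\log d_{st}$). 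Averaging over the random supports then reduces to controlling $\E[d_{st}^{\,c_0|S\cap S'|}]$, and since $|S\cap S'|$ is hypergeometric one gets $\E[d_{st}^{\,c_0|S\cap S'|}]\le\exp\{O(r_{st}^2 d_{st}^{c_0-1})\}$; as $r_{st}^2\le q_0^{2\gamma_2}\asymp d_{st}^{\gamma_2}$, the exponent is $O(d_{st}^{\gamma_2+c_0-1})$. Because $\gamma_2<1/2$, choosing $c_0$ small enough forces $\gamma_2+c_0<1$, so the chi-square tends to $0$ and $\mathrm{TV}(\bar\P,\P_0)\to0$.

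To conclude, for any level-$\alpha$ test $\Phi\in\Tr_{st,\alpha}^{(1)}$ the average power satisfies $\E_\Upsilon[\P_\Upsilon(\Phi=1)]=\bar\P(\Phi=1)\le\P_0(\Phi=1)+\mathrm{TV}(\bar\P,\P_0)\le\alpha+o(1)$, so some realization $\Upsilon\in\Vr_{st}^{(1)}(c_0)$ obeys $\P_\Upsilon(\Phi=1)\le\alpha+o(1)\le1-\beta$, using $\alpha+\beta<1$ and taking $n,q_0$ large; this is the content of the stated lower bound. I expect the main obstacle to be the chi-square step: cleanly extracting the $\tfrac12\tr(\vE\vE')$ cross term from the Gaussian determinants while rigorously bounding the higher-order remainder (and checking $2\bSigma^{-1}-\bSigma_0^{-1}\succ\bzero$ so the second moment is finite), and then converting the combinatorics of the overlap $|S\cap S'|$ into the sharp sparsity condition $\gamma_2+c_0<1$.
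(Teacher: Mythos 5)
Your overall strategy is the same as the paper's: reduce to the Gaussian subfamily, put a uniform prior on a random sparse support of size $r_{st}$ with signal magnitude of order $(\log d_{st}/n)^{1/2}$, bound the chi-square divergence of the resulting mixture from the null via the second moment of the likelihood ratio, and convert the hypergeometric overlap of two independent supports into the condition that makes $\E_0(L_{\mu_\rho}^2)=1+o(1)$ for small $c_0$. Where you diverge is the choice of least-favorable prior. You place the $r_{st}$ nonzero entries at arbitrary cells of $\Upsilon_{st}$ with Rademacher signs, which is the more standard construction but forces you to extract $\tfrac12\tr(\vE\vE')$ from the exact Gaussian identity $\E_0(L_\Upsilon L_{\Upsilon'})$ by a Taylor expansion in $\tau$ and to control the remainder in a regime where $n\tau^2\asymp\log d_{st}$ is large and where two selected cells can share a row or column (so $\vE\vE'$ is not diagonal). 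The paper instead supports the signal only on ``diagonal'' pairs $(i,i)$ for $i$ in a random subset of $\{1,\ldots,q^\ast\}$ with $q^\ast=\min(q_s,q_t)$ and a fixed positive sign; after reordering the variables the joint covariance becomes block-diagonal with $2\times 2$ blocks, the likelihood ratio factorizes exactly, and $\E_0(L_{\mu_\rho}^2)$ is computed in closed form as ${q^\ast\choose r_{st}}^{-1}\sum_t{r_{st}\choose t}{q^\ast-r_{st}\choose r_{st}-t}(1-\rho^2)^{-tn/2}\leq\exp(r_{st}^2 q^{\ast\, c^2-1})(1+o(1))$, with no expansion or remainder control at all. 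That is also where the hypothesis $\gamma_2<1/2$ is actually used: the paper's ground set for the overlap has size $q^\ast\asymp q_0$, so one needs $r_{st}^2=o(q_0^{1-c^2})$, i.e.\ $2\gamma_2<1$; your ground set has size $d_{st}\asymp q_0^2$, which is why your bookkeeping only appears to need $\gamma_2+c_0<1$.

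The one genuine gap is the step you yourself flag: the chi-square computation for your prior is not carried out, and it is not a routine omission. With arbitrary cell supports the matrices $\bSigma=\vI+\vE$ do not commute across draws, $\det(\bSigma_1^{-1}+\bSigma_2^{-1}-\vI)^{-n/2}$ does not reduce to a function of $\lvert S\cap S'\rvert$ alone, and the third- and higher-order terms in $\tau$ are multiplied by $n$, so they are not automatically negligible; one must either bound the contribution of supports with row/column collisions separately or restrict the prior to supports with at most one nonzero per row and column. Adopting the paper's diagonal construction removes the difficulty entirely, at the cost of nothing, since that construction still realizes $\lVert\Upsilon_{st}\rVert_F^2=r_{st}\rho^2$ and hence lies in $\Vr_{st}^{(1)}(c_0)$. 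The final step of your argument (passing from $\mathrm{TV}(\bar\P,\P_0)\to0$ to the stated minimax bound via the averaged power) matches the paper's appeal to the argument of Section 7.1 of Baraud (2002) and is fine.
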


In Theorem~\ref{th:lower2}, we assume that $r_{st} \leq
q_0^{\gamma_2}$. The assumption is quite reasonable for brain network,
because if the connections of the functional components exist between two
brain regions, they are usually sparse.

\ignore{
Because $\Vr_{st}(c)\subseteq \Ur_{st}(c)$, it is easy to show that
\begin{align*}
  \inf_{\Upsilon_{st}\in \Vr_{st}^{(1)}(c)} \P\{ T_{st}^{(1)}>
  q_\alpha \} & \geq \inf_{\Upsilon_{st}\in
    \Ur_{st}^{(1)}(c)}\P\{\widetilde  T_{st}^{(1)}> q_\alpha)\},\\
  \inf_{\Upsilon_{st}\in \Vr_{st}^{(1)}(c)}\sup_{T_{st,\alpha}\in
    \Tr_{st,\alpha}^{(1)}} \P(T_{st,\alpha}= 1) & \geq
  \inf_{\Upsilon_{st}\in \Ur_{st}^{(1)}(c)}\sup_{T_{st,\alpha}\in
    \Tr_{st,\alpha}^{(1)}} \P(T_{st,\alpha}= 1).
\end{align*}

Therefore, Corollary~\ref{co:power2} follows Theorem~\ref{th:power},
and Theorem~\ref{th:lower} follows Theorem~\ref{th:lower2} directly.
}

\subsection{Asymptotic Properties for Test II}\label{sec:theory_p2}

For Test II, the conditions required for achieving its
asymptotic property are different from what required for Test I.

Recall that $\vareps_{k,s,i}$ and $\vareps_{k,t,j}$ are the error term
of regressing all other components on one component within the region, as
defined in (\ref{eq:ns1}) and (\ref{eq:ns2}), and $\sigma_{\vareps,st,ij} =
\Cov(\vareps_{k,s,i},\vareps_{k,t,j})$. Let $\Upsilon_{\vareps,st} =
(\rho_{\vareps,st,ij})$ be the correlation matrix between $\bvareps_{k,s}
= (\vareps_{k,s,1},\ldots,\vareps_{k,s,q_s})^\trans$ and
$\bvareps_{k,t} = (\vareps_{k,t,1},\ldots,\vareps_{k,t,q_t})^\trans$. Then
\[\rho_{\vareps,st,ij} =
\frac{\sigma_{\vareps,st,ij}}
{(\sigma_{\vareps,ss,ii}\sigma_{\vareps,tt,jj})^{1/2}},\] 
where $\sigma_{\vareps,st,ij} =
\Cov(\vareps_{k,s,i},\vareps_{k,t,j})$, $\sigma_{\vareps,ss,ii} =
\Var(\vareps_{k,s,i})$ and $\sigma_{\vareps,tt,jj} =
\Var(\vareps_{k,t,j})$.

For $\vareps_{k,s,i}$, denote by $r^{(2)}_{v,i}$
the number of other $\vareps_{k,s,j}$ that are non-negligibly
correlated ($>(\log q_0)^{-1-\alpha_0}$) with it,
\[r^{(2)}_{v,i} = \Card\{j:\ \lvert \rho_{\vareps,vv,ij} \rvert
\geq (\log q_0)^{-1-\alpha_0}, \ j\neq i\}.\]
For a positive constant $\rho_0<1$, define the following set that
$\vareps_{k,v,i}$ is highly correlated with 
at least one $\vareps_{k,v,j}$ as 
\[\Dr^{(2)}_v=\{i: \lvert \rho_{\vareps,vv,ij} \rvert >\rho_0
\text{ for some } j\neq i\}.\]

We need the following conditions:

\noindent {\bf (C2.1)} For regions $v=s,t$, there exists a subset $\Mr_v \in
\{1,\ldots,q_v\}$ with $\Card(\Mr_v) = o(q_v)$ and a constant
$\alpha_0>0$ such that all $\gamma>0$, $\max_{1\leq i\leq p, i\in
  \Mr_v} r^{(2)}_{v,i} = o(q_v^\gamma)$. Moreover, assume
there exists a constant $0\leq \rho_0 <1$ such that
$\Card\{\Dr_v\} = o(q_0)$.

Condition (C2.1) parallels with Condition (C1.1). It imposes
conditions on the within region correlation
$\bUps_{\vareps,vv}$. Suppose $\bX_{k,v}$ follow multivariate Gaussian
distribution with $\bOmega_{vv} = (\omega_{vv,ij})$ to be its inverse
covariance matrix. Because
$\rho_{\vareps,vv,ij} =
\omega_{vv,ij}/(\omega_{vv,ii}\omega_{vv,jj})^{1/2}$
\citep{AndersonIntroduction2003}, Condition (C2.1) holds under many
cases when inverse covariance matrix of the components are sparse and
bounded. See \citet{HonorioSparse2009, HuangLearning2010,
  MazumderGraphical2012}. Obviously, the covariance matrix and inverse
covariance matrix are different, and consequently many data only
satisfy one of these two conditions, and then the corresponding
procedure should be applied to the data.

\noindent {\bf (C2.2)} For region $v=s,t$, the variable
$\bX_{k,v}\sim \normal(\bmu_v,\bSigma_{vv})$, with
$\lambda_{\max}(\bSigma_{vv}) \leq c_0$, where $\lambda_{\max}$ is the
maximum eigenvalue operator. Also assume $\log q_0 = o(n^{1/5})$.

In general, the theoretical properties of Test II hold for many
non-Gaussian distributions as well. However, only under the Gaussian
distribution assumption, $\rho_{\vareps,vv,ij}$ has an interpretation
of conditional dependence such that
\[ \rho_{\vareps,vv,ij} = 0 \text{ if and only if } X_{k,v,i}
\independent X_{k,v,j} \mid \{X_{k,v,l},\ l\neq i,j\}. \]
Condition (C2.2)
makes Condition (C2.1) a natrual assumption on the conditional dependency.
Since $\sigma_{vv,ii}\leq \lambda_{\max}(\bSigma_{vv})$ and
$\sigma_{vv,ii}\omega_{vv,ii}\geq 1$, this condition also
implies that $\Var(\vareps_{k,s,i}) = 1/\omega_{vv,ii}\leq c_0$.

\noindent {\bf (C2.3)} Recall the definition of $\tilde{\vareps}_{k,v,l}$ and
$\hat{\vareps}_{k,v,l}$ in (\ref{eq:tdvareps}) and
(\ref{eq:htvareps}). Under the cases (i) $s\neq t$ and (ii)
  $s=t$ and $i=j$, with probability tending to one, 
  \begin{equation}
  \max_{i,j}\left\lvert \frac{1}{n}\sum_{k=1}^n
    \hat{\vareps}_{k,s,i}\hat{\vareps}_{k,t,j} -
    \frac{1}{n}\sum_{k=1}^n
    \tilde{\vareps}_{k,s,i}\tilde{\vareps}_{k,t,j}\right\rvert \leq C(\log
    q_0)^{-1-\alpha_0}.\label{eq:hatvareps}
  \end{equation}

  Note that $\hat{\vareps}_{k,v,i}$ is the centered residual and
  $\tilde{\vareps}_{k,v,i}$ is the centered random error. The term
  $\lvert\frac{1}{n}\sum_{k=1}^n
  \hat{\vareps}_{k,s,i}\hat{\vareps}_{k,t,j} - \frac{1}{n}\sum_{k=1}^n
  \tilde{\vareps}_{k,s,i}\tilde{\vareps}_{k,t,j}\rvert$
  is determined by the difference between $\bbeta_{v,i}$ and its
  estimator $\hat{\bbeta}_{v,i}$. We will specify in
  Section~\ref{sec:estbeta} some estimation methods and corresponding
  sufficient conditions under which Condition (C2.3) will hold.

Theorem~\ref{th:T2null} specifies the null distribution of $T_{st}^{(2)}$.

\begin{theorem}\label{th:T2null}
  Suppose that (C2.1), (C2.2) and (C2.3) hold. Then under $\uH_0$, as
  $n,q_0\rightarrow \infty$, for all $v \in \RR$, $T_{st}^{(2)}$
  weakly converges to the Gumbel distribution $F(x)$ in
  (\ref{eq:null_dist}).
\end{theorem}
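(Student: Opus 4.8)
The plan is to reduce the feasible statistic $T_{st}^{(2)}$ to an infeasible \emph{oracle} statistic built from the true regression errors, and then to recognize the oracle statistic as an instance of the Test~I statistic to which \thmref{th:Tnull} applies. Define the oracle correlations $\tilde\rho_{\vareps,st,ij} = \tilde\sigma_{\vareps,st,ij}/(\tilde\sigma_{\vareps,ss,ii}\tilde\sigma_{\vareps,tt,jj})^{1/2}$ from the centered true errors $\tilde\vareps_{k,v,l}$ of (\ref{eq:tdvareps}), and set
\[
\tilde T_{st}^{(2)} = n\max_{i,j}\tilde\rho_{\vareps,st,ij}^2 - 2\log(q_sq_t) + \log\log(q_sq_t).
\]
The argument then has two halves: (a) show that $\lvert T_{st}^{(2)} - \tilde T_{st}^{(2)}\rvert = o_p(1)$, so that the two statistics share a weak limit; and (b) show that $\tilde T_{st}^{(2)}$ weakly converges to the Gumbel distribution $F$.

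For half (b), observe that under $\uH_{0,st}$ we have $\bUps_{st}=\bzero$, hence $\Cov(X_{k,s,i},X_{k,t,j})=0$ for all $i,j$. Since each $\vareps_{k,s,i}$ is a within-region linear combination of $\bX_{k,s}$ and each $\vareps_{k,t,j}$ of $\bX_{k,t}$, the Gaussianity imposed in (C2.2) upgrades this lack of correlation to $\bvareps_{k,s}\independent\bvareps_{k,t}$. Consequently $\tilde T_{st}^{(2)}$ has exactly the form of $T_{st}^{(1)}$ with $(\bX_{k,s},\bX_{k,t})$ replaced by the residual vectors $(\bvareps_{k,s},\bvareps_{k,t})$ under cross-region independence, so \thmref{th:Tnull} applies verbatim once its hypotheses are checked for the errors. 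Here (C2.1) plays the role of (C1.1), controlling through $r_{v,i}^{(2)}$ and $\Dr^{(2)}_v$ the number of non-negligibly and of highly correlated within-region error pairs, while (C2.2)---multivariate Gaussianity with $\lambda_{\max}(\bSigma_{vv})\le c_0$, together with the implied bound $\Var(\vareps_{k,v,l})=1/\omega_{vv,ll}\le c_0$---supplies the sub-Gaussian tails and the order condition $\log q_0=o(n^{1/5})$ required by (C1.2). This gives $\tilde T_{st}^{(2)}\to F$ weakly.

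For half (a), I would factor the self-normalized increment as $n(\hat\rho_{\vareps,st,ij}^2-\tilde\rho_{\vareps,st,ij}^2)$ and control its two drivers separately. Condition (C2.3) is tailored to exactly these: its case $s\neq t$ bounds the numerator increment $\hat\sigma_{\vareps,st,ij}-\tilde\sigma_{\vareps,st,ij}$, and its case $s=t,\ i=j$ bounds the denominator increments $\hat\sigma_{\vareps,ss,ii}-\tilde\sigma_{\vareps,ss,ii}$ and $\hat\sigma_{\vareps,tt,jj}-\tilde\sigma_{\vareps,tt,jj}$, each uniformly of order $(\log q_0)^{-1-\alpha_0}$ with probability tending to one. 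Combining these with the consistency of the oracle variances $\tilde\sigma_{\vareps,vv,ll}$ (bounded away from $0$ and $\infty$ under (C2.2)) and with the fact that, near the maximizing index, $\sqrt n\,\lvert\tilde\rho_{\vareps,st,ij}\rvert$ is of order $(\log q_0)^{1/2}$, I would argue that the increment in the maximum is $o_p(1)$, so that $\lvert T_{st}^{(2)}-\tilde T_{st}^{(2)}\rvert=o_p(1)$ and Slutsky's theorem transfers the Gumbel limit from $\tilde T_{st}^{(2)}$ to $T_{st}^{(2)}$.

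The hard part will be half (a). The statistic is a maximum over the growing index set $\{(i,j):1\le i\le q_s,\ 1\le j\le q_t\}$, so the plug-in error incurred by replacing $\bbeta_{v,l}$ with $\hat\bbeta_{v,l}$ must be shown negligible \emph{uniformly}, and at the fine $\log\log(q_sq_t)$ scale on which the Gumbel fluctuations live. This is precisely why (C2.3) is calibrated to the same threshold $(\log q_0)^{-1-\alpha_0}$ that defines the negligible within-region correlations in (C2.1): the residual perturbation is absorbed into the sparsity budget that \thmref{th:Tnull} already tolerates, so that the benign correlation structure of $\bvareps$ is not disturbed by estimation of $\bbeta$. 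Verifying (C2.3) itself for Lasso or Dantzig estimators---that is, propagating the bound on $\lvert\hat\bbeta_{v,l}-\bbeta_{v,l}\rvert$ into the uniform inequality (\ref{eq:hatvareps})---is the substantive estimation-theoretic step, and is deferred to Section~\ref{sec:estbeta}.
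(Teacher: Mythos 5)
Your proposal matches the paper's own proof in both structure and substance: the paper likewise reduces $T_{st}^{(2)}$ to oracle statistics built from the true errors (bounding the plug-in discrepancy via Condition (C2.3) and the consistency of the variance estimates, exactly as in your half (a)) and then transfers the Gumbel limit by rerunning the argument of \thmref{th:Tnull} on the error vectors, with (C2.1) and (C2.2) standing in for (C1.1) and (C1.2) as in your half (b). The only cosmetic difference is that the paper inserts one further intermediate statistic based on the uncentered errors before invoking the Theorem~\ref{th:Tnull} machinery.
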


The derivation of the limiting null distribution of $T_{st}^{(2)}$
calls for Condition (C2.1); when it is not satisfied, we can still
control type I error based on the following proposition. 

\begin{proposition}\label{pr:T2typeI}
  Under (C2.2) and (C2.3) and the null $\uH_{0,st}$, 
  \[\P\{ T_{st}^{(2)} \geq q_\alpha\} \leq
  \log\left\{1/(1-\alpha)\right\},\]
where $q_\alpha = -log(\pi) - 2\log \log\{1/(1-\alpha)\}$ is
the $(1-\alpha)$-th quantile of $F(x)$ defined in (\ref{eq:null_dist}).
\end{proposition}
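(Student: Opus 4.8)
The plan is to bound the tail of $T_{st}^{(2)}$ directly by a union (Bonferroni) argument, exactly paralleling the proof of Proposition~\ref{pr:C2typeI} for Test I. Dropping (C2.1) means we no longer attempt to identify the limiting law of the maximum, which would require controlling the dependence among the $d_{st}=q_sq_t$ pairwise statistics (as in Theorem~\ref{th:T2null}); a one-sided upper bound needs no such control, since Bonferroni holds regardless of the joint distribution. Writing $L=2\log d_{st}-\log\log d_{st}+q_\alpha$, the event $\{T_{st}^{(2)}\ge q_\alpha\}$ is exactly $\{\,n\max_{i,j}\hat\rho_{\vareps,st,ij}^2\ge L\,\}$, so
\[
\P\{T_{st}^{(2)}\ge q_\alpha\}\ \le\ \sum_{i=1}^{q_s}\sum_{j=1}^{q_t}\P\bigl\{\sqrt n\,|\hat\rho_{\vareps,st,ij}|\ge \sqrt L\,\bigr\}.
\]

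The first step is to pass from the plug-in statistic to its oracle counterpart. Using Condition (C2.3) for the numerators and, for the denominators, the variance control noted after (C2.2) (namely $\sigma_{\vareps,vv,ii}=1/\omega_{vv,ii}\le c_0$), I would show that $\max_{i,j}|\hat\rho_{\vareps,st,ij}-\tilde\rho_{\vareps,st,ij}|$ is negligible relative to $(\log d_{st}/n)^{1/2}$ with probability tending to one, so that each plug-in tail is dominated by the oracle tail $\P\{\sqrt n\,|\tilde\rho_{\vareps,st,ij}|\ge \sqrt L(1-o(1))\}$. This reduces the problem to the oracle errors $\tilde\vareps_{k,s,i},\tilde\vareps_{k,t,j}$, which under (C2.2) are centered Gaussians and, under $\uH_{0,st}$ (where region $s$ and region $t$ are independent), are independent across the two regions; hence $\rho_{\vareps,st,ij}=0$ and $\tilde\rho_{\vareps,st,ij}$ is the sample correlation of two independent Gaussian samples.

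The second step is the marginal tail bound. Since $\sqrt L\asymp(2\log d_{st})^{1/2}=o(n^{1/10})$ under $\log q_0=o(n^{1/5})$, we are well inside the moderate-deviation range, and a Cramér-type expansion gives, uniformly in $(i,j)$,
\[
\P\bigl\{\sqrt n\,|\tilde\rho_{\vareps,st,ij}|\ge \sqrt L\,\bigr\}\ =\ 2\bar\Phi(\sqrt L)\,\{1+o(1)\}.
\]
Inserting $\bar\Phi(x)\sim(2\pi)^{-1/2}x^{-1}e^{-x^2/2}$ with $\sqrt L\sim(2\log d_{st})^{1/2}$, together with $e^{-L/2}=(\log d_{st})^{1/2}d_{st}^{-1}e^{-q_\alpha/2}$ and $e^{-q_\alpha/2}=\pi^{1/2}\log\{1/(1-\alpha)\}$ from the definition of $q_\alpha$ in (\ref{eq:qalpha}), the $d_{st}$ identical terms combine and every factor of $(\log d_{st})^{1/2}$, $d_{st}$ and $\pi^{1/2}$ cancels, leaving
\[
\P\{T_{st}^{(2)}\ge q_\alpha\}\ \le\ \log\{1/(1-\alpha)\}\,\{1+o(1)\},
\]
which is the asserted bound.

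The main obstacle is the oracle reduction in the first step: replacing $\bbeta_{v,l}$ by $\hat\bbeta_{v,l}$ contaminates all $d_{st}$ residual correlations simultaneously, and because the threshold $\sqrt L$ grows like $(\log d_{st})^{1/2}$, the uniform discrepancy controlled by (C2.3) must be shown to remain negligible after multiplication by this growing factor (equivalently, the exponential correction $\exp\{\sqrt L\cdot\sqrt n\,\max_{i,j}|\hat\rho-\tilde\rho|\}$ must tend to $1$). This is precisely where the $(\log q_0)^{-1-\alpha_0}$ rate in (C2.3), the Gaussian self-standardization from (C2.2), and the growth restriction $\log q_0=o(n^{1/5})$ are jointly used, and it is the only place where the argument departs from the Test I proof of Proposition~\ref{pr:C2typeI}, where the raw sample correlations $\hat\rho_{st,ij}$ require no debiasing.
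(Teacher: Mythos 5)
Your proposal matches the paper's own route: the printed proof of Proposition~\ref{pr:C2typeI} is exactly this union bound over the $d_{st}$ entries combined with the marginal tail asymptotics $\P\{n\hat\rho_{st,ij}^2\geq c(d_{st},\alpha)\}=(1+o(1))d_{st}^{-1}\log\{1/(1-\alpha)\}$ inherited from the proof of Theorem~\ref{th:Tnull}, and the Test II version is obtained, as you describe, by first replacing $\hat{\vareps}_{k,v,l}$ with $\tilde{\vareps}_{k,v,l}$ via Condition (C2.3) exactly as in the printed proof of Theorem~\ref{th:T2null}. Your identification of that oracle reduction as the only step beyond the Test I argument, and your cancellation of the $\pi^{1/2}$, $(\log d_{st})^{1/2}$ and $d_{st}$ factors through the definition of $q_\alpha$, both agree with what the paper does.
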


The power analysis of Test II parallels to that of Procedure
I. Let $r_{\vareps,st} = \sum_{i=1}^{q_s}\sum_{j=1}^{q_t}
I(\rho_{\vareps,st,ij}
\neq 0)$.
Define the
following two classes of matrices:
\begin{align*}
  \Ur_{st}^{(2)}(c) & = \left\{ \Upsilon_{\vareps,st}:\ \max_{1\leq i
      \leq q_s, 1\leq j\leq q_t} \rho^2_{\vareps,st,ij} \geq c\log
    d_{st}/n
  \right\};\\
  \Vr_{st}^{(2)}(c) & = \left\{\Upsilon_{\vareps,st}:\ \lVert
    \Upsilon_{\vareps,st} \rVert_F^2 \geq c
    r_{\vareps,st} \log d_{st}/n\right\}.
\end{align*}
We have the following theorem.
\begin{theorem}\label{th:p2power}
  Suppose that (C2.2), and (C2.4) hold. Then
 \[\lim_{n,q_0\rightarrow \infty}\inf_{\vR_{st}\in \Ur_{st}^{(2)}(c_1)} \P\left\{ T_{st}^{(2)} \geq
   q_\alpha\right\} = 1,\quad \text{and
 }\lim_{n,p\rightarrow \infty}\inf_{\vR_{st}\in \Vr_{st}^{(2)}(c_2)} \P\left\{ T_{st}^{(2)} \geq
   q_\alpha\right\} = 1,\]
for some $c_2\geq c_1$.
\end{theorem}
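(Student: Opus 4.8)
The plan is to prove the power statement by reducing the feasible statistic $T_{st}^{(2)}$, built from the estimated residuals $\hat{\vareps}_{k,v,l}$, to an idealized statistic built from the true centered errors $\tilde{\vareps}_{k,v,l}$, and then to invoke the Test~I power results already established in Theorem~\ref{th:power} and Corollary~\ref{co:power2}. Concretely, set
\[
\wt T_{st}^{(2)} = n\max_{i,j}\wt\rho_{\vareps,st,ij}^2 - 2\log(q_sq_t) + \log\log(q_sq_t),
\]
the analogue of $T_{st}^{(2)}$ with $\hat{\vareps}$ replaced by $\tilde{\vareps}$. Under (C2.2) the concatenated vector $(\bX_{k,s}^\trans,\bX_{k,t}^\trans)^\trans$ is Gaussian, so each error $\vareps_{k,v,l}=X_{k,v,l}-\alpha_{v,l}-\bX_{k,v,-l}^\trans\bbeta_{v,l}$ is a linear combination of Gaussians, hence jointly Gaussian across $(v,l)$; moreover $\Var(\vareps_{k,v,l})=1/\omega_{vv,ll}$ is bounded above by $c_0$ and bounded away from zero. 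Thus $\wt T_{st}^{(2)}$ is \emph{exactly} a Test~I statistic formed from the $q_s+q_t$ Gaussian variables $\{\vareps_{k,s,i}\}$ and $\{\vareps_{k,t,j}\}$, whose pairwise correlations are the entries of $\Upsilon_{\vareps,st}$.

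First I would transfer the power guarantees to $\wt T_{st}^{(2)}$. Gaussian errors satisfy the sub-Gaussian condition (C1.2) and the variance condition (C1.3) with $\kappa_1\le2$, so Theorem~\ref{th:power} applied to these errors gives, for a sufficiently large constant $c_1$ (e.g.\ $c_1=4(1+\kappa_1)$),
\[
\inf_{\Upsilon_{\vareps,st}\in\Ur_{st}^{(2)}(c_1)}\P\{\wt T_{st}^{(2)}>q_\alpha\}\to1,
\]
and Corollary~\ref{co:power2} likewise yields the corresponding statement over $\Vr_{st}^{(2)}(c_2)$ for a sufficiently large $c_2\ge c_1$. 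The mechanism is that for any $\Upsilon_{\vareps,st}\in\Ur_{st}^{(2)}(c_1)$ there is a pair $(i^*,j^*)$ with $\rho_{\vareps,st,i^*j^*}^2\ge c_1\log d_{st}/n$; the sample correlation $\wt\rho_{\vareps,st,i^*j^*}$ concentrates around $\rho_{\vareps,st,i^*j^*}$, so $n\wt\rho_{\vareps,st,i^*j^*}^2\ge(c_1-o(1))\log d_{st}$ with probability tending to one, overshooting the rejection threshold $q_\alpha+2\log d_{st}-\log\log d_{st}\sim 2\log d_{st}$ once $c_1>2$. The Frobenius case follows from the same overshoot applied to the aggregated signal exactly as in Corollary~\ref{co:power2}.

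The remaining and most delicate step is to pass from $\wt T_{st}^{(2)}$ to the feasible $T_{st}^{(2)}$, i.e.\ to show that estimating $\bbeta_{v,l}$ does not destroy the overshoot. Using the residual-estimation condition (C2.4) (in the form of (C2.3)),
\[
\max_{i,j}\bigl|\hat\sigma_{\vareps,st,ij}-\tilde\sigma_{\vareps,st,ij}\bigr|\le C(\log q_0)^{-1-\alpha_0},
\]
together with the $s=t,\,i=j$ case to control the variance estimates, and using that the true error variances are bounded away from zero, I would deduce $\max_{i,j}|\hat\rho_{\vareps,st,ij}-\tilde\rho_{\vareps,st,ij}|\le C'(\log q_0)^{-1-\alpha_0}$ with probability tending to one. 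Writing $\hat\rho=\tilde\rho+\Delta$ gives $n\hat\rho^2\ge n\tilde\rho^2-2n|\tilde\rho||\Delta|$, so at the signal entry the overshoot $(c_1-2-o(1))\log d_{st}$ is diminished by at most the cross term $2\sqrt{n}|\tilde\rho_{i^*j^*}|\cdot\sqrt{n}|\Delta|\asymp\sqrt{\log d_{st}}\,\sqrt{n}(\log q_0)^{-1-\alpha_0}$. The hard part is precisely bounding this cross term: one must verify that the residual-estimation error is small enough relative to the weak signal scale $\sqrt{\log d_{st}/n}$ that the perturbation is of smaller order than the signal gap $\asymp\log d_{st}$, which is where the rate in (C2.4), in conjunction with $\log q_0=o(n^{1/5})$ from (C2.2), is essential. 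Once the cross term is shown negligible, $\P\{T_{st}^{(2)}>q_\alpha\}\to1$ follows on both classes $\Ur_{st}^{(2)}(c_1)$ and $\Vr_{st}^{(2)}(c_2)$, completing the proof.
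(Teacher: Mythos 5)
Your route is the one the paper intends: the paper gives no standalone proof of this theorem, saying only that it ``follows similar arguments of Theorem~\ref{th:power},'' and your two-stage plan --- (i) apply the Theorem~\ref{th:power}/Corollary~\ref{co:power2} power analysis to the oracle statistic $\widetilde{T}_{st}^{(2)}$ built from the true centered errors, which is legitimate because under (C2.2) the errors $\vareps_{k,v,l}$ are jointly Gaussian with bounded variances and hence satisfy (C1.2) and (C1.3), and because $\Vr_{st}^{(2)}(c)\subseteq\Ur_{st}^{(2)}(c)$ reduces the Frobenius claim to the max claim; then (ii) transfer from $\widetilde{T}_{st}^{(2)}$ to the feasible $T_{st}^{(2)}$ via the residual-approximation condition --- is exactly the combination of the paper's proofs of Theorem~\ref{th:power} and Theorem~\ref{th:T2null}. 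Stage (i) is sound.

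The gap is in stage (ii), and you have located it yourself without closing it. You bound the perturbation at the signal entry by the cross term $2n\lvert\tilde\rho_{\vareps,st,i^*j^*}\rvert\lvert\Delta\rvert\asymp\sqrt{n\log d_{st}}\,(\log q_0)^{-1-\alpha_0}$ and then write ``once the cross term is shown negligible.'' It cannot be shown negligible from the rate you quote. Negligibility relative to the overshoot $(c_1-2)\log d_{st}$ is equivalent to $\Delta=o\{(\log d_{st}/n)^{1/2}\}$, i.e.\ the estimation error must be of smaller order than the weakest detectable signal; with $\Delta\leq C(\log q_0)^{-1-\alpha_0}$ this forces $n=o\{(\log q_0)^{3+2\alpha_0}\}$, which is the opposite of the high-dimensional regime $\log q_0=o(n^{1/5})$ in (C2.2) (in the standard setting $\log q_0\asymp\log n$ the cross term is polynomially large in $n$). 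So at the boundary of $\Ur_{st}^{(2)}(c_1)$ the perturbation can swamp the signal, and the decisive inequality is asserted rather than derived. To complete the argument you must invoke the sharper transfer rate that the Dantzig/Lasso analysis actually produces --- in the proof of Proposition~\ref{pr:ds} the relevant terms are $O_P\{(a_{s,1}+a_{t,1}+a_{s,1}a_{t,1})(\log d_{st}/n)^{1/2}\}$ plus $a_{v,2}$-terms, not the raw $(\log q_0)^{-1-\alpha_0}$ bound of (C2.3) --- and verify that the same rate controls the denominators $\hat\sigma_{\vareps,ss,ii}$ and $\hat\sigma_{\vareps,tt,jj}$ under $\uH_{1,st}$. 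Without that, the proposal proves power only for the oracle statistic, not for $T_{st}^{(2)}$.
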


Similar as Test I, Test II enjoys certain rate optimality in
its power. Denote by $\Fr_{st}^{(2)}$ the collection of
distributions satisfying (C2.2), and by
$\Tr^{(2)}_{st,\alpha}$ the collection of all $\alpha$-level test over
$\Fr_{st}^{(2)}$.
\begin{theorem}\label{th:p2lower}
  Suppose (C2.2) holds. Let $\alpha,\beta$ be any positive
  number with $\alpha+\beta<1$, There exists a positive constant $c_3$
  such that for all large $n$ and $q_0$,
  \begin{align*}
   \inf_{\vR_{\vareps,st}\in \Ur_{st}^{(2)}(c_3)}
   \sup_{\Phi_{st,\alpha}\in \Tr_{st,\alpha}^{(2)}} \P(\Phi_{st,\alpha} &  =
   1) \leq 1-\beta;\\
   \inf_{\vR_{\vareps,st}\in \Vr_{st}^{(2)}(c_3)}
   \sup_{\Phi_{st,\alpha}\in \Tr_{st,\alpha}^{(2)}} \P(\Phi_{st,\alpha} &  =
   1) \leq 1-\beta.
  \end{align*}
\end{theorem}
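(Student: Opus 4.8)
The plan is to prove both inequalities by Le Cam's two-point (mixture) method, exhibiting a least-favorable prior supported on the alternative class whose induced mixture is asymptotically indistinguishable from a fixed null. Throughout write $d_{st}=q_sq_t$. The central simplification is to restrict attention to the Gaussian subfamily in which the within-region covariances are the identity: if $\bX_{k,s}\sim\normal(\bzero,\vI_{q_s})$ and $\bX_{k,t}\sim\normal(\bzero,\vI_{q_t})$ with only a cross-correlation between the two regions, then the within-region regressions in (\ref{eq:ns1})--(\ref{eq:ns2}) have vanishing coefficients, so $\vareps_{k,v,l}$ coincides with the centered coordinate and $\rho_{\vareps,st,ij}=\rho_{st,ij}$. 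Any such law lies in $\Fr_{st}^{(2)}$ because $\lambda_{\max}(\bSigma_{vv})=1\le c_0$ and $\log q_0=o(n^{1/5})$ is inherited; this reduction is what makes both the divergence computation below and the identity $\rho_{\vareps,st,ij}=\rho_{st,ij}$ transparent.

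Concretely, let $\P_0$ be the law of $n$ i.i.d.\ standard normal vectors in $\RR^{q_s+q_t}$ (a null distribution in $\Fr_{st}^{(2)}$), and for each position $(i,j)$ let $\P_{ij}$ be obtained from $\P_0$ by setting $\Cor(X_{k,s,i},X_{k,t,j})=\delta$ with $\delta^2=c_3\log d_{st}/n$ and leaving every other entry unchanged. Since $|\delta|<1$ the covariance is positive definite, its unique nonzero residual cross-correlation is $\rho_{\vareps,st,ij}=\delta$, and it therefore lies simultaneously in $\Ur_{st}^{(2)}(c_3)$ (with $\max\rho_{\vareps}^2=\delta^2$) and in $\Vr_{st}^{(2)}(c_3)$ (with $r_{\vareps,st}=1$ and $\lVert\Upsilon_{\vareps,st}\rVert_F^2=\delta^2$). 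Thus a single construction serves both parts. I take the uniform mixture $\bar\P=(q_sq_t)^{-1}\sum_{i,j}\P_{ij}$ as the prior.

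The key step is to control $\chi^2(\bar\P,\P_0)=(q_sq_t)^{-2}\sum_{(ij),(i'j')}\{\E_{\P_0}[L_{ij}L_{i'j'}]-1\}$, where $L_{ij}=d\P_{ij}/d\P_0$ depends only on the coordinates $(s,i),(t,j)$. Writing $\ell$ for the per-observation likelihood ratio of a bivariate $\normal$ with correlation $\delta$ against the standard bivariate normal, a direct Gaussian integral gives $\E_{\P_0}[L_{ij}^2]=(1-\delta^2)^{-n}$ on the diagonal. Off the diagonal two positions are either disjoint, in which case $L_{ij}$ and $L_{i'j'}$ are independent under $\P_0$ so $\E_{\P_0}[L_{ij}L_{i'j'}]=1$, or they share one index; in the shared case the three-variable integral $\E[\ell(u,v)\ell(u,w)]$ with $u,v,w$ i.i.d.\ $\normal(0,1)$ evaluates to exactly $1$, so the cross term again contributes zero. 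Hence only the diagonal survives and
\[
\chi^2(\bar\P,\P_0)=\frac{(1-\delta^2)^{-n}-1}{q_sq_t}.
\]
With $\delta^2=c_3\log(q_sq_t)/n$ one has $(1-\delta^2)^{-n}\to(q_sq_t)^{c_3}$, so $\chi^2\asymp(q_sq_t)^{c_3-1}\to0$ for any fixed $c_3\in(0,1)$. I expect the shared-index evaluation to be the main obstacle: it is precisely what keeps $\chi^2$ at the single-entry scale rather than letting overlapping perturbations inflate it.

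Finally I convert this into the risk bound. For any level-$\alpha$ test $\Phi\in\Tr_{st,\alpha}^{(2)}$ one has $\P_0(\Phi=1)\le\alpha$, and since $\bar\P$ averages over alternatives in $\Ur_{st}^{(2)}(c_3)$,
\[
\inf_{\vR_{\vareps,st}\in\Ur_{st}^{(2)}(c_3)}\P(\Phi=1)\le\E_{\bar\P}[\Phi]\le\P_0(\Phi=1)+\lVert\bar\P-\P_0\rVert_{\mathrm{TV}}\le\alpha+\tfrac12\sqrt{\chi^2(\bar\P,\P_0)}.
\]
Because $\chi^2\to0$ and $\alpha+\beta<1$, the right-hand side is at most $1-\beta$ for all large $n$ and $q_0$, uniformly in $\Phi$, which yields the first asserted inequality. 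The identical argument with the same prior, now read inside $\Vr_{st}^{(2)}(c_3)$, gives the second. Since the per-entry divergence argument mirrors that underlying Theorems~\ref{th:lower} and \ref{th:lower2}, the only genuinely new ingredient is the reduction to identity within-region covariance, which aligns the residual-correlation problem of Test~II with the raw-correlation problem already treated for Test~I.
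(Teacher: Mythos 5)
Your overall strategy is the one the paper itself uses: the paper proves Theorem~\ref{th:p2lower} by pointing to the proof of Theorem~\ref{th:lower2}, which is exactly a Le Cam mixture argument over Gaussian alternatives with identity within-region blocks and a sparse cross-block, bounding the second moment $\E_0(L^2_{\mu_\rho})$ of the mixture likelihood ratio. Your key reduction --- taking $\bSigma_{ss}=\vI$, $\bSigma_{tt}=\vI$ so that the within-region regressions in (\ref{eq:ns1})--(\ref{eq:ns2}) have zero coefficients, $\vareps_{k,v,l}$ equals the centered coordinate, and $\rho_{\vareps,st,ij}=\rho_{st,ij}$ --- is precisely what makes the Test~I lower-bound construction transfer to Test~II; the paper leaves this implicit, but its $\bSigma_{\hat m}^\ast$ has identity diagonal blocks for the same reason. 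Your cross-term computation is correct and in fact cleaner than the paper's combinatorial sum: conditioning on the shared coordinate and using that the conditional likelihood ratio integrates to one gives $\E_0[L_{ij}L_{i'j'}]=1$ for non-identical positions, so only the diagonal survives and $\chi^2(\bar\P,\P_0)=\{(1-\delta^2)^{-n}-1\}/d_{st}\rightarrow 0$ for $c_3<1$, which settles the first inequality.

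The one substantive gap is in the second inequality. Your single-entry prior produces a matrix with $r_{\vareps,st}=1$, which does lie in $\Vr_{st}^{(2)}(c_3)$ as that class is literally written, but the Frobenius-norm lower bound is only meaningful --- and is only what Theorem~\ref{th:lower2} actually establishes for Test~I --- if it holds for each prescribed sparsity level $r$ (note the hypothesis $r_{st}\leq q_0^{\gamma_2}$ there, which is vacuous when $r=1$). To match the intended claim you need the mixture over $r$-element supports $\hat m$, and there the cross terms no longer vanish: two supports overlapping in $t$ perturbed pairs contribute factors of the form $(1-\rho^2)^{-tn/2}$, and controlling the resulting hypergeometric sum is exactly where the smallness of $c_3$ (and the bound on $r$) genuinely enters in the paper's proof. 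Your conditioning trick handles overlap in a single coordinate but not coincidence of entire perturbed pairs across supports, so the second display requires the fuller second-moment computation from the proof of Theorem~\ref{th:lower2}. Finally, as in the paper, what your last step actually bounds is $\sup_{\Phi}\inf_{M}\P_M(\Phi=1)$ rather than the printed $\inf\sup$; you have adopted the same intended reading of the quantifiers as the paper, so this is not a defect relative to it.
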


\subsection{Asymptotic Properties for Multiple Testing
  Procedure}

The properties of the the multiple testing procedure (\ref{eq:fwer})
are based on the limiting null distribution of each test
statistic. Based on Theorems~\ref{th:Tnull} and \ref{th:T2null}, we have the following results.

\begin{theorem}\label{th:fwer_typeI}
  Consider the multiple testing procedure (\ref{eq:fwer}). If (C1.1)
  and (C1.2) (or (C1.2*)) hold, the procedure
  (\ref{eq:fwer}) with $T_{st}^{(1)}$ controls the family-wise error
  rate at level $\alpha$. If (C2.1) and (C2.2) hold, the procedure
  with $T_{st}^{(2)}$ controls the family-wise error rate at level
  $\alpha$.
\end{theorem}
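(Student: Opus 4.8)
The plan is to reduce the simultaneous testing problem to the single-hypothesis tail bounds already encoded in the Gumbel limit, via a Bonferroni union bound; the inflated threshold $2\log m + q_\alpha$ (with $m = p(p-1)/2$) is engineered precisely so that each per-hypothesis error collapses to order $1/m$. Let $\Sr_0 = \{(s,t):\ 1\leq s<t\leq p,\ \uH_{0,st}\ \text{holds}\}$ be the set of true nulls, so $\Card(\Sr_0)\leq m$. A family-wise error occurs exactly when some true null is rejected, so for $b\in\{1,2\}$,
\[
  \fwer = \P\!\left(\bigcup_{(s,t)\in\Sr_0}\bigl\{T_{st}^{(b)} > 2\log m + q_\alpha\bigr\}\right)
  \leq \sum_{(s,t)\in\Sr_0}\P\bigl\{T_{st}^{(b)} > 2\log m + q_\alpha\bigr\}.
\]
It therefore suffices to bound each summand by $m^{-1}\log\{1/(1-\alpha)\}$.

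For each fixed $(s,t)\in\Sr_0$, Theorem~\ref{th:Tnull} (for $b=1$, under (C1.1) and (C1.2)/(C1.2*)) and Theorem~\ref{th:T2null} (for $b=2$, under (C2.1), (C2.2) and (C2.3)) give $\P\{T_{st}^{(b)}>x\}\to 1-F(x)$. Since $q_\alpha$ is the $(1-\alpha)$-quantile of $F$, the defining identity $\pi^{1/2}\exp(-q_\alpha/2) = \log\{1/(1-\alpha)\}$ yields, at $x = 2\log m + q_\alpha$,
\[
  F(2\log m + q_\alpha) = \exp\!\bigl\{-\pi^{1/2} m^{-1}\exp(-q_\alpha/2)\bigr\} = \exp\!\bigl\{-m^{-1}\log\{1/(1-\alpha)\}\bigr\} = (1-\alpha)^{1/m}.
\]
Using $1-e^{-y}\leq y$ with $y = m^{-1}\log\{1/(1-\alpha)\}\geq 0$ gives $1-F(2\log m + q_\alpha)\leq m^{-1}\log\{1/(1-\alpha)\}$. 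Summing over the at most $m$ indices in $\Sr_0$ produces $\fwer\leq\log\{1/(1-\alpha)\}$, which is $\approx\alpha$ for small $\alpha$; exact level $\alpha$ is recovered by running the procedure at $\alpha' = 1-\exp(-\alpha)$, exactly as in the remark following Proposition~\ref{pr:C2typeI}.

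The delicate point is the interchange of the limit and the union. Theorems~\ref{th:Tnull} and \ref{th:T2null} deliver only pointwise Gumbel convergence for each fixed pair, whereas the chain above needs the per-hypothesis tail to be at most $m^{-1}\log\{1/(1-\alpha)\}$ uniformly over $\Sr_0$. When $p$ is fixed the union is finite, so one may pass to the limit termwise and the argument closes directly. To accommodate $p$ growing with $n$ and $q_0$, I would instead apply the single-test bounds of Propositions~\ref{pr:C2typeI} and \ref{pr:T2typeI} at the shifted level $\alpha^\ast = 1-(1-\alpha)^{1/m}$: the computation above shows $q_{\alpha^\ast} = 2\log m + q_\alpha$, so these propositions give $\P\{T_{st}^{(b)}\geq 2\log m + q_\alpha\}\leq\log\{1/(1-\alpha^\ast)\} = m^{-1}\log\{1/(1-\alpha)\}$ for every $(s,t)\in\Sr_0$ at once, and the union bound then yields $\fwer\leq\log\{1/(1-\alpha)\}$ with no appeal to uniform convergence. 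The only bookkeeping is to check that the conditions required by Propositions~\ref{pr:C2typeI} and \ref{pr:T2typeI} are subsumed by those of Theorems~\ref{th:Tnull} and \ref{th:T2null} on which the procedure relies, which is immediate.
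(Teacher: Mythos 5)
The paper gives no written proof of this theorem---it only asserts that the result ``is based on'' Theorems~\ref{th:Tnull} and \ref{th:T2null}---and your Bonferroni argument at the inflated threshold is precisely the intended route, carried out correctly. In particular, your observation that the pointwise Gumbel limits do not by themselves license a union bound over $m=p(p-1)/2$ growing hypotheses, and your repair via the identity $q_{\alpha^\ast}=2\log m+q_\alpha$ with $\alpha^\ast=1-(1-\alpha)^{1/m}$ combined with the non-asymptotic single-test bounds of Propositions~\ref{pr:C2typeI} and \ref{pr:T2typeI}, is exactly the right way to make the argument close; the paper never addresses this point. Two caveats are worth recording. First, as you note, the union bound yields $\fwer\le\log\{1/(1-\alpha)\}=\alpha+O(\alpha^2)$ rather than $\alpha$; this is the same slippage the paper concedes in Propositions~\ref{pr:C2typeI} and \ref{pr:T2typeI}, so the theorem as literally stated is only true after the recalibration $\alpha\mapsto 1-e^{-\alpha}$ (or in the limit of small $\alpha$), and your proof proves exactly what can be proved. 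Second, the ``immediate bookkeeping'' at the end is not quite immediate: invoking Proposition~\ref{pr:C2typeI} at the level $\alpha^\ast$ depending on $m$ pushes the per-entry threshold up to roughly $2\log d_{st}+2\log m$, which must stay inside the moderate-deviation range $0\le x\le(8\log q_0)^{1/2}$ of Lemma~\ref{lm:dist}; this silently requires $\log p=O(\log q_0)$, a condition absent from the theorem's hypotheses but needed by any version of this proof, including the one the paper presumably has in mind.
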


\section{Estimation of $\widehat{\bbeta}_{v,i}$}\label{sec:estbeta}

Test II depends on the estimators of regression model. Estimating
regression coefficients has been investigated extensively in the past
several decades; methods include the Dantzig selector
\citep{CandesTao07}, the Lasso \citep{TibshiraniRegression1996}, the
SCAD \citep{FanVariable2001}, the adaptive Lasso
\citep{ZouAdaptive2006}, the Scaled-Lasso \citep{SunScaled2012}, the
Square-root Lasso \citep{BelloniSquare2011}, \etc. In this paper, we
focus on the Dantzig selector and Lasso, and discuss when they
will yield good estimators than can be used for our testing
procedures. In particular, we will discuss the necessary conditions
for (C2.3) to hold.

Before we discuss the estimating methods, we introduce the following
notations. For region $v$ and component $i$, let
$\vb_{v,i} =
\frac{1}{n}\sum_{k=1}^n(\bX_{k,v,-i}-\bar{\bX}_{v,-i})^\trans
(X_{k,v,i}-\bar{X}_{v,i})$
be the sample covariance between this components and other components
in the region. Denote by
$\hat{\bSigma}_{vv,-i,-i} = \frac{1}{n}\sum_{k=1}^n
(\bX_{k,v,-i}-\bar{\bX}_{v,-i})(\bX_{k,v,-j}-\bar{\bX}_{v,-j})^\trans$
the sample covariance matrix without component $i$, and let
$\vD_{v,i} = \diag(\hat{\bSigma}_{vv,-i,-i})$. For the following
methods, the tuning parameters are
\[\lambda_{v,i}(\delta) = \delta(\hat{\sigma}_{vv,ii}\log
q_v/n)^{1/2}.\]

\textbf{Dantzig Selector.} For $v=1,\ldots,p$ and $i=1,\ldots,q_v$, the Danztig selector
estimators are obtained by
\begin{equation}
  \label{eq:ds}
  \hat{\bbeta}_{v,i}(\delta) = \arg\min \lvert \balpha \rvert_1,\quad
\text{subject to } \lvert \vD_{v,i}^{-1/2}\hat{\bSigma}_{-i,-i}\balpha
- \vD_{v,i}^{-1/2}\vb_{v,i} \rvert_\infty \leq \lambda_{v,i}(\delta).
\end{equation}

\textbf{Lasso.} For $v=1,\ldots,p$ and $i=1,\ldots,q_v$, the Lasso estimators are obtained by 
\begin{multline}\label{eq:lasso}
  \hat{\bbeta}_{v,i}(\delta) = \vD_{v,i}^{-1/2}\hat{\balpha}_{v,i}(\delta),\\
  \text{where } \hat{\balpha}_{v,i}(\delta) = \arg\min_{\balpha\in\RR^{p-1}}
  \left[ \frac{1}{2n}
    \sum_{k=1}^n\left\{X_{k,v,i}-\bar{X}_{v,i}-
(\bX_{k,v,-i}-\bar{\bX}_{v,-i})\vD_{v,i}^{-1/2}\balpha\right\}^2 +
\lambda_{v,i}(\delta)\lvert \balpha\rvert_1 \right].
\end{multline}

We now demonstrate that under certain conditions, the methods yield
good estimators that satisfy the need to testing.
Define by $a_{v,1}$ and $a_{v,2}$ the error bound
\begin{equation}\label{eq:av}
  a_{v,1} = \max_{1\leq i\leq q_v}\lvert \hat{\bbeta}_{v,i} -
  \bbeta_{v,i} \rvert_1,\quad a_{v,2} = \max_{1\leq i\leq q_v} \lvert \hat{\bbeta}_{v,i} -
  \bbeta_{v,i} \rvert_2  
\end{equation} 

\begin{proposition}\label{pr:ds}
  Suppose that (C2.2) holds. Consider the Dantzig selector estimator
  $\hat{\bbeta}_{v,i}(2)$ in (\ref{eq:ds}). Then if $\max_{1\leq i\leq
    q_v} \lvert \bbeta_{v,i}\rvert_0 = o\left\{ n (\log
    q_0)^{-3-2\alpha_0}[\lambda_{\min}(\bSigma)]^2\right\}$, then
  Condition (C2.3) holds.
\end{proposition}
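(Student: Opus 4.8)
The plan is to push the Dantzig selector's estimation error through the identity linking $\hat{\vareps}$ and $\tilde{\vareps}$ and then show that, under the stated sparsity bound, the resulting error is of smaller order than the negligibility threshold $(\log q_0)^{-1-\alpha_0}$. Writing $\bdel_{v,l} = \hat{\bbeta}_{v,l} - \bbeta_{v,l}$, subtracting (\ref{eq:tdvareps}) from (\ref{eq:htvareps}) gives $\hat{\vareps}_{k,v,l} = \tilde{\vareps}_{k,v,l} - (\bX_{k,v,-l}-\bar{\bX}_{v,-l})^\trans\bdel_{v,l}$, so the quantity controlled in (C2.3) expands as
\[
\frac1n\sum_{k=1}^n\hat{\vareps}_{k,s,i}\hat{\vareps}_{k,t,j} - \frac1n\sum_{k=1}^n\tilde{\vareps}_{k,s,i}\tilde{\vareps}_{k,t,j} = -\vh_{st}^\trans\bdel_{t,j} - \vh_{ts}^\trans\bdel_{s,i} + \bdel_{s,i}^\trans\vK_{st}\bdel_{t,j},
\]
where $\vh_{st} = \frac1n\sum_k\tilde{\vareps}_{k,s,i}(\bX_{k,t,-j}-\bar{\bX}_{t,-j})$, $\vh_{ts}$ is its symmetric analogue, and $\vK_{st} = \frac1n\sum_k(\bX_{k,s,-i}-\bar{\bX}_{s,-i})(\bX_{k,t,-j}-\bar{\bX}_{t,-j})^\trans$. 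It then suffices to bound each of the three terms by $C(\log q_0)^{-1-\alpha_0}$, uniformly over $i,j$, with probability tending to one.

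First I would record the Dantzig error bounds. Under (C2.2) the regressors $\bX_{k,v,-i}$ are Gaussian with $\lambda_{\max}(\bSigma_{vv,-i,-i})\leq c_0$ and $\lambda_{\min}(\bSigma_{vv,-i,-i})\geq\lambda_{\min}(\bSigma)$, so the standard cone-plus-restricted-eigenvalue argument yields, uniformly over $1\leq i\leq q_v$ and $v\in\{s,t\}$ and with probability tending to one,
\[
a_{v,1} = O_{\P}\!\left(\frac{s_0\,\lambda}{\lambda_{\min}(\bSigma)}\right),\qquad a_{v,2} = O_{\P}\!\left(\frac{\sqrt{s_0}\,\lambda}{\lambda_{\min}(\bSigma)}\right),\qquad \lambda\asymp\Big(\frac{\log q_0}{n}\Big)^{1/2},
\]
where $a_{v,1},a_{v,2}$ are as in (\ref{eq:av}) and $s_0 = \max_i\lvert\bbeta_{v,i}\rvert_0$. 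Two facts drive this: the true coefficient is feasible because $\vb_{v,i}-\hat{\bSigma}_{vv,-i,-i}\bbeta_{v,i}$ is exactly the sample covariance of the regressors with the population residual, hence mean zero and $O_{\P}(\lambda)$ in sup-norm; and the restricted eigenvalue of the random Gram matrix $\hat{\bSigma}_{vv,-i,-i}$ stays above a constant multiple of $\lambda_{\min}(\bSigma)$ by sample-covariance concentration together with $\log q_0 = o(n^{1/5})$. Union bounds over the $O(q_0)$ regressions cost only extra $\log q_0$ factors that are absorbed into $\lambda$.

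Next I would bound the three pieces. The crucial structural fact is that the relevant population cross-covariances vanish in both cases of (C2.3): the residual $\vareps_{k,v,i}$ is orthogonal to its own regressors (case $s=t$, $i=j$), and regions $s$ and $t$ are independent under $\uH_{0,st}$ (case $s\neq t$). Hence $\vh_{st}$ and $\vh_{ts}$ are mean-zero sample averages with $\lvert\vh_{st}\rvert_\infty,\lvert\vh_{ts}\rvert_\infty = O_{\P}(\lambda)$, so the two linear terms are $O_{\P}(\lambda\,a_{v,1}) = O_{\P}\{s_0\lambda^2/\lambda_{\min}(\bSigma)\}$. For the quadratic term I would avoid the operator norm of the cross block (which degrades like $q_0/n$) by writing $\bdel_{s,i}^\trans\vK_{st}\bdel_{t,j} = \frac1n\sum_k(\tilde{\bX}_{k,s,-i}^\trans\bdel_{s,i})(\tilde{\bX}_{k,t,-j}^\trans\bdel_{t,j})$ and applying Cauchy--Schwarz across $k$; each factor is a within-region quadratic form bounded by $\lambda_{\max}(\hat{\bSigma}_{vv})\lvert\bdel_{v,\cdot}\rvert_2^2 = O_{\P}(c_0\,a_{v,2}^2)$. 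Consequently all three terms are $O_{\P}(a_{v,2}^2)$ after absorbing the bounded constants $c_0$ and $\lambda_{\min}(\bSigma)\leq c_0$.

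Finally, the stated condition $s_0 = o\{n(\log q_0)^{-3-2\alpha_0}[\lambda_{\min}(\bSigma)]^2\}$ is, given the $\ell_2$ rate above, exactly equivalent to $a_{v,2} = o\{(\log q_0)^{-1-\alpha_0}\}$; hence $a_{v,2}^2 = o\{(\log q_0)^{-2-2\alpha_0}\} = o\{(\log q_0)^{-1-\alpha_0}\}$, and the whole difference is $o\{(\log q_0)^{-1-\alpha_0}\}$, which is (C2.3). The main obstacle is the second step: obtaining the Dantzig bound uniformly over all $O(q_0)$ regressions with the sharp $\lambda_{\min}(\bSigma)$ dependence, which requires a simultaneous restricted-eigenvalue lower bound for every random Gram matrix $\hat{\bSigma}_{vv,-i,-i}$, together with the Cauchy--Schwarz reduction of the cross quadratic form to within-region (bounded-eigenvalue) quadratic forms, without which the cross-block operator norm would blow up when $q_0\gg n$.
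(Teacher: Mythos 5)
Your overall architecture is the same as the paper's: you use the identical three-term decomposition (your $\vh_{st}^\trans\bdel_{t,j}$, $\vh_{ts}^\trans\bdel_{s,i}$ and $\bdel_{s,i}^\trans\vK_{st}\bdel_{t,j}$ are exactly the paper's $A_{1,s,t,i,j}$, $A_{2,s,t,i,j}$ and $A_{3,s,t,i,j}$), the same H\"older bound for the two linear terms exploiting the vanishing population cross-covariances, and the same derivation of the Dantzig rates via feasibility of the true coefficient plus a restricted-eigenvalue lower bound on $\hat{\bSigma}_{vv,-i,-i}$ (the paper gets these from Liu (2013) and Bickel et al.\ (2009)). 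The one place you deviate is the quadratic term, and that is where there is a genuine gap: after Cauchy--Schwarz you bound each within-region factor by $\lambda_{\max}(\hat{\bSigma}_{vv})\lvert\bdel\rvert_2^2$ and assert $\lambda_{\max}(\hat{\bSigma}_{vv})=O_P(c_0)$. This is false in the regime permitted by (C2.2): $\log q_0=o(n^{1/5})$ allows $q_v\gg n$, and the operator norm of a $q_v$-dimensional Gaussian sample covariance built from $n$ observations is of order $\lambda_{\max}(\bSigma_{vv})(1+\sqrt{q_v/n})^2$, which diverges. You introduced the Cauchy--Schwarz step precisely to dodge the $q_0/n$ degradation of the cross block, but the unrestricted operator norm of the diagonal sample block degrades in exactly the same way.

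The step can be repaired, and either repair lands you essentially on the paper's bound. Either (i) split $\hat{\bSigma}_{st,-i,-j}=\bSigma_{st,-i,-j}+(\hat{\bSigma}_{st,-i,-j}-\bSigma_{st,-i,-j})$ as the paper does, bounding the deviation term by $\lvert\hat{\bSigma}_{st,-i,-j}-\bSigma_{st,-i,-j}\rvert_\infty\,a_{s,1}a_{t,1}=O_P\{a_{s,1}a_{t,1}(\log d_{st}/n)^{1/2}\}$ and the population term by $\lVert\bSigma_{st,-i,-j}\rVert_2\,a_{s,2}a_{t,2}\leq c_0\,a_{s,2}a_{t,2}$ (which vanishes under $\uH_{0,st}$); or (ii) keep your reduction but bound $\bdel^\trans\hat{\bSigma}_{vv,-i,-i}\bdel\leq c_0\lvert\bdel\rvert_2^2+\lvert\hat{\bSigma}_{vv,-i,-i}-\bSigma_{vv,-i,-i}\rvert_\infty\lvert\bdel\rvert_1^2$, i.e.\ the standard Dantzig prediction-error bound, which requires carrying the $\ell_1$ rate $a_{v,1}$ alongside $a_{v,2}$ rather than discarding it. A secondary omission: for $s\neq t$ you verify the vanishing of the population cross-covariance only under $\uH_{0,st}$, whereas the paper also derives the $\uH_{1,st}$ versions of all three bounds, which pick up additional $a_{s,2}+a_{t,2}+a_{s,2}a_{t,2}$ terms; these are needed because (C2.3) is also invoked in the power analysis, and your final observation that $a_{v,2}=o\{(\log q_0)^{-1-\alpha_0}\}$ under the stated sparsity is what controls them, so that case should be stated explicitly.
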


\begin{proposition}\label{pr:lasso}
  Suppose that (C2.2) holds. Consider the Lasso estiamtor
  $\hat{\bbeta}_{v,i}(2.02)$ in (\ref{eq:lasso}). Then if $\max_{1\leq i\leq
    q_v} \lvert \bbeta_{v,i}\rvert_0 = o\left\{ n (\log
    q_0)^{-3-2\alpha_0}[\lambda_{\min}(\bSigma)]^2\right\}$, Condition (C2.3) holds. 
\end{proposition}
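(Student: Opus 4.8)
The plan is to follow the same route as the proof of Proposition~\ref{pr:ds} for the Dantzig selector: the content of (C2.3) is a bound on the discrepancy between residual-based and error-based sample cross-moments, and this discrepancy is governed entirely by the estimation error $\bdel_{v,l} = \hat\bbeta_{v,l}-\bbeta_{v,l}$. The argument therefore splits into two nearly independent pieces: (a) a purely algebraic decomposition of the left-hand side of (\ref{eq:hatvareps}) in terms of $\bdel_{s,i}$ and $\bdel_{t,j}$, reducing everything to $a_{s,1},a_{t,1},a_{s,2},a_{t,2}$ from (\ref{eq:av}) together with some sample cross-moments; and (b) Lasso-specific bounds on $a_{v,1}$ and $a_{v,2}$ under (C2.2) with tuning $\delta=2.02$. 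Only piece (b) differs from the Dantzig case.

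For piece (a), since $\hat\vareps_{k,v,l}-\tilde\vareps_{k,v,l}=-(\bX_{k,v,-l}-\bar\bX_{v,-l})^\trans\bdel_{v,l}$, expanding $\hat\vareps_{k,s,i}\hat\vareps_{k,t,j}-\tilde\vareps_{k,s,i}\tilde\vareps_{k,t,j}$ produces two terms linear in a single $\bdel$ and one bilinear term. I would bound the linear terms by H\"older's inequality,
\[
\Bigl|\bdel_{t,j}^\trans\,\tfrac1n\sum_{k=1}^n(\bX_{k,t,-j}-\bar\bX_{t,-j})\tilde\vareps_{k,s,i}\Bigr|
\le a_{t,1}\,\max_{i,m}\Bigl|\tfrac1n\sum_{k=1}^n(X_{k,t,m}-\bar X_{t,m})\tilde\vareps_{k,s,i}\Bigr|,
\]
and symmetrically with $s$ and $t$ interchanged; the bilinear term equals $\bdel_{s,i}^\trans\hat\bSigma_{st}^{(ij)}\bdel_{t,j}$, where $\hat\bSigma_{st}^{(ij)}$ is the sample cross-covariance of $\bX_{k,s,-i}$ and $\bX_{k,t,-j}$, and is thus bounded by $a_{s,2}\,a_{t,2}\lVert\hat\bSigma_{st}^{(ij)}\rVert_2$. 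Under (C2.2) these Gaussian sample (cross-)covariance operator norms are $O_P(1)$, and by a Gaussian maximal inequality (for which $\log q_0=o(n^{1/5})$ is more than enough) the inner-product maxima over the $O(q_0^2)$ index pairs are $O_P\{(\log q_0/n)^{1/2}\}$, uniformly in $i,j$. The case (ii) $s=t,\,i=j$ is identical, with the two linear terms coinciding.

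For piece (b), I would derive the Lasso rates from the KKT conditions for (\ref{eq:lasso}). The slightly-larger-than-$2$ constant $\delta=2.02$ guarantees that, on an event of probability tending to one, $\lambda_{v,i}(\delta)$ dominates $2\bigl|\tfrac1n\sum_{k=1}^n(\bX_{k,v,-i}-\bar\bX_{v,-i})\vareps_{k,v,i}\bigr|_\infty$, again via the Gaussian maximal inequality and $\log q_0=o(n^{1/5})$. On that event the standard oracle inequality yields $a_{v,1}\lesssim s_0\,\lambda_{\min}(\bSigma)^{-2}(\log q_0/n)^{1/2}$ and $a_{v,2}\lesssim s_0^{1/2}\lambda_{\min}(\bSigma)^{-1}(\log q_0/n)^{1/2}$, where $s_0=\max_{v,i}\lvert\bbeta_{v,i}\rvert_0$, after converting the bounded eigenvalues in (C2.2) into a compatibility/restricted-eigenvalue constant that is bounded below by $\lambda_{\min}(\bSigma)$. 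Combining the two pieces, the bilinear contribution is $O_P\{s_0\log q_0/(n\,\lambda_{\min}(\bSigma)^2)\}$ and dominates the linear ones, and the sparsity assumption $s_0=o\{n(\log q_0)^{-3-2\alpha_0}\lambda_{\min}(\bSigma)^2\}$ is calibrated so that this is $o\{(\log q_0)^{-1-\alpha_0}\}$, giving (C2.3).

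The main obstacle is piece (b): translating the mere eigenvalue control of (C2.2) into a valid compatibility/restricted-eigenvalue condition for the \emph{empirical} design, i.e.\ showing the sample Gram matrices inherit the condition uniformly over all $O(p\,q_0)$ within-region regressions, and then tracking the exact powers of $\log q_0$ and of $\lambda_{\min}(\bSigma)$ through the oracle inequality so that they align with the stated sparsity rate. The uniform control of the $O(pq_0)$ noise maxima and the $O(q_0^2)$ cross-moment maxima is the remaining technical point, but it is routine given the Gaussian tails.
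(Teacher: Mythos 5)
Your overall route is the paper's route: the paper likewise reduces (C2.3) to the decomposition into two linear terms and one bilinear term in $\bdel_{v,l}=\hat\bbeta_{v,l}-\bbeta_{v,l}$ (this is done once, in the proof of Proposition~\ref{pr:ds}, and reused verbatim for the Lasso), verifies the KKT/cone condition with $\delta>2$ by citing Proposition 4.2 of Liu (2013), checks an empirical restricted-eigenvalue bound of the form $\bdel^\trans\hat\bSigma_{vv,-i,-i}\bdel\geq\lambda_{\min}(\bSigma)\lvert\bdel\rvert_2^2-O_P\{(\log q_0/n)^{1/2}\}\lvert\bdel\rvert_1$, and imports the rates for $a_{v,1},a_{v,2}$ from Theorem 7.2 of Bickel--Ritov--Tsybakov. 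However, your piece (a) has a genuine gap in the bilinear term. You bound $\lvert\bdel_{s,i}^\trans\hat\bSigma^{(ij)}_{st}\bdel_{t,j}\rvert$ by $a_{s,2}a_{t,2}\lVert\hat\bSigma^{(ij)}_{st}\rVert_2$ and assert that this sample (cross-)covariance operator norm is $O_P(1)$ under (C2.2). That is false in the regime the paper allows: (C2.2) only requires $\log q_0=o(n^{1/5})$, so $q_0$ can be exponentially larger than $n$, and the operator norm of an empirical $(q_s-1)\times(q_t-1)$ cross-covariance built from $n$ Gaussian samples grows like $\sqrt{q_s/n}+\sqrt{q_t/n}+\sqrt{q_sq_t}/n$, which diverges. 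With that factor your bilinear bound no longer fits inside $(\log q_0)^{-1-\alpha_0}$ under the stated sparsity. The paper avoids this by splitting $A_{3}$ into $\bdel_{s,i}^\trans(\hat\bSigma_{st,-i,-j}-\bSigma_{st,-i,-j})\bdel_{t,j}$, controlled in the entrywise sup-norm via H\"older as $O_P\{a_{s,1}a_{t,1}(\log d_{st}/n)^{1/2}\}$, plus $\bdel_{s,i}^\trans\bSigma_{st,-i,-j}\bdel_{t,j}$, controlled by the \emph{population} operator norm (which is bounded by (C2.2), and vanishes under $\uH_{0,st}$ for $s\neq t$). A correct alternative in your spirit would be Cauchy--Schwarz with the empirical quadratic forms $\bdel^\trans\hat\bSigma_{vv}\bdel$, which are Lasso prediction errors controlled by the oracle inequality, but the raw operator-norm claim does not work.

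A second, smaller omission: your claim that the linear-term maxima $\max_{i,m}\lvert\tfrac1n\sum_k(X_{k,t,m}-\bar X_{t,m})\tilde\vareps_{k,s,i}\rvert$ are uniformly $O_P\{(\log q_0/n)^{1/2}\}$ is only valid when the corresponding population covariances vanish, i.e.\ under $\uH_{0,st}$ (or for $s=t$, $i=j$, where $\Cov(\vareps_{k,s,i},\bX_{k,s,-i})=\bzero$ by construction). Condition (C2.3) is stated for $s\neq t$ without restricting to the null, and the paper's proof explicitly handles $\uH_{1,st}$ by peeling off the mean via Cauchy--Schwarz, which produces the additional $a_{t,2}$ (and $a_{s,2}$, $a_{s,2}a_{t,2}$) contributions in (\ref{eq:vareps2}) and (\ref{eq:vareps4}). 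You would need to add this case for the argument to establish (C2.3) as stated.
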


In fact, \propref{pr:ds} holds for any Dantzig selector estimator
$\hat{\bbeta}_{v,i}(\delta)$ with $\delta\geq 2$; and
\propref{pr:lasso} holds for any Lasso estimator
$\hat{\bbeta}_{v,i}(\delta)$ with $\delta>2$.  For computational
simplicity, we chose $\delta=2.02$. In numerical studies, we found
such choice work well in testing.

\section{Simulation Studies}\label{sec:simul}
In this section, we evaluate the performance of the our methods via
two simulation studies: one is focused on the size and power of the
proposed tests for two regions, the other illustrates how to
identity the functional brain network using the proposed tests
under family-wise error rate controls.

\subsection{Size and Power}\label{sec:size_power}
We simulate $\bX_k$, for $k = 1,\ldots, n$, from a normal
distribution with mean zero and covariance $\bSigma_{11,22}$, i.e.
$$\bX_{k} \sim \normal(\bzero_{q_1+q_2}, \bSigma_{11,22}) \quad 
\mbox{ with } \quad\bSigma_{11,22} = \left(\begin{array}{cc}
    \bSigma_{11} & \bSigma_{12}\\
    \bSigma^{\trans}_{12} & \bSigma_{22}
\end{array}\right),$$
where $\bX_k = (\bX^{\trans}_{k,1},\bX^{\trans}_{k,2})^{\trans}$
and $\bX_{k,s}$ is of dimension $q_s$ for $s = 1, 2$.  For comparisons, we also consider a simple test for $\uH_{0,12}$ in \eqref{eq:hypo1} based on
the Person correlation coefficient between the principal
component scores. Specifically, denote by ${\mathbf Z}_s$ the first principal component score of data $(\bX^{\rT}_{1,s},\ldots, \bX^{\rT}_{n,s})^{\rT}$. We compute the sample correlation between ${\mathbf Z}_1$ and ${\mathbf Z}_2$, denoted $\widehat\rho_{12}$. The Fisher's Z transformation is then taken to obtain the testing statistics $T^{(3)}_{12}$ for this simple approach, which is given by
$$T^{(3)}_{12} = \frac{1}{2}\log\left(\frac{1+\widehat\rho_{12}}{1-\widehat\rho_{12}}\right).$$
Using the results by ~\cite{hotelling1953new}, it is straightforward to show that $\sqrt{n-3}T^{(3)}_{12}\rightarrow N(0,1)$ under $\uH_{0,12}$ in \eqref{eq:hypo1}. This implies that we reject $\uH_{0,12}$ if $\sqrt{n-3}|T^{(3)}_{12}|>z_{\alpha/2}$, where $z_{\alpha}$ is the $1-\alpha$ normal quantile. We refer to this testing procedure as test III. 

To define different model specifications on $\bSigma_{11,22}$, we
introduce a few auxiliary matrices. Let $\vA_{d} = (a_{ij})_{d\times
  d}$ where $a_{ii} = 1$ and $a_{ij} \sim 0.5\mbox{Bernoulli}(0.5)$
for $10(k-1)+1\leq i \neq j \leq 10 k$, where $k = 1,\ldots, [d/10]$
and $a_{ij} = 0$ otherwise.  Let $\vB_{d} = (b_{ij})_{d\times d}$
where $b_{ii} = 1$, $b_{i,i+1} = b_{i-1,i} = 0.5$ and $b_{i,j} = 0$
for $|i-j|>3$.

 Let
$\bLam_{d} = (\lambda_{ij})_{d\times d}$ with $\lambda_{ii} \sim
\uU(0.5, 2.5)$ and $\lambda_{ij}=0$ for $i\neq j$.  Now, we define
four different models for $\bSigma_{11}$ and $\bSigma_{22}$.


\begin{itemize}
\item Model 1 (Independent Cases): $\bSigma_{ss} = \bLam_{q_s}$,  for $s = 1,2$.
\item Model 2 (Block Sparse Covariance Matrices): $\bSigma_{ss} =
  \bLam^{1/2}_{q_s} (\vA_{q_s} +\delta_i\vI_{q_s} )/(1+\delta_i)
  \bLam^{1/2}_{q_s}$, for $s = 1,2$, where $\delta_i =
  |\lambda_{\mathrm min}(\vA_{q_s})|+0.05$.
\item Model 3 (Block Sparse Precision Matrices): $\bSigma_{ss} =
  \bLam^{1/2}_{q_s} (\vA^{-1}_{q_s} +\delta^*_i\vI_{q_s}
  )/(1+\delta^*_i) \bLam^{1/2}_{q_s}$, for $s = 1,2$, where
  $\delta^*_i = |\lambda_{\mathrm min}(\vA^{-1}_{q_s})| + 0.05$.
\item Model 4 (Binded Sparse Covariance Matrices): $\bSigma_{ss} =
  \bLam^{1/2}_{q_s} (\vB_{q_s} +\tau_s\vI_{q_s} )/(1+\tau_s)
  \bLam^{1/2}_{q_s}$, for $s = 1,2$, where $\tau_s = |\lambda_{\mathrm
    min}(\vB_{q_s})| + 0.05$.
\item Model 5 (Binded Sparse Precision Matrices): $\bSigma_{ss} =
  \bLam^{1/2}_{q_s} (\vB^{-1}_{q_s} +\tau^*_s\vI_{q_s} )/(1+\tau^*_s)
  \bLam^{1/2}_{q_s}$, for $s = 1,2$, where $\tau^*_s =
  |\lambda_{\mathrm min}(\vB^{-1}_{q_s})| + 0.05$.
\end{itemize}

To simulate the empirical size, we assume
$\bSigma_{12} = \bzero_{q_1\times q_2}$. To evaluate the empirical
power, let $\bSigma_{12} = (\sigma_{ij})_{q_1\times q_2}$ with
$\sigma_{ij} \sim s_{ij}\mbox{Bernoulli}[5/(q_1q_2)]$ with
$s_{ij} \sim \normal(4\sqrt{\log(q_1q_2)/n}, 0.5)$.  The sample size
is taken to be $n = 80$ and $150$, while the dimension $(q_1,q_2)$
varies over $(50,50)$, $(100,150)$, $(200,200)$ and $(250,300)$.  The
nominal significant level for all the tests is set at $\alpha =
0.05$.
The empirical sizes and powers for the five Models, reported in 
Tables \ref{tab:size} and \ref{tab:power}, are estimated from 5,000 replications.


Obviously when the covariance matrix of each region is sparse,
Test I controls the type I error better; and when the precision
matrix is sparse, Test II controls the type I error better. This
implies the essence of condition (C1.1) and (C2.1) when deriving the
limiting null distribution. On the other hand, the simulation also
shows that without these two conditions, there is very little
inflation in the type I error. The power analysis shows the similar
pattern. In general, Test I/II has a larger power when the
covariance/precision matrix is sparse. Both Tests I and II
achieve a much larger power than Test III (Person correlation test on the first PC
scores), although the empirical sizes of Test III are
comparable to the proposed tests.


\begin{table}
\centering
\caption{\label{tab:size}Empirical size of Tests I, II and III for different
  sample sizes 
and models ($\times 10^{-2}$)}
\scriptsize
\begin{tabular}{ccrrrrr}
\toprule
\multirow{2}{*}{Model}&\multirow{2}{*}{Test}&  \multicolumn{5}{c}{$(q_1,q_2)$ }\\ 
\cline{3-7}
& & (30,30) & (50,50) & (100,150) & (200,200) & (300,250) \\
\midrule
& &  &  &$n = 80$  &  &  \\ 
\cline{3-7}
   1& I & 4.50 & 4.46 & 4.54 & 5.14 & 6.16 \\[-1mm]  
   &II & 4.58 & 4.48 & 4.70 & 5.70 & 5.44 \\[-1mm]   
   &III & 6.48 & 6.26 & 3.38 & 5.34 & 7.60 \\[1mm]   
   2& I & 4.20 & 4.60 & 4.52 & 6.04 & 6.06 \\[-1mm]  
      & II& 2.88 & 4.06 & 4.08 & 3.86 & 2.88 \\[-1mm]   
   &III & 6.46 & 4.58 & 8.88 & 7.34 & 6.32 \\[1mm]  
   3& I & 3.44 & 4.02 & 4.50 & 4.98 & 3.20 \\[-1mm]  
      & II& 4.56 & 3.94 & 5.02 & 5.76 & 5.74 \\[-1mm]   
   &III & 8.26 & 3.36 & 7.40 & 6.38 & 3.48 \\[1mm]   
   4& I & 4.80 & 4.82 & 5.12 & 5.22 & 6.02 \\[-1mm]  
      & II& 1.92 & 2.28 & 3.04 & 2.16 & 3.12 \\[-1mm]  
   &III & 4.42 & 3.36 & 6.56 & 4.78 & 3.20 \\[1mm] 
   5& I & 0.88 & 1.02 & 1.06 & 1.90 & 1.90 \\[-1mm]  
      & II& 4.52 & 4.60 & 4.32 & 6.28 & 6.14 \\[-1mm]  
   &III & 4.52 & 4.28 & 5.38 & 4.36 & 6.40 \\  
\midrule
&&  &  &$n = 150$  &  &  \\ 
\cline{3-7}
   1& I &  4.94 & 4.10 & 5.04 & 4.62 & 4.84 \\[-1mm]
    & II & 4.76 & 4.34 & 4.78 & 5.18 & 5.36 \\[-1mm] 
  &III & 8.80 & 4.04 & 6.44 & 5.56 & 5.76 \\[1mm] 
   2& I &  5.08 & 4.62 & 4.48 & 4.88 & 4.74 \\[-1mm] 
    & II & 4.02 & 4.68 & 4.40 & 4.70 & 4.24 \\[-1mm] 
  &III & 5.86 & 7.46 & 3.30 & 4.04 & 5.02 \\[1mm] 
   3& I &  4.94 & 4.68 & 4.50 & 4.86 & 4.60 \\[-1mm] 
    & II & 5.34 & 4.68 & 4.26 & 5.12 & 5.04 \\[-1mm] 
  &III & 2.76 & 8.80 & 4.74 & 5.22 & 3.98 \\[1mm] 
   4& I &  5.02 & 4.78 & 4.96 & 4.92 & 5.10 \\[-1mm] 
    & II & 2.62 & 2.46 & 3.62 & 3.42 & 3.78 \\[-1mm] 
  &III & 2.92 & 5.74 & 6.50 & 5.52 & 4.00 \\[1mm] 
   5& I &  1.96 & 1.92 & 1.96 & 2.18 & 3.10 \\[-1mm]
    & II & 5.62 & 4.46 & 4.04 & 4.92 & 4.94 \\[-1mm]
    &III & 3.38 & 5.92 & 3.90 & 5.42 & 2.34 \\ 
\bottomrule
&   \multicolumn{6}{c}{}\\  
\end{tabular}
\end{table}

\begin{table}
\centering
\caption{\label{tab:power} Empirical power of Tests I, II and III for different
  sample sizes 
and models ($\times 10^{-2}$)}
\scriptsize
\begin{tabular}{ccrrrrr}
\toprule
\multirow{2}{*}{Model}&\multirow{2}{*}{Test}&  \multicolumn{5}{c}{$(q_1,q_2)$ }\\ 
\cline{3-7}
& & (30,30) & (50,50) & (100,150) & (200,200) & (300,250) \\
\midrule
& &  &  &$n = 80$  &  &  \\ 
\cline{3-7}
   1& I & 88.58 & 85.00 & 60.20 & 55.44 & 54.74 \\[-1mm]  
   &II & 88.46 & 85.46 & 60.36 & 55.84 & 54.04 \\[-1mm] 
   &III & 11.32 & 6.26 & 7.06 & 8.66 & 6.18 	\\[1mm] 
   2& I & 88.04 & 80.20 & 59.78 & 55.08 & 55.10 \\[-1mm]   
      & II& 69.72 & 64.10 & 49.70 & 44.72 & 43.94 \\[-1mm]   
   &III & 6.46 & 4.00 & 7.00 & 5.72 & 7.28 \\ [1mm] 
   3& I &69.88 & 65.50 & 50.24 & 44.40 & 44.36 \\[-1mm]   
      & II& 87.46 & 80.40 & 59.30 & 54.94 & 55.90 \\[-1mm]   
   &III & 3.84 & 3.36 & 7.80 & 4.50 & 3.96 \\ [1mm] 
   4& I & 90.24 & 95.42 & 63.40 & 56.08 & 64.32 \\[-1mm]  
      & II& 56.82 & 59.16 & 43.98 & 42.18 & 42.84 \\[-1mm]   
   &III &8.02 & 8.52 & 10.12 & 5.96 & 8.64 \\ [1mm] 
   5& I &80.82 & 75.14 & 44.30 & 35.00 & 34.78 \\[-1mm]  
      & II&  89.94 & 85.36 & 54.30 & 49.90 & 44.96 \\[-1mm]  
   &III &8.12 & 5.30 & 6.52 & 6.68 & 7.60 \\ 
\midrule
&&  &  &$n = 150$  &  &  \\ 
\cline{3-7}
   1& I &  98.82 & 98.08 & 96.66 & 89.24 & 85.22 \\[-1mm] 
    & II & 98.96 & 98.04 & 96.98 & 87.78 & 85.04 \\[-1mm]  
  &III & 13.82 & 4.04 & 8.82 & 7.52 & 9.48 \\[1mm] 
   2& I &  99.14 & 97.86 & 97.02 & 87.62 & 84.46 \\[-1mm]  
    & II &86.98 & 75.92 & 73.30 & 55.58 & 55.18 \\[-1mm]  
  &III & 8.10 & 11.48 & 6.26 & 5.02 & 3.64 \\[1mm] 
   3& I & 90.06 & 87.74 & 76.38 & 54.88 & 55.48 \\[-1mm]  
    & II & 94.58 & 94.70 & 92.48 & 84.80 & 79.94 \\[-1mm]  
  &III &3.80 & 9.26 & 4.26 & 5.84 & 3.04 \\[1mm] 
   4& I &  95.26 & 92.56 & 88.68 & 74.92 & 85.42 \\[-1mm] 
    & II & 85.40 & 67.54 & 64.48 & 58.32 & 59.26 \\[-1mm] 
  &III &9.34 & 10.14 & 9.24 & 6.56 & 6.08 \\[1mm] 
   5& I &  84.74 & 79.74 & 56.00 & 44.96 & 45.40 \\[-1mm] 
    & II & 95.10 & 89.96 & 78.44 & 55.24 & 53.32 \\[-1mm] 
    &III & 7.94 & 9.08 & 5.26 & 3.62 & 2.34 \\
\bottomrule
&   \multicolumn{6}{c}{}\\  
\end{tabular}
\end{table}

\subsection{Network Identifications}\label{sec:netiden}
In this section, we perform the simulation studies to illustrate the
performance of our proposed testing procedure with the family-wise
error rate control on the network identifications.  We simulate a
region-level brain network according to the Erd\"{o}s-R\'{e}nyi model
\citep{ErdosOn1960}. We set the number of regions $p = 90$, and the
probability of any two brain regions being functional connected as
$0.01$.  The simulated brain network is shown in Figure
\ref{fig:simulNet} in the supplementary document.

For every two connected brain regions $s$ and $t$ on the simulated
network, we consider four models that we discussed in Section
\ref{sec:size_power} for the specifications of $\bSigma_{ss}$ and
$\bSigma_{tt}$. Similar to the simulation studies for evaluating the
empirical power, we set $\bSigma_{st} = (\sigma_{ij})_{q_s\times q_t}$
with $\sigma_{ij} \sim s_{ij} \mbox{Bernoulli}(10/d_{st})$ with
$s_{ij} \sim \normal(4 \sqrt{\log(d_{st})/n}, 1)$. We set sample size
$n = 150$ and simulate the fMRI time series based on a normal model,
i.e. $\bX_{k} \sim \normal(\bzero, \bSigma_{q\times q})$, for $k = 1,\ldots,
n$, where $q = \sum_{s=1}^p q_s$ and 
$$\bSigma_{q\times q} = \left(
\begin{array}{cccc}
\bSigma_{11} & \bSigma_{12} & \ldots & \bSigma_{1p}\\
\bSigma_{21} & \bSigma_{12} & \ldots & \bSigma_{2p}\\
\ldots & \ldots & \ldots & \ldots\\
\bSigma_{p1} & \bSigma_{p2} & \ldots & \bSigma_{pp}
\end{array}
\right).$$

Table \ref{tab:acc} reports the accuracy of the network identification
and the performance for multiple testing. Denote $E_{st}$ as the
indicator of the true connectivity between region $s$ and region $t$,
and $\hat{E}_{a,st}$ as the indicator of the estimated connectivity at
the $a$-th iteration, $1\leq s<t \leq p$ and $a=1,\ldots,5000$.  The
\nettpr\ is defined as the percentage of exactly identifying the
correct network, the \fwer\ is the empirical familywise error rate
which is the frequency of having one or mode false discoveries of the
functional connectivity over the brain network, and the \fdr\ is the
empirical false discovery rate which is the proportion of falsely
detecting the functional connectivities among the entire detections.
Mathematically, 
\begin{align*}
  \nettpr & = \frac{1}{5000} \sum_{a=1}^{5000} I(\hat{E}_{a,st} =
  E_{st}, \ \forall\ 1\leq
  s<t\leq p),\\
  \fwer & = \frac{1}{5000}\sum_{a=1}^{5000} I(\hat{E}_{a,st} = 1,
  E_{st} = 0, \
  \exists\ s<t),\\
  \fdr & = \frac{\sum_{a=1}^{5000} \sum_{1\leq s<t\leq p} I
    (\hat{E}_{a,st} = 1, E_{st}=0)}{ \sum_{a=1}^{5000} \sum_{1\leq
      s<t\leq p} I(\hat{E}_{a,st} = 1)}.
\end{align*}
Table \ref{tab:acc} shows the similar pattern as Tables \ref{tab:size}
and \ref{tab:power}. When the covariance matrix is the identity
matrix, Test I performs better than Test II since the
optimization step of Test II introduces extra errors. In
addition, Test I is computationally much faster than Test
II. Therefore we recommend Test I when the covariance matrix is
the identity matrix or sparse, and Test II when the precision
matrix is sparse and its inverse is not sparse.

\begin{table}
  \centering
  \begin{tabular}{crrrcrrr}
    \hline
    &\multicolumn{3}{c}{Test I } && \multicolumn{3}{c}{Test
      II}\\ 
    \hline
    & \nettpr & \fwer & \fdr && \nettpr & \fwer & \fdr\\ 
    \hline
    Model 1 &0.72 &0.02 &0.08 && 0.60 & 0.02 & 0.08 \\ 
    Model 2 &0.64 &0.02 &0.04 &&0.56 & 0.08 & 0.02 \\ 
    Model 3 &0.24 &0.10 &0.06 &&0.68 & 0.04 & 0.12 \\ 
    Model 4 &0.66 &0.04 &0.02  &&0.36 & 0.16 & 0.08 \\
    Model 5  &0.18 &0.12 &0.07  &&0.70 & 0.02 & 0.06 \\
    \hline
\end{tabular}
\caption{Accuracy of the network identification for Tests I and II}
\label{tab:acc}
\end{table}

\section{Application}\label{sec:app}
In this section, we demonstrate our method via an analysis of the
resting-state fMRI data that are collected in the autism brain imaging
data exchange (ABIDE) study \citep{di2013autism}. The major goal of
the ABIDE is to explore the association of brain activity with the
autism spectrum disorder (ASD), which is a widely recognized disease
due to its high prevalence and substantial heterogeneity in
children~\citep{bauman2005neurobiology}. The ABIDE study collected 20
resting-state fMRI data sets from 17 different sites consists of 1,112
individuals with 539 ASDs and 573 age-matched typical controls
(TCs). The resting-state fMRI is a popular non-invasive imaging
technique that measures the blood oxygen level to reflect the resting
brain activity. For each subject, the fMRI signal was recorded for
each voxel in the brain over multiple time points (multiple
scans). The different sites in the ABIDE consortium produced different
number of fMRI scans ranging from 72 to 310. Several regular imaging
preprocessing steps \citep{di2013autism, huettel2004functional}, e.g.,
motion corrections, slice-timing correction, spatial smoothing, have
been applied to the fMRI data, which were registered into the MNI
space (image size: $91\times 109\times 91 (2\mbox{mm}^3))$ consisting
of 228,483 voxels. We concentrate on the network identification over
90 regions in the brain, with regions defined according to the AAL
system.

We take a whitening transformation of original fMRI signals using the
AR(1) model \citep{worsley2002general} to remove the temporal
correlations. The de-trending and de-meaning procedures are also applied for original fMRI signals.  We perform the
principal component analysis (PCA) to summarize the voxel-level fMRI
time series into a relatively small number of principal component
signals within each region.  The number of signals is chosen according
to the criterion of the cumulative variance contribution being larger
than 90\%. The mean number of the principal components over 90 regions
is 18 ranging from 6 to 36.  We apply the proposed methods to identify
the resting state brain network for each subject.  The network for a
group of subjects is defined by including the connections for regions
$i$ and $j$ if they are connected over $85\%$ of subject-level
networks.  The ASD patient and control network include 445 connections
and the 502 connections respectively, where numbers of unique
connections are 31 and 88.  The number of connections shared by both
groups is 441. The control network is denser than the ASD patient
network.  Figure \ref{fig:network} shows the unique connections for
the ASD patient network and the health control network.  In the ASD
patient network, there are two ``hub" brain regions that have at least
4 unique connections to other regions in the brain. They are the
medial part of the superior frontal gyrus (SFGmed-R) and Gyrus rectus
(REC). These regions were demonstrated in the previous references \citep{baron1999social,
  tsatsanis2003reduced, hardan2006abnormal,oblak2011reduced} to be
strongly associated with Autism. Our results suggest that Autism
patients have active region-level functional connectivity
to these three regions, while the controls does not have those
network.  On the other hand, in the health control network, there are
three ``hub" regions that have at least 7 connections. They are the
dorsolateral part of right superior frontal gyrus (SFGdor-R), the left
middle frontal gyrus (MFG-L) and the right middle frontal gyrus
(MFG-R). Our results suggest that the Autism patients break the most
of the connections to these three regions. The brain functions of
these regions are consistent with the Autism clinical symptom. For
example, the superior fontal gyrus is known for being involved in
self-awareness, in coordination with the action of the sensory system
\citep{goldberg2006brain}.  

\begin{figure}[htbp] 
   \centering
   \begin{tabular}{ccc}
   \multicolumn{3}{c}{ASD Patient Brain Network}\\
   \includegraphics[width=1.5in]{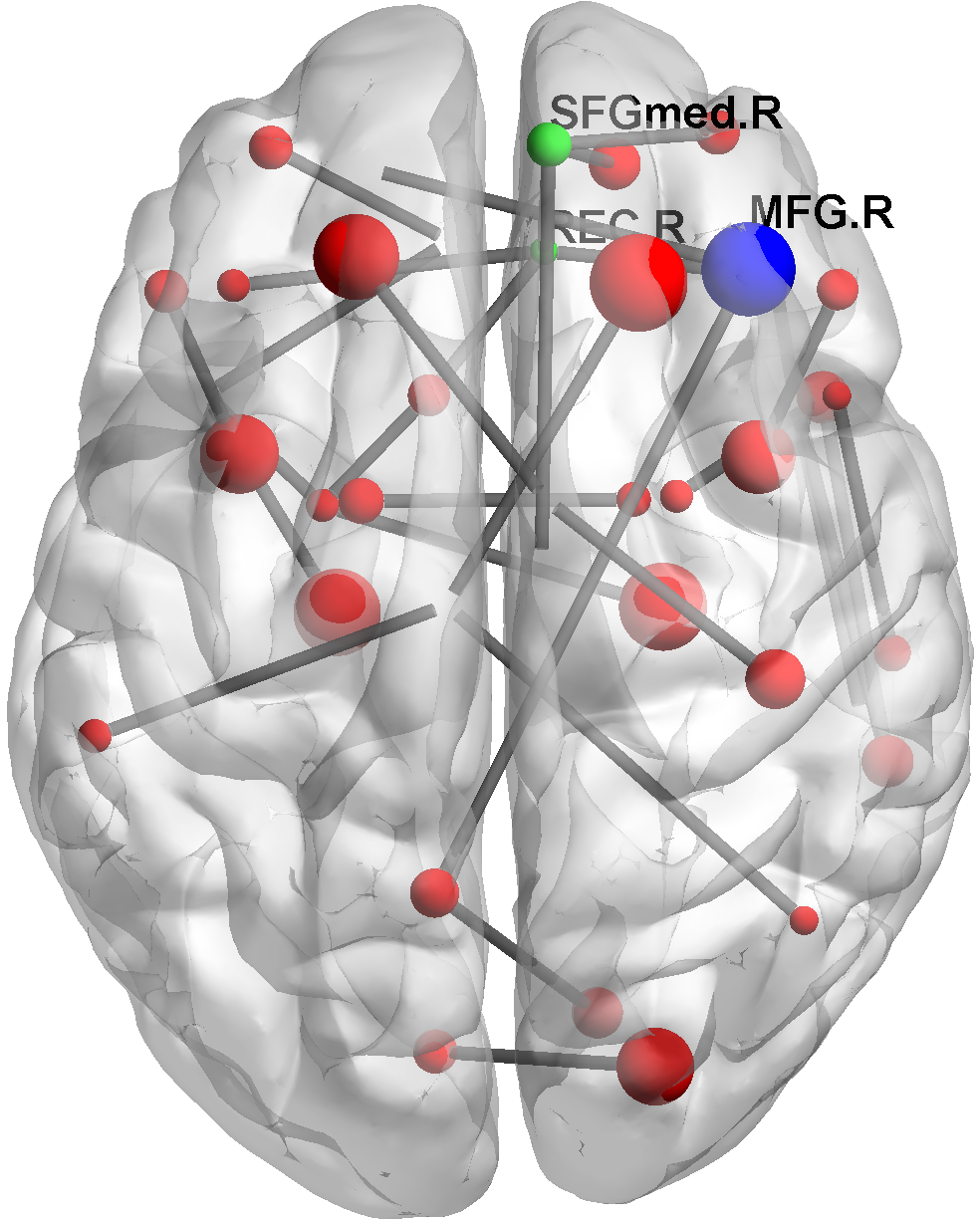} &  
   \includegraphics[width=1.5in]{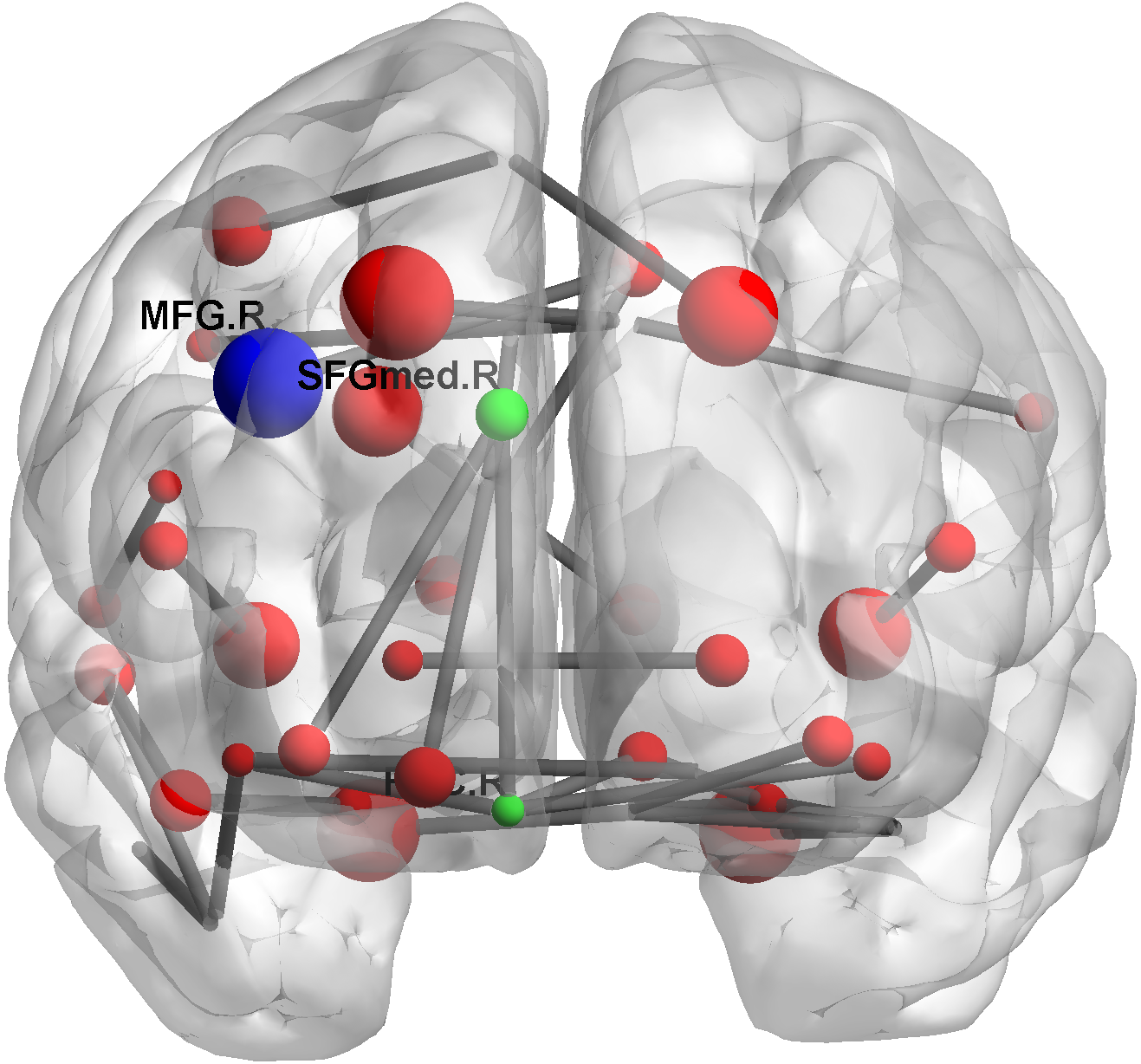}&
   \includegraphics[width=2in]{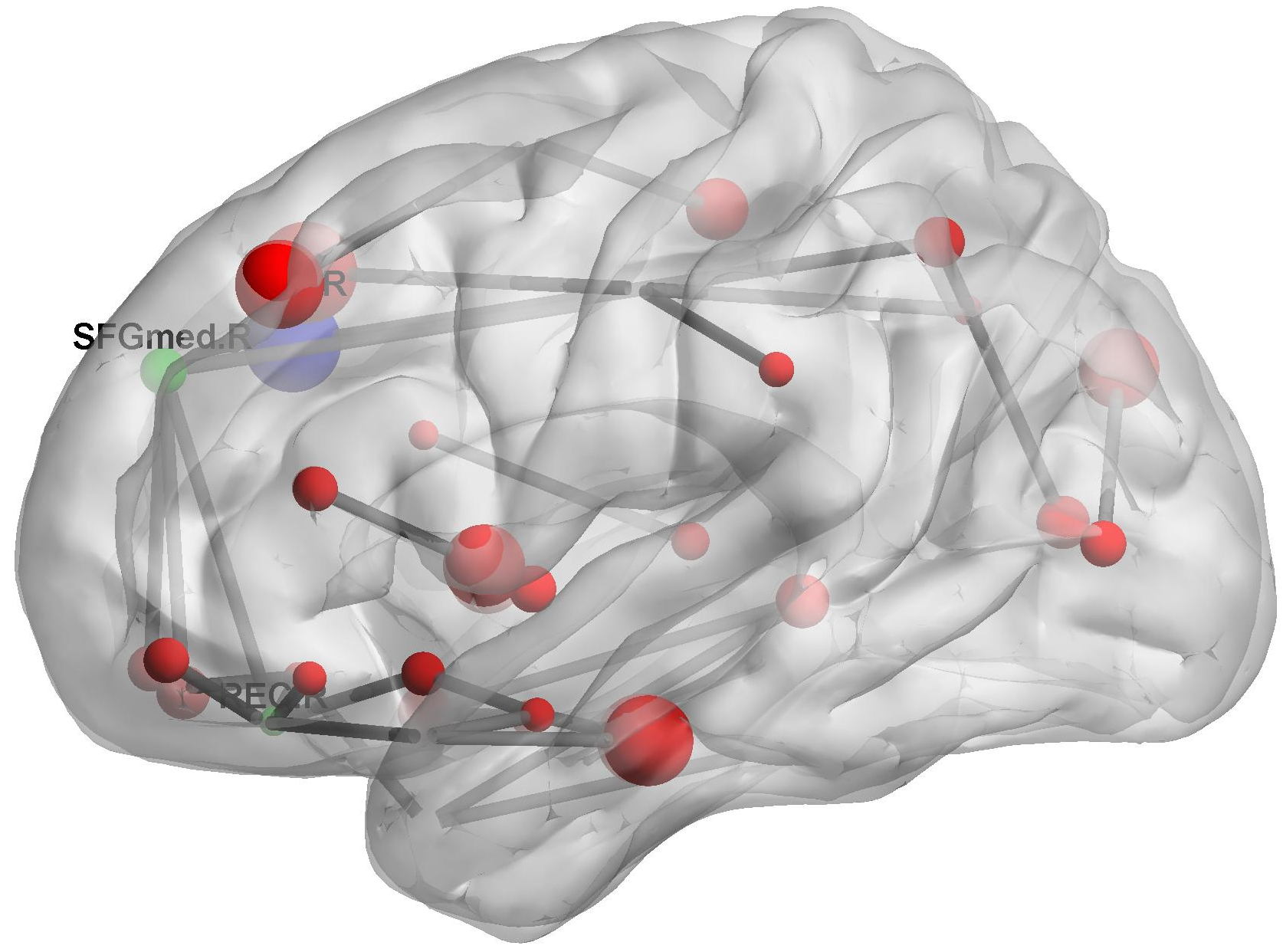}\\
   \multicolumn{3}{c}{Health Control Brain Network}\\
      \includegraphics[width=1.5in]{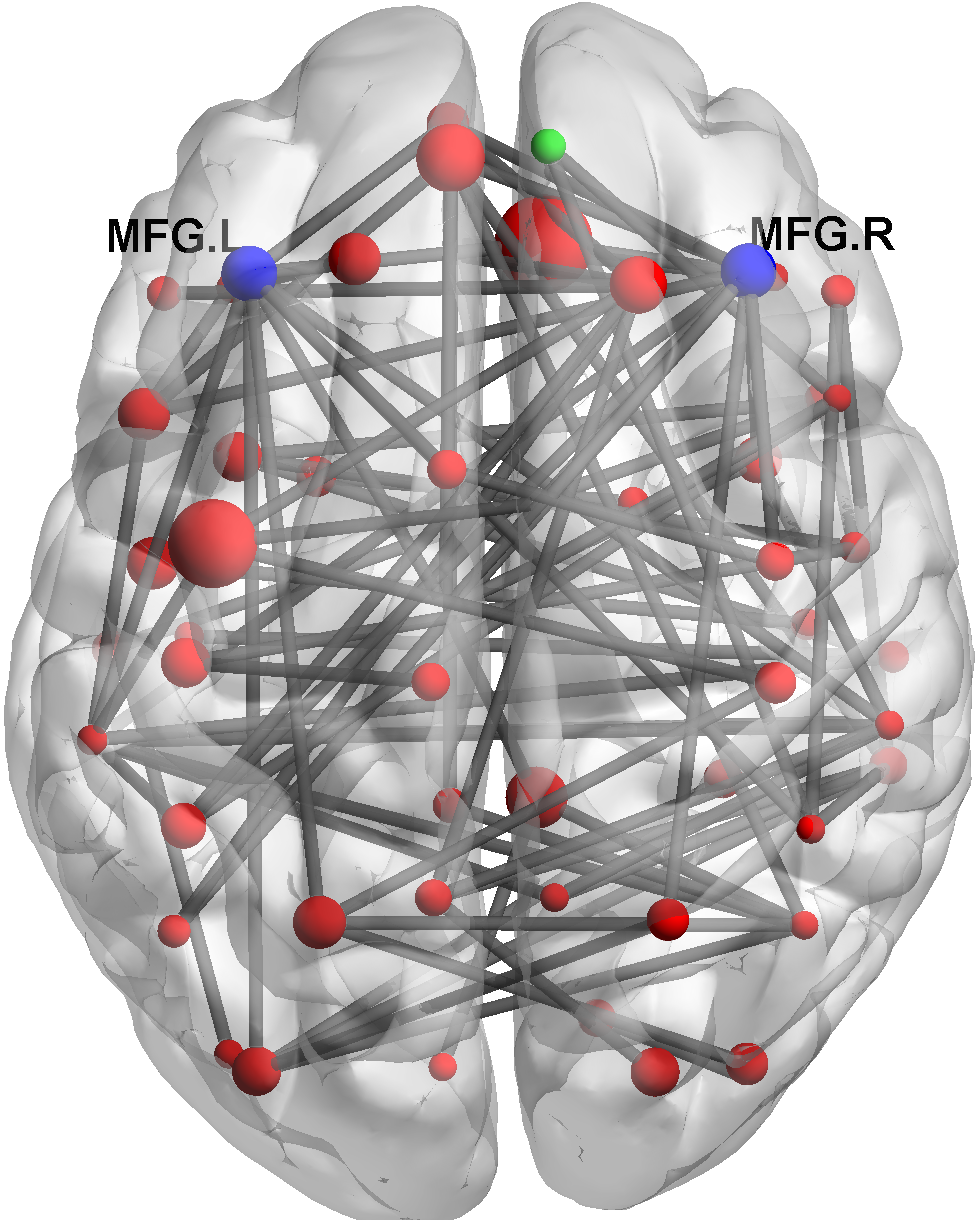} &  
   \includegraphics[width=1.5in]{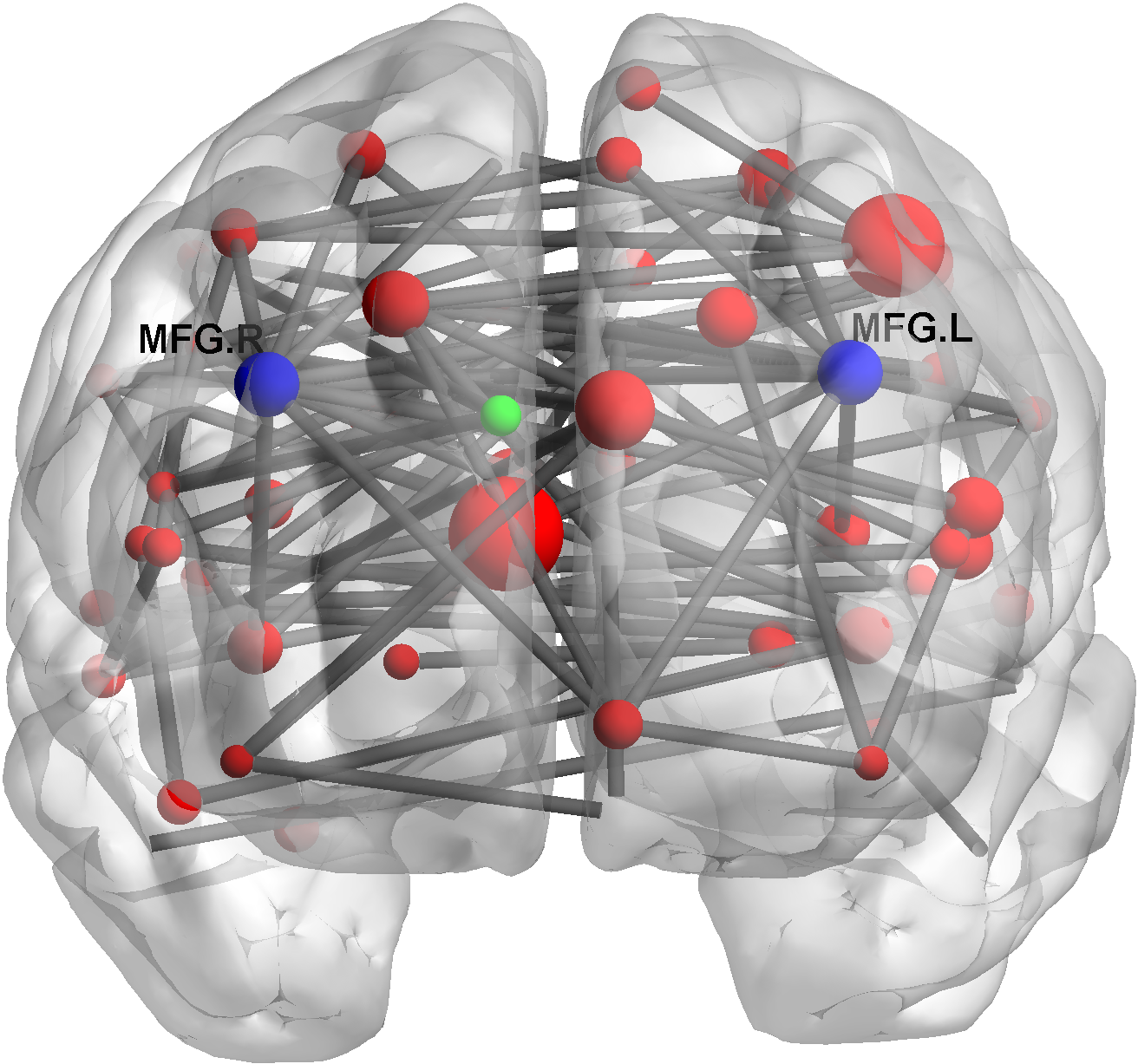}&
   \includegraphics[width=2in]{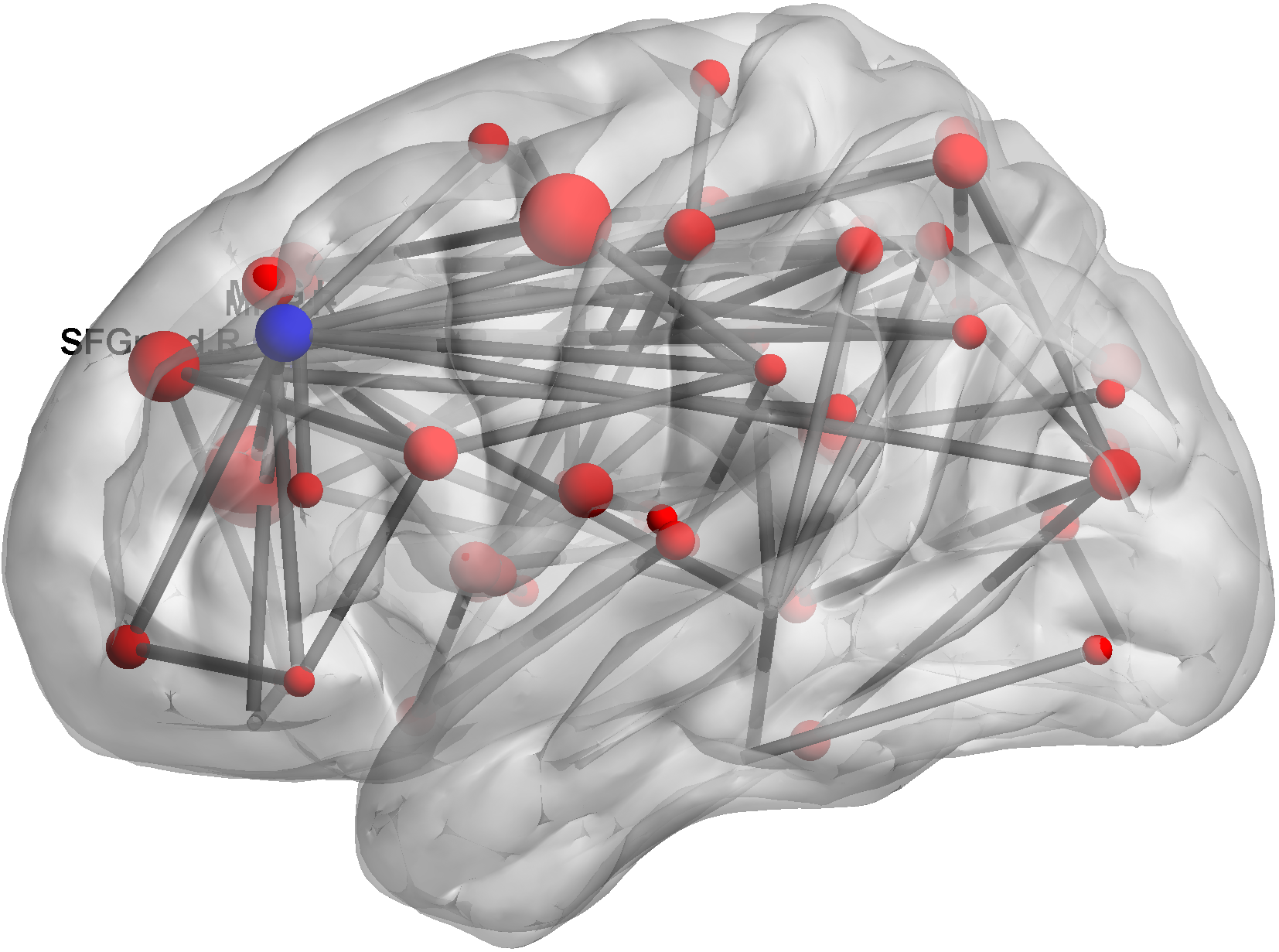}\\

   \end{tabular}
   \caption{Identified region-level resting state brain networks for
     ASD patient group and health control group}
   \label{fig:network}
\end{figure}

%

\section{Discussion}

In additional to this, the novel contributions of
our work include: 1) we propose a new framework to identify the
functional brain network using formal statistical testing procedures,
which make full use of the massive voxel-level brain signals and
incorporate the brain anatomy into the analysis, producing
neurologically more meaningful interpretations.  2) we establish the
statistical theory of the proposed testing procedures, which provides
the solid foundation for making valid inference on the functional
brain network.  3) the proposed method is computationally very
efficient and can be paralleled to achieve fast computing
performance. 4) Although the development of our proposed approach is
motivated by the analysis of brain imaging data, it is a general
method for network construction and can be readily applied to other
problems, such as identification of gene networks and social networks.

\section*{Acknowledgement}

 Jian Kang's research was partially supported by the National Center for Advancing
Translational Sciences of the National Institutes of Health under
Award Number UL1TR000454 and  NIH grant 1R01MH105561. We thank the autism brain imaging data
exchange (ABIDE) study \citep{di2013autism} shares the resting-state
fMRI data.

\section*{Supplementary Material}

The supplementary material includes the proof of 
and all technical lemmas, and the
simulated network (Figure~\ref{fig:simulNet}) on 90 regions using
Er\"{d}os-R\'{e}nyi model discussed in Section~\ref{sec:netiden}.

\appendix
\section*{Proof of Main Theorems}

Without loss of generality, in this section, we assume $\E(X_{k,s,i})
= \E(X_{k,t,j})=0$, and $\Var(X_{k,s,i}) = \Var(X_{k,t,j}) = 1$ unless
otherwise stated. Due to the space limit, we list the proofs of some
theorems (Theorem~\ref{th:power}, Theorem~\ref{th:lower2},
Theorem~\ref{th:T2null}, Proposition~\ref{pr:ds} and
Proposition~\ref{pr:lasso}) here. Theorem~\ref{th:p2power} follows
similar arguments of Theorem~\ref{th:power}, and
Theorem~\ref{th:p2lower} follows that of Theorem~\ref{th:lower2}.  The
proof of Theorem~\ref{th:Tnull} is relatively long and the main
techniques follows the proof of Theorem~1 in \citet{CaiTwo2013}, and
thus is placed in the supplementary material.

In addition, to simplify the notation in the proof, we denote by
$d_{st} = q_sq_t$ the total number of entries in the covariance matrix
$\bUps_{st}$. And also define $c(d_{st},\alpha) = 2\log(d_{st}) -
\log\log(d_{st}) + q_\alpha$, where $q_\alpha$ is the $(1-\alpha)$th
quantile of null distribution $F(x)$.

To prove Theorem~\ref{th:power}, we need Lemma~\ref{lm:theta_dev} and Lemma~\ref{lm:dist}.

\begin{lemma}\label{lm:theta_dev}
  Recall that $\theta_{1,st,ij} = \sigma_{ss,ii}\sigma_{tt,jj}$ and
  $\hat{\theta}_{1,st,ij} = \hat{\sigma}_{ss,ii}\hat{\sigma}_{tt,jj}$.
 Under the conditions of (C1.2) or (C1.2*) and the null $\uH_{0,st}$,
  there exists some constant $C>0$, such that as $n,q_0\rightarrow \infty$,
  \begin{equation}
    \label{eq:thetahtheta0}
    \P\left\{ \max_{i,j} \left\lvert 1 - 
        \frac{\hat{\theta}_{1,st,ij}}{\theta_{1,st,ij}} \right\rvert 
      \geq C\frac{1}{(\log
        q_0)^2} \right\} = O(q_0^{-1} + n^{-\epsilon/4}).
  \end{equation}
\end{lemma}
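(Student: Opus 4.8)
The plan is to exploit the normalization adopted at the beginning of the appendix, under which $\sigma_{ss,ii}=\sigma_{tt,jj}=1$ and hence $\theta_{1,st,ij}=1$. Thus $\hat\theta_{1,st,ij}/\theta_{1,st,ij} = \hat\sigma_{ss,ii}\hat\sigma_{tt,jj}$, and the elementary identity
\[
\hat\sigma_{ss,ii}\hat\sigma_{tt,jj}-1 = (\hat\sigma_{ss,ii}-1) + (\hat\sigma_{tt,jj}-1) + (\hat\sigma_{ss,ii}-1)(\hat\sigma_{tt,jj}-1)
\]
reduces the problem to a uniform control of the diagonal deviations $\max_i|\hat\sigma_{ss,ii}-1|$ and $\max_j|\hat\sigma_{tt,jj}-1|$. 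Indeed, if each of these maxima is below $C'/(\log q_0)^2$, then the cross term is $O\{(\log q_0)^{-4}\}$, so the left-hand side of the lemma is bounded by $3C'/(\log q_0)^2$ up to negligible terms; absorbing constants then delivers the stated threshold $C/(\log q_0)^2$.

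First I would dispose of the sample-mean centering. Writing $\hat\sigma_{ss,ii} = n^{-1}\sum_{k=1}^n X_{k,s,i}^2 - \bar X_{s,i}^2$, the subtracted term satisfies $\max_i \bar X_{s,i}^2 = O_p(\log q_0/n)$ (via the same tail control applied to the sample means), which is $o\{(\log q_0)^{-2}\}$ because $(\log q_0)^3 = o(n)$ under both (C1.2) and (C1.2*). Hence it suffices to bound $\max_i| n^{-1}\sum_k X_{k,s,i}^2 - 1|$, and the analogous $t$-region maximum, by $C''/(\log q_0)^2$ with the claimed probability. This is a uniform large-deviation statement for the i.i.d.\ centered sums $Z_{k,s,i} := X_{k,s,i}^2 - 1$, with $\E Z_{k,s,i}=0$.

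The core is a single-coordinate tail bound followed by a union bound over at most $2q_0$ coordinates, and this is where the two tail regimes diverge. Under the sub-Gaussian condition (C1.2), $Z_{k,s,i}$ is sub-exponential, so a Bernstein inequality gives $\P\{|n^{-1}\sum_k Z_{k,s,i}| \geq t\} \leq 2\exp(-cnt^2)$ for the small threshold $t = C''/(\log q_0)^2$; the union bound then yields $2q_0\exp\{-c'' n/(\log q_0)^4\}$, and since $\log q_0 = o(n^{1/5})$ forces $n/(\log q_0)^4 \gg \log q_0$, this is $O(q_0^{-1})$ (in fact $o(q_0^{-M})$ for every $M$). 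Under the polynomial condition (C1.2*), I would instead apply Markov's inequality together with the Rosenthal moment bound at order $p = 2\gamma_1 + 2 + \epsilon/2$, giving
\[
\P\left\{\left\lvert \tfrac{1}{n}\textstyle\sum_k Z_{k,s,i}\right\rvert \geq t\right\} \leq C_p\left\{\sigma_Z^p\, n^{-p/2} + \E|Z_{k,s,i}|^p\, n^{1-p}\right\} t^{-p},
\]
where $\E|Z_{k,s,i}|^p \leq 2^{p-1}(\E|X_{k,s,i}|^{4\gamma_1+4+\epsilon}+1) \leq C$ uses exactly the moment assumed in (C1.2*). With $t = C''/(\log q_0)^2$ and the dimension constraint $q_0 \leq c_1 n^{\gamma_1+1/2}$, the union bound over $2q_0$ coordinates produces a leading term of order $q_0(\log q_0)^{2p} n^{-p/2} = O\{(\log n)^{2p}\, n^{-1/2-\epsilon/4}\} = O(n^{-\epsilon/4})$, the second term being of strictly smaller order. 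Combining the two regimes yields the claimed $O(q_0^{-1} + n^{-\epsilon/4})$.

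The main obstacle is the polynomial-tail bookkeeping: because the threshold $t = C''/(\log q_0)^2$ shrinks, one is probing the Gaussian centre of the tail, so the moment order $p$ must be chosen to just clear the dimension $q_0 \leq c_1 n^{\gamma_1+1/2}$, and the polylogarithmic factor $(\log q_0)^{2p}$ created by the $t^{-p}$ term must be checked not to overwhelm the $n^{-\epsilon/4}$ rate — which it does not, since $\log q_0 = O(\log n)$ and any fixed power of $\log n$ is dominated by $n^{1/2}$. The sub-Gaussian case is comparatively routine given the exponential concentration, and the centering and cross-term contributions are lower order throughout.
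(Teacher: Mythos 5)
Your proposal is correct, and its skeleton coincides with the paper's: reduce the product deviation $\hat\sigma_{ss,ii}\hat\sigma_{tt,jj}-\sigma_{ss,ii}\sigma_{tt,jj}$ to the two diagonal deviations (your exact identity versus the paper's bound $\lvert ab-1\rvert\leq\lvert a-1\rvert\,\lvert b\rvert+\lvert b-1\rvert$ are interchangeable), peel off the $\bar X_{s,i}^2$ centering term, and finish with a per-coordinate tail bound plus a union bound over $O(q_0)$ coordinates. The one place you genuinely diverge is the polynomial-tail regime (C1.2*). The paper truncates $Y_{k,s,i}=X_{k,s,i}^2-\E X_{k,s,i}^2$ at level $n/(\log q_0)^5$, applies Bernstein's inequality to the truncated sums, and controls the exceedance event by Markov with the $(4\gamma_1+4+\epsilon)$-th moment, which is where the $n^{-\epsilon/4}$ term originates. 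You instead apply Markov plus Rosenthal's moment inequality at order $p=2\gamma_1+2+\epsilon/2$ directly to the untruncated sums; your bookkeeping is right — the leading term $q_0(\log q_0)^{2p}n^{-p/2}=O\{n^{-1/2-\epsilon/4}(\log n)^{2p}\}=o(n^{-\epsilon/4})$ clears the dimension constraint $q_0\leq c_1 n^{\gamma_1+1/2}$, and the $n^{1-p}$ term is lower order. Your route avoids the truncation bookkeeping at the cost of choosing the moment order to just clear the dimension; the paper's truncation device is reused in Lemma~\ref{lm:dist} and in the proof of Theorem~\ref{th:Tnull}, so it buys uniformity across the appendix. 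One small caveat: under (C1.2*) your parenthetical claim $\max_i\bar X_{s,i}^2=O_p(\log q_0/n)$ at that sharp rate is not obtainable by a bare Markov bound (it needs the paper's truncation), but your argument only uses the far weaker threshold $\max_i\lvert\bar X_{s,i}\rvert\leq c/\log q_0$, for which a crude Markov--Rosenthal bound at order $4\gamma_1+4+\epsilon$ does give the required $o(n^{-\epsilon/4})$ failure probability, so no gap results.
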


\begin{lemma}\label{lm:dist}
  Recall that $\theta_{st,ij} = \Var\{(X_{k,s,i}-\mu_{s,i})(X_{k,t,j}-\mu_{t,j})\}$. Under the
  conditions of (C1.2) or (C1.2*), we have for some constant $C>0$
  that
  \begin{equation}
    \label{eq:tildedist}
    \P\left\{ \max_{(i,j)\in \Ar} \frac{(\tilde{\sigma}_{st,ij} -
        \sigma_{st,ij})^2}{\theta_{st,ij}/n} \geq x^2\right\} \leq C\lvert \Ar
    \rvert (1-\Phi(x)) + O(q_0^{-M} + n^{-\epsilon/8})
  \end{equation}
uniformly for $0\leq x\leq (8\log q_0)^{1/2}$ and $\Ar \subseteq \{
(i,j): 1\leq i \leq q_s, 1\leq j\leq q_t \}$. Under $\uH_{0,st}$,
(\ref{eq:tildedist}) also holds when substituting $\theta_{st,ij}$ to $\theta_{1,st,ij}$.
\end{lemma}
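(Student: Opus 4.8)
The plan is to recognize the left-hand side as the maximum of standardized sums of i.i.d.\ random variables and to control it by a union bound combined with a Cramér-type moderate deviation estimate. With the mean-zero, unit-variance normalization of the appendix, $\tilde{\sigma}_{st,ij}$ is the oracle (known-mean) sample covariance, so writing $Y_{k,ij} = X_{k,s,i}X_{k,t,j} - \sigma_{st,ij}$ makes the $Y_{k,ij}$ i.i.d.\ across $k$, mean zero, with $\Var(Y_{k,ij}) = \theta_{st,ij}$, and
\[
  \tilde{\sigma}_{st,ij} - \sigma_{st,ij} = \frac{1}{n}\sum_{k=1}^n Y_{k,ij}.
\]
Hence the quantity inside the probability is exactly the square of the standardized sum $W_{ij} = (n\,\theta_{st,ij})^{-1/2}\sum_{k=1}^n Y_{k,ij}$, and the event in the statement coincides with $\{\max_{(i,j)\in\Ar}|W_{ij}|\geq x\}$. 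First I would apply a union bound to reduce the claim to the per-pair tail estimate $\P(|W_{ij}|\geq x)\leq C(1-\Phi(x)) + (\text{remainder})$, uniform over $(i,j)$ and over $0\leq x\leq (8\log q_0)^{1/2}$; summing over $(i,j)\in\Ar$ then produces the main term $C\lvert\Ar\rvert(1-\Phi(x))$.

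The core is this per-pair moderate deviation bound, and I would treat the two tail regimes separately. Under (C1.2), each $X_{k,s,i}$ is sub-Gaussian, so by Cauchy--Schwarz the product has a finite exponential moment $\E\exp(t_0|X_{k,s,i}X_{k,t,j}|)\leq K'$, i.e.\ $Y_{k,ij}$ is sub-exponential. A standard Cramér-type moderate deviation theorem then gives $\P(W_{ij}\geq x)/(1-\Phi(x))\to 1$ uniformly for $0\leq x = o(n^{1/6})$; since (C1.2) imposes $\log q_0 = o(n^{1/5})$, the upper end $x=(8\log q_0)^{1/2} = o(n^{1/10})$ stays safely inside this window, and the factor $2$ from passing to $|W_{ij}|$ is absorbed into $C$. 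Under (C1.2*), Cauchy--Schwarz yields $\E|X_{k,s,i}X_{k,t,j}|^{2\gamma_1+2+\epsilon/2}<\infty$, so $Y_{k,ij}$ has a finite polynomial moment of order $2\gamma_1+2+\epsilon/2>2$; the corresponding finite-moment moderate deviation estimate is valid over the required range because $q_0\leq c_1 n^{\gamma_1+1/2}$ forces $(8\log q_0)^{1/2} = O(\sqrt{\log n})$.

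To produce the explicit remainder I would run the finite-moment case through a truncation argument: split $Y_{k,ij}$ at a threshold growing polynomially in $n$, bound the truncated centered sum by a Bernstein/Cramér-series estimate to recover the Gaussian tail, and control the discarded part by $n\,\P(|Y_{k,ij}|>\text{threshold})\leq n\cdot\text{threshold}^{-(2\gamma_1+2+\epsilon/2)}\,\E|Y_{k,ij}|^{2\gamma_1+2+\epsilon/2}$. Balancing the truncation level against $x\leq(8\log q_0)^{1/2}$ is what generates the $n^{-\epsilon/8}$ term, while the sub-Gaussian regime contributes the $q_0^{-M}$ term (for any fixed $M$) from the exponentially small tail beyond truncation. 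Summed over the at most $d_{st}\asymp q_0^2$ pairs these remainders remain $O(q_0^{-M}+n^{-\epsilon/8})$ after relabeling $M$, giving the stated bound. For the null-substitution claim, under $\uH_{0,st}$ one has $\sigma_{st,ij}=0$, so $\theta_{st,ij}=\E(X_{k,s,i}^2X_{k,t,j}^2)$, which coincides with $\theta_{1,st,ij}=\sigma_{ss,ii}\sigma_{tt,jj}$ when the squared cross-region factors are uncorrelated (automatic under the Gaussian/independence structure underlying $\bUps_{st}=\bzero$); the inequality then transfers verbatim with $\theta_{st,ij}$ replaced by $\theta_{1,st,ij}$.

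The hard part will be the uniform control in the polynomial-tail case (C1.2*): one must keep the ratio $\P(W_{ij}\geq x)/(1-\Phi(x))$ bounded uniformly in $x$ up to $(8\log q_0)^{1/2}$ while tracking the remainder sharply enough to extract the $n^{-\epsilon/8}$ rate, which forces the truncation level to be chosen so that both the Gaussian-approximation error and the discarded-mass error are simultaneously of this order. The sub-Gaussian regime and the union bound are routine by comparison.
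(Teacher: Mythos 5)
Your reduction to the per-pair tail of $W_{ij}=(n\theta_{st,ij})^{-1/2}\sum_k Y_{k,ij}$ with $Y_{k,ij}=X_{k,s,i}X_{k,t,j}-\sigma_{st,ij}$ is the right starting point, and the sub-Gaussian case (C1.2) goes through as you describe. The genuine gap is in the polynomial-tail case (C1.2*). There, under the alternative (where this lemma is actually needed, in the power analysis), Cauchy--Schwarz gives the product only $p=2\gamma_1+2+\epsilon/2$ finite moments, and a truncation-plus-Bernstein argument on the \emph{deterministically} standardized sum cannot deliver the bound. Concretely: the discarded-mass term is at best $n\,\P(|Y_{k,ij}|>\tau_n)\lesssim n\,\tau_n^{-p}$ per pair; Bernstein forces $\tau_n\lesssim \sqrt{n}/x$, so this is of order $n^{-\gamma_1-\epsilon/4}$ per pair, which is an \emph{additive} error not controlled by $1-\Phi(x)$. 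Summed over $|\Ar|$, which can be as large as $d_{st}\asymp q_0^2\leq c\,n^{2\gamma_1+1}$, it gives $n^{\gamma_1+1-\epsilon/4}\to\infty$, whereas the lemma's remainder $O(q_0^{-M}+n^{-\epsilon/8})$ is a single term not multiplied by $|\Ar|$. At the top of the range, $x^2=8\log q_0\approx(8\gamma_1+4)\log n$ makes $1-\Phi(x)\approx n^{-(4\gamma_1+2)}$, far smaller than the per-pair truncation error, so the Gaussian tail cannot absorb it either. Your own "hard part" remark identifies the difficulty but the proposed balancing does not resolve it: with only $2+\delta$ moments of the summands, no choice of truncation level makes both errors simultaneously small enough over this range of $x$ and this many pairs.

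The paper avoids this entirely by switching to a \emph{self-normalized} statistic: it writes
\[
\frac{(\tilde{\sigma}_{st,ij}-\sigma_{st,ij})^2}{\theta_{st,ij}/n}
=\frac{\bigl(\sum_k Y_{k,ij}\bigr)^2}{\sum_k Y_{k,ij}^2}\cdot\frac{\tfrac{1}{n}\sum_k Y_{k,ij}^2}{\theta_{st,ij}},
\]
applies the self-normalized moderate deviation theorem of Jing, Shao and Wang (2003) to the first factor --- which requires only finite $(2+\delta)$-th moments and yields $\P\{(\sum_k Y_{k,ij})^2/\sum_k Y_{k,ij}^2\geq x^2\}\leq C(1-\Phi(x))$ with a \emph{multiplicative} error, so the union bound produces exactly $C|\Ar|(1-\Phi(x))$ --- and then controls the second factor by showing $\max_{ij}|\frac{1}{n}\sum_k Y_{k,ij}^2-\theta_{st,ij}|$ is small with probability $1-O(q_0^{-M}+n^{-\epsilon/8})$ (citing Lemma 4 of Cai and Liu's two-sample paper), which is where the additive remainder comes from. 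To repair your proof you would need to replace the classical Cram\'er-type estimate by this self-normalized one (or an equivalent device); the deterministic standardization is the step that fails.
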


\begin{proof}[Proof of Theorem~\ref{th:power}]
Define
\begin{align*}
  T_{st,2} = \max_{i,j}
  \frac{n\hat{\sigma}^2_{st,ij}}{\theta_{1,st,ij}}, & \quad T_{st,3} =
  \max_{i,j} \frac{n\sigma^2_{st,ij}}{\theta_{1,st,ij}},\\
   T_{st,4} =
  \max_{ij}\frac{n(\hat{\sigma}_{st,ij}-\sigma_{st,ij})^2}{\theta_{1,st,ij}},
  & \quad T_{st,5} =   \max_{ij}\frac{n(\hat{\sigma}_{st,ij}-\sigma_{st,ij})^2}{\theta_{st,ij}}.
\end{align*}

By \lemref{lm:theta_dev}, 
\[
  \P(T_{st}^{1} > q_\alpha) \geq \P\{T_{st,2} \geq
  c(d_{st},\alpha)(1+o(1))\}.
\]
Since $T_{st,3} \leq 2 T_{st,4} + 2 T_{st,2}$ and $T_{st,3} \geq
4(1+\kappa_1)\log d_{st}$,
\begin{align*}
   &\ \P\{T_{st,2} \geq c(d_{st},\alpha)(1+o(1))\}\\
  \geq &\  \P\{T_{st,3} -
  2T_{st,4} \geq 2c(d_{st},\alpha)(1+o(1))\} \\
 = &\ \P\{T_{st,4} \leq T_{st,3}/2 - c(d_{st},\alpha)(1+o(1))\}\\
 = &\ \P\{T_{st,4} \leq (2\kappa_1 \log d_{st} + \log\log_{dst} - q_\alpha)(1-o(1))\}.
\end{align*}
By Condition (1.3), $T_{st,5} \geq T_{st,4}/\kappa_1$. It follows that
\begin{multline*}
  \P\{T_{st,4} \leq (2\kappa_1 \log d_{st} + \log\log_{dst} -
  q_\alpha)(1+o(1))\} \\ \geq \P\{ T_{st,5} \leq (2\log d_{st} +
  (1/\kappa_1) \log \log d_{st}-(1/\kappa_1)q_\alpha)(1-o(1)) \}.
\end{multline*}
By \lemref{lm:dist}, 
\[  \P\{ T_{st,5} \leq (2\log d_{st} +
  (1/\kappa_1) \log \log d_{st}-(1/\kappa_1)q_\alpha)(1-o(1)) \}
  \rightarrow 1.\]
\end{proof}

\begin{proof}[Proof of Theorem~\ref{th:lower2}] It suffices to show the
  results for normal distribution which satisfies (C2) and (C2*).
  Denote $\min(q_s,q_t) = q^\ast(s,t)$.  Let $\Mr(s,t)= \{\Sr:\
  \Sr\subseteq\{1,\ldots,q^\ast\},\ \Card(\Sr)=r_{st}\}$ denote
  the set of all the subsets of $\{1,\ldots,q^\ast\}$ with cardinality
  $r_{st}$. Let $\hat{m}$ be a random subset of
  $\{1,\ldots,q^\ast\}$, which is uniformly distributed on
  $\Mr$. Consider such covariance matrix of $(\bX_s, \bX_t)^\trans$:
  \[\bSigma_{\hat{m}}^\ast =
  \begin{pmatrix}
    \vI_{q_s\times q_s}& \bSigma_{st,\hat{m}}^\ast\\
    \bSigma_{st,\hat{m}}^{\ast\trans} & \vI_{q_t\times q_t}
  \end{pmatrix},\quad \text{ and } \bSigma_{st,\hat{m}}^\ast =
  (\sigma_{st,ij})_{q_s\times q_t},
\]
with  
\[\sigma_{st,i_1i_1}=\rho = c(\log d_{st}/n)^{1/2},\ \sigma_{st,i_2i_2} =
\sigma_{st,ij} = 0\quad (i_1\in \Mr(s,t),\ i_2\in \Mr(s,t)^c,\ j\neq
i).\] Here $c$ is a positive constant which will be specified
later. Without loss of generality, suppose $q_s \leq q_t$. Let's
reorder the variables $\bX =
(X_{s,1},X_{t,1},\ldots,X_{s,q_s},X_{t,q_s},\ldots,
X_{t,q_t})^\trans$. Then the covariance matrix of $\bX$ is 
$\bSigma_{\hat{m}} = \diag(A(i),\ldots,A(i),\vI_{q_t-q_s})$, with 
\[
  A(i) =
  \begin{pmatrix}
    1 & \rho\\
   \rho& 1
  \end{pmatrix} \text{ if } i \in \hat{m}; \quad \text{and } A(i)= \vI_2
  \text{ if } i \in \hat{m}^c.
\]
It is easy to see that the precision matrix is $\bOmega_{\hat{m}} =
\diag(B(i),\ldots,B(i),\vI_{q_t-q_s})$, with 
\[
  A(i) = \frac{1}{1-\rho^2}
  \begin{pmatrix}
    1 & -\rho\\
   -\rho& 1
  \end{pmatrix} \text{ if } i \in \hat{m}; \quad \text{and } A(i)= \vI_2
  \text{ if } i \in \hat{m}^c.
\]
We construct a class of $\bSigma$: $\Qr = \{\bSigma_{\hat{m}},
\hat{m}\in \Mr(s,t)\}$. Let $\bSigma_0 = \vI$, and $\bSigma_1$ be
uniformly distributed on $\Qr$. Let $\mu_\rho$ be the distribution of
$\bSigma_1$. It is a measure on $\{\Delta\in \Sr(r_{st},s,t):\ \lVert
\Delta \rVert_F^2 =  r_{st} \rho^2\}$. Let $dP_a(\bX)$ be the likelihood
function given $\bSigma_a$, $a=0,1$. Define
\[L_{\mu_\rho}(\bX) = \E_{\mu_{\rho}}\left\{ \frac{dP_1(\bX)}{dP_0(\bX)} \right\},\] 
where $\E_{\mu_{\rho}}$ is the expectation on $\bSigma_{\hat{m}}$. By
the arguments in Section 7.1 in \citet{BaraudNon2002}, it suffices to
show that $\E_0(L_{\mu_\rho}^2) \leq 1 + o(1)$. 

We have
\[L_{\mu_\rho} = \E_{\hat{m}} \left[\prod_{k=1}^n \frac{1}{\lvert
    \bSigma_{\hat{m}} \rvert^{1/2}} \exp\left\{ -\frac{1}{2}
    \bX_k^\trans(\bOmega_{\hat{m}}-\vI) \bX_k \right\} \right]\]
Let $E_0$ be the expectation on $\bX_k$ with $\normal(0,\vI)$
distribution. Then
\begin{align*}
  \E_0 (L^2_{\mu_{\rho}}) & = \E_0\left[ \frac{1}{{q^\ast \choose r_{st}}}
    \sum_{m\in \Mr}\left\{ \prod_{k=1}^n
      \frac{1}{\lvert\bSigma_m\rvert^{1/2}} \exp\left( -\frac{1}{2}
        \bX_k^\trans(\bOmega_m-\vI) \bX_k \right) \right\}
  \right]^2\\
  & = \frac{1}{{q^\ast \choose r_{st}}^2} \sum_{m,m'\in\Mr} \E_0 \left[
    \prod_{k=1}^n \frac{1}{\lvert \bSigma_m
      \rvert^{1/2}}\frac{1}{\lvert \bSigma_{m'}\rvert^{1/2}} 
    \exp\left\{ -\frac{1}{2}\bX_k^\trans(\bOmega_m +
      \bOmega_{m'}-2\vI) \bX_k   \right\} \right]
\end{align*}
Set $\bOmega_{m} + \bOmega_{m'} - 2\vI = (a_{s_1,s_2,i,j})$,
$s_1,s_2\in\{s,t\}$, $i=1,\ldots,q_{s_1}$, and $j=1,\ldots,q_{s_2}$.
If $i\in m\cap m'$, $a_{ss,ii} = a_{tt,ii} = 2\rho^2/(1-\rho^2)$,
$a_{st,ii} = -2\rho/(1-\rho^2)$. If $i\in m\Delta m'$, $a_{ss,ii} =
a_{tt,ii} = 1/(1-\rho^2)-1$, $a_{st,ii}= -\rho/(1-\rho^2)$. Otherwise,
$a_{s_1,s_2,i,j} = 0$. Now let $t = \lvert m\cap m'\rvert$. By simple
calculations, we have
\begin{align*}
  \E_0(L_{\mu_\rho}^2) & = \frac{1}{{q^\ast \choose r_{st}}^2}
  (1-\rho^2)^{-nr_{st}}\sum_{t=0}^{r_{st}} {q^\ast \choose r_{st}} {r_{st}\choose
    t} {q^\ast-r_{st} \choose r_{st}-t} 1^{tn} (1-\rho^2)^{(2r_{st}-t)n/2}\\
  & = {q^\ast \choose r_{st}}^{-1} \sum_{t=1}^{r_{st}} {r_{st}\choose t}{q^\ast
  -r_{st}\choose r_{st}-t} (1-\rho^2)^{-tn/2}\\
  & \leq q^{\ast r_{st}}\frac{(q^\ast-r_{st})!}{q^\ast !} \sum_{t=0}^{r_{st}}
  {r_{st}\choose
    t}\left(\frac{s}{q^\ast}\right)^t\left(\frac{1}{1-\rho^2}\right)^{tn/2}\\
  & = (1+o(1)) \left( 1+ \frac{r_{st}}{q^\ast
      (1-\rho^2)^{n/2}}\right)^{r_{st}}\\
  & \leq \exp\{r_{st}\log(1+r_{st} q^{\ast c^2-1})\}(1+o(1))\\
  & \leq \exp(r_{st}^2 q^{\ast c^2 -1}) (1+o(1)) 
\end{align*}
For sufficiently small $c^2$, $\E_0(L_{\mu_\rho}^2) = 1+o(1)$, and the
theorem is proved.
\end{proof}

\begin{proof}[Proof of Theorem~\ref{th:T2null}]

Define
\begin{align*}
  T_{st} = n\max_{ij} \rho_{\vareps,st}, & \quad \hat{T}_{st} = \max_{i,j}\frac{n(\hat{\sigma}_{\vareps,st,ij} -
    \sigma_{\vareps,st,ij})^2}{\theta_{\vareps,st,ij}}\\
 \tilde{T}_{st} = \max_{ij}\frac{n(\tilde{\sigma}_{\vareps,st,ij} -
    \sigma_{\vareps,st,ij})^2}{\theta_{\vareps,st,ij}}, & \quad 
 \breve{T}_{st} = \max_{i,j}\frac{n(\breve{\sigma}_{\vareps,st,ij} -
    \sigma_{\vareps,st,ij})^2}{\theta_{\vareps,st,ij}},
\end{align*}
where
\[\hat{\sigma}_{\vareps,st,ij} = \sum_{k=1}^n
      \hat{\vareps}_{k,s,i}\hat{\vareps}_{k,t,j}/n,\quad
      \tilde{\sigma}_{\vareps,st,ij} = \sum_{k=1}^n
      \tilde{\vareps}_{k,s,i}\tilde{\vareps}_{k,t,j}/n, \quad
      \breve{\sigma}_{\vareps,st,,ij} = \sum_{k=1}^n
      \vareps_{k,s,i}\vareps_{k,t,j}/n. \]
By Condition (2.3) and $\max_{i} \lvert
\tilde{\sigma}_{\vareps,ss,ii} - \sigma_{\vareps,ss,ii} \rvert =
O_P\{(\log q_0)^{-1-\alpha_0}\}$,
\begin{multline*}
  \lvert \hat{\theta}_{\vareps,st,ij} - \theta_{\vareps,st,ij}\rvert
  \leq \lvert \hat{\sigma}_{\vareps,ss,ii}\hat{\sigma}_{\vareps,tt,jj}
  - \sigma_{\vareps,ss,ii}\sigma_{\vareps,tt,jj} \rvert \\\leq
  O_P\left\{ \max(\lvert \hat{\sigma}_{\vareps,ss,ii} -
    \sigma_{\vareps,ss,ii} \rvert,\lvert \hat{\sigma}_{\vareps,tt,jj}
    - \sigma_{\vareps,tt,jj} \rvert ) \right\} = O_P\{(\log
  q_0)^{-1-\alpha_0}\}.
\end{multline*}
By (C2.2), 
$\theta_{\vareps,st,ij}\geq 1/c_0^2$. 
Thus with proability tending to one,
\begin{align*}
  \lvert T_{st}-\hat{T}_{st} \rvert & \leq C\hat{T}_{st}(\log q_0)^{-1-\alpha_0}\\
  \lvert \hat{T}_{st} - \tilde{T}_{st} \rvert & \leq  C(\log
  q_0)^{-1-\alpha_0}\\
  \lvert \breve{T}_{st} - \tilde{T}_{st} \rvert & \leq Cn (\max_{1\leq i \leq
    q_s} \bar{\vareps}_{s,i}^4  + \max_{1\leq j \leq
    q_t} \bar{\vareps}_{t,j}^4) + Cn^{1/2} \breve{T}_{st}^{1/2}(\max_{1\leq i \leq
    q_s} \bar{\vareps}_{s,i}^2  + \max_{1\leq j \leq
    q_t} \bar{\vareps}_{t,j}^2).
\end{align*}
The second inequality above is by Condition (C2.3). Note that
\[\max_{1\leq i\leq q_s} \lvert \bar{\vareps}_{s,i} \rvert +
\max_{1\leq t\leq q_t} \lvert \bar{\vareps}_{t,j} \rvert = O_P((\log q_0/n)^{1/2}),\]
Thus, it suffices to show that for any $x\in \RR$,
\[\P\{\breve{T}_{st}\leq 2\log d_{st} - 2\log\log(d_{st}) + x\} \rightarrow
exp\left\{-\frac{1}{\pi^{1/2}}\exp\left(-\frac{x}{2}\right)
\right\}. \]
The rest of the proof is similar to the proof of Theorem~\ref{th:Tnull}.
\end{proof}

\begin{proof}[Proof of \propref{pr:ds}]

We first decompose
$\hat{\sigma}_{\vareps,st,ij}$ as follows:
\[\frac{1}{n}\sum_{k=1}^n \hat{\vareps}_{k,s,i}\hat{\vareps}_{k,t,j} =
\frac{1}{n}\sum_{k=1}^n \tilde{\vareps}_{k,s,i}\tilde{\vareps}_{k,t,j}
- A_{1,s,t,i,j} - A_{2,s,t,i,j} + A_{3,s,t,i,j},\] where
  \begin{align*}
    A_{1,s,t,i,j} & = \frac{1}{n}\sum_{k=1}^n\tilde{\vareps}_{k,s,i}
    (\bX_{k,t,-j}-\bar{\bX}_{t,-j})^\trans
    (\hat{\bbeta}_{t,j}-\bbeta_{t,j}) \\
    A_{2,s,t,i,j} & =  \frac{1}{n}\sum_{k=1}^n\tilde{\vareps}_{k,t,j}
    (\bX_{k,s,-i}-\bar{\bX}_{s,-i})^\trans
    (\hat{\bbeta}_{s,i}-\bbeta_{s,i})\\
    A_{3,s,t,i,j} & =  (\hat{\bbeta}_{s,i}-\bbeta_{s,i})^\trans
   \hat{\bSigma}_{st,-i,-j}(\hat{\bbeta}_{t,j}-\bbeta_{t,j})
  \end{align*}
We bound each term in order. 

Note that for all $s,t\in\{1,\ldots,p\}$,
\begin{align}
\lvert A_{1,s,t,i,j}\rvert \leq & 
\left\vert\frac{1}{n}\sum_{k=1}^n\tilde{\eps}_{k,s,i}
(\bX_{k,t,-j}-\bar{\bX}_{k,t,-j})-\Cov(\tilde{\vareps}_{k,s,i},\bX_{k,t,-j})\right\rvert_\infty\left\lvert
  \hat{\bbeta}_{t,j}-\beta_{t,j} \right\rvert_1 \nonumber\\
& \phantom{a} + \left\lvert\Cov(\tilde{\vareps}_{k,s,i}, \bX_{k,s,-j}^\trans)
\right (\hat{\bbeta}_{t,j}-\bbeta_{t,j})\rvert.\label{eq:a1st}
\end{align}
And also
for any $M>0$,
there exists sufficiently large $C>0$ such that
\[\P\left\{\max_{1\leq i\leq q_s,1\leq j\leq q_t} \left\lvert
    \frac{1}{n} \sum_{k=1}^n \tilde{\vareps}_{k,s,i}(X_{k,t,-j} -
    \bar{X}_{t,-j}) - \Cov(\tilde{\vareps}_{k,s,i},\bX_{k,t,-j})
  \right\rvert_\infty \geq C(\log d_{st}/n)^{1/2} \right\} =
O(q_0^{-M}).\]

Recall the definition of $a_{v,1}$ and $a_{v,2}$ in (\ref{eq:av}).

When $s=t$ and $i=j$, $\Cov(\tilde{\vareps}_{k,s,i},\bX_{k,s,-i})=\bzero$.
Therefore
\[\max_{1\leq i\leq q_s}\left\lvert A_{1,s,s,i,i}\right\rvert = O_{P}\left\{a_{s,1}(\log
  q_s/n)^{1/2}\right\}.\]

When $s\neq t$, under $\uH_{0,st}$,
$\Cov(\tilde{\vareps}_{k,s,i},\bX_{k,t,-j})=\bzero$. Therefore
\[\max_{1\leq i\leq q_s, 1\leq j\leq q_t} \left\lvert 
  A_{1,s,t,i,j}\right\rvert = O_{P}\left\{a_{t,1}(\log
  d_{st}/n)^{1/2}\right\}.\]

When $s\neq t$ and under $\uH_{1,st}$,
\begin{align*}
 \left\lvert\Cov(\tilde{\vareps}_{k,s,i}, \bX_{k,s,-j}^\trans)
\right (\hat{\bbeta}_{t,j}-\bbeta_{t,j})\rvert & \leq 
\left\{\Var(\tilde{\vareps}_{k,s,i})\right\}^{1/2}
\left\{ (\hat{\bbeta}_{t,j}-\bbeta_{t,j})^\trans\bSigma_{tt,-j,-j}(\hat{\bbeta}_{t,j}-\bbeta_{t,j})
\right\}^{1/2}\\
& \leq c_0 a_{t,2}
\end{align*}

Therefore, 
\[\max_{1\leq i\leq q_s, 1\leq j\leq q_t} \left\lvert 
  A_{1,s,t,i,j}\right\rvert = O_{P}\left[ a_{t,1}(\log d_{st}/n)^{1/2}
  + a_{t,2}\right] \] 
We can show bounds for $A_{2,s,t,i,j}$
similarly.

Next, we bound
$A_{3,s,t,i,j}$.  
\begin{align*}
  A_{3,s,t,i,j}  = & (\hat{\bbeta}_{k,s,i}-\bbeta_{k,s,i})^\trans
  (\hat{\bSigma}_{st,-i,-j} -
  \bSigma_{st,-i,-j})(\hat{\bbeta}_{k,t,j}-\bbeta_{k,t,j})\\
   & + (\hat{\bbeta}_{k,s,i}-\bbeta_{k,s,i})^\trans
  \bSigma_{st,-i,-j}(\hat{\bbeta}_{k,t,j}-\bbeta_{k,t,j})
\end{align*}
It is easy to show that for any $M>0$,
there exists sufficiently large $C>0$ such that
\[\P\left\{\max_{1\leq i\leq q_s,1\leq j \leq q_t} \lvert
  \hat{\sigma}_{st,ij} - \sigma_{st,ij}\rvert \geq
  C(\log d_{st}/n)^{1/2}\right\} = O(q_0^{-M}).\]

When $s\neq t$, under $\uH_{0,st}$,
$\bSigma_{st,-i,-j}=\bzero$; and under $\uH_{1,st}$,
$\lVert \bSigma_{st,-i,-j}\rVert_2 \leq c_0$.
By the inequality
\begin{multline}
  \label{eq:a3st}
  \left\lvert (\hat{\bbeta}_{k,s,i}-\bbeta_{k,s,i})^\trans
    (\hat{\bSigma}_{st,-i,-j} -
    \bSigma_{st,-i,-j})(\hat{\bbeta}_{k,t,j}-\bbeta_{k,t,j})\right\rvert
  \\\leq \left\lvert \hat{\bSigma}_{st,-i,-j} - \bSigma_{st,-i,-j}
  \right\rvert_\infty \lvert \hat{\bbeta}_{k,s,i}-\bbeta_{k,s,i}
  \rvert_1 \lvert \hat{\bbeta}_{k,t,j}-\bbeta_{k,t,j}\rvert_1,
\end{multline}
we have under $\uH_{0,st}$,
\[ \max_{1\leq i\leq q_s,1\leq j \leq q_t}\left\lvert A_{3,s,t,i,j}\right\rvert
= O_{P}\left\{a_{s,1}a_{t,1}(\log d_{st}/n)^{1/2}\right\};\]
and under $\uH_{1,st}$, 
\[\max_{1\leq i\leq q_s,1\leq j \leq q_t} \left\lvert A_{3,s,t,i,j}\right\rvert
= O_{P}\left\{a_{s,1}a_{t,1}(\log d_{st}/n)^{1/2} + a_{s,2}a_{t,2}\right\}.\]

When $s=t$, we can show by similar argument that under $\uH_{0,st}$,
\[\max_{1\leq i\leq q_s,1\leq j \leq q_t}\lvert
A_{3,s,s,i,j}\rvert = O_P\left\{ a_{s,1}^2(\log q_s/n)^{1/2} \right\};\]
and under $\uH_{1,st}$,
\[\max_{1\leq i\leq q_s,1\leq j \leq q_t} \lvert
A_{3,s,s,i,j}\rvert = O_P\left\{ a_{s,1}^2(\log q_s/n)^{1/2}+ a_{s,2}^2 \right\}.\]

Therefore, when $s\neq t$, under $\uH_{0,st}$
\begin{equation}
  \frac{1}{n} \sum_{k=1}^n \hat{\vareps}_{k,s,i}\hat{\vareps}_{k,t,j}
    = \frac{1}{n} \sum_{k=1}^n
  \tilde{\vareps}_{k,s,i}\tilde{\vareps}_{k,t,j} + 
  O_P\left\{(a_{s,1}a_{t,1} + a_{s,1} + a_{t,1})\left(\frac{\log
        d_{st}}{n}\right)^{1/2} \right\};\label{eq:vareps1}
\end{equation}
and under $\uH_{1,st}$,
\begin{equation}
\frac{1}{n} \sum_{k=1}^n \hat{\vareps}_{k,s,i}\hat{\vareps}_{k,t,j} =
\frac{1}{n} \sum_{k=1}^n
\tilde{\vareps}_{k,s,i}\tilde{\vareps}_{k,t,j} + O_P\left\{
  (a_{s,1}a_{t,1}+ a_{s,1} + a_{t,1})\left(\frac{\log
        d_{st}}{n}\right)^{1/2} + (a_{s,2}a_{t,2} + a_{s,2} + a_{t,2})
  \right\}.\label{eq:vareps2}
\end{equation} 
When $s=t$ and
$i=j$, under $\uH_{0,st}$,
\begin{equation}
\frac{1}{n} \sum_{k=1}^n \hat{\vareps}_{k,s,i}^2 = \frac{1}{n}
\sum_{k=1}^n \tilde{\vareps}_{k,s,i}^2 + O_P\left\{(a_{s,1}^2 +
  a_{s,1})\left(\frac{\log q_s}{n}\right)^{1/2}\right\};\label{eq:vareps3}
\end{equation}
and under $\uH_{1,st}$,
\begin{equation}
\frac{1}{n} \sum_{k=1}^n \hat{\vareps}_{k,s,i}^2 = \frac{1}{n}
\sum_{k=1}^n \tilde{\vareps}_{k,s,i}^2 + O_P\left\{(a_{s,1}^2 +
  a_{s,1})\left(\frac{\log q_s}{n}\right)^{1/2} + a_{s,2}^2
\right\}.\label{eq:vareps4}
\end{equation}

It then suffices to show that for $v=1,\ldots,p$,
$a_{v,2} = O_P\{ (\log q_0)^{-1-\alpha_0}\}$ and $a_{v,1} =
O_P\{n(\log q_0)^{-2-\alpha_0}\}$.

By the proof of Proposition 4.1 in \citet{LiuGaussian2013}, page 2975,
with probability tending to 1, 
\[\lvert
\vD_{v,i}^{-1/2}\hat{\bSigma}_{vv,-i,-i}\hat{\bbeta}_{v,i}-\vD_{v,i}^{-1/2}\vb_{v,i}\rvert_\infty
\leq \lambda_{v,i}(2).\]
And it follows that
\[\lvert\vD_{v,i}^{-1/2}\hat{\bSigma}_{vv,-i,-i}(\hat{\bbeta}_{v,i}-\bbeta_{v,i})\rvert_\infty
\leq 2 \lambda_{v,i}(2).\]
And also by
\[\max_{1\leq i \leq q_v}\lvert\bbeta_{v,i}\rvert_0  =
o\left\{\lambda_{\min}(\bSigma)(n/\log q_0)^{1/2}\right\}\]
and the inequality 
\[\bdel^\trans\hat{\bSigma}_{vv,-i,-i}\bdel \geq
\lambda_{\min}(\bSigma_{-i,-i})\lvert \bdel\rvert_2^2- O_P\{(\log
q_0/n)^{1/2}\}\lvert \bdel \rvert_1,\]
we can see that the restricted eigenvalue assumption RE$(s,s,1)$ in
\citet{BickelSimultaneous2009}, page 1711, holds
with $\kappa(s,s,1) \geq c\lambda_{\min}(\bSigma)^{1/2}$. And by the
proof of Theorem 7.1 in \citet{BickelSimultaneous2009},
\[a_{v,1} = O_P\left\{ \max_{1\leq i\leq
    q_v}\lvert\bbeta_{v,i}\rvert_0 (\log q_v/n)^{1/2} \right\},\quad 
a_{v,2} = O_P\left[ \left\{\max_{1\leq i\leq
    q_v}\lvert\bbeta_{v,i}\rvert_0 (\log
q_n/n)\right\}^{1/2}\{\lambda_{\min}(\bSigma)\}^{-1}  \right]\]
\end{proof}

\begin{proof}[Proof of Proposition~\ref{pr:lasso}]
  By Proof of Proposition 4.2 in \citet{LiuGaussian2013}, we have with
  probability tending to one, 
  \[\lvert
  \vD_{v,i}^{-1/2}\hat{\bSigma}_{vv,-i,-i}
  \vD_{v,i}^{-1/2}(\hat{\balpha}_{v,i}-\vD_{v,i}^{1/2}\bbeta_{v,i})
  \rvert_\infty \leq 2\lambda_{v,i}(\delta).\] Then by
  (\ref{eq:vareps1}), (\ref{eq:vareps2}), (\ref{eq:vareps3}),
  (\ref{eq:vareps4}), and the proof of Theorem 7.2 in
  \citet{BickelSimultaneous2009}, we get Condition (2.3) holds for
  $\bbeta_{v,i}(\delta)$ with $\delta>2$.
\end{proof}

\newpage

\bibliographystyle{asa}
\bibliography{jichun}



\newpage

\setcounter{section}{0}
\setcounter{equation}{0}
\setcounter{figure}{0}
\setcounter{table}{0}
\renewcommand\theequation{\arabic{equation}}


 
\renewcommand{\thesection}{S.\arabic{section}}
\renewcommand{\thesubsection}{\thesection.\arabic{subsection}}
 
%
\makeatletter 
\def\tagform@#1{\maketag@@@{(S\ignorespaces#1\unskip\@@italiccorr)}}
\makeatother
 
\makeatletter
\makeatletter \renewcommand{\fnum@figure}
{\figurename~S\thefigure}
\makeatother
 
\renewcommand{\bibnumfmt}[1]{[S#1]}
\renewcommand{\citenumfont}[1]{\textit{S#1}}
\renewcommand{\figurename}{Figure}

\setcounter{page}{1}
 


\begin{center}
\textbf{\Large Supplementary Material for ``High Dimensional Tests
  for Functional Brain Networks''}
\end{center}

\section{Proof of Other Theorems}

\begin{lemma}\label{lm:extreme}
  For any fixed integer $D\geq 1$ and real number $x\in \RR$,
\[\sum_{1\leq k_1<\ldots<k_D \leq K} \P\left( \lvert
  \vN_D\rvert_{\min} \geq y(d_{st},x)^{1/2} \pm \eps_n(\log
  q_0)^{-1/2} \right)= \frac{1}{D!} \left\{ \frac{1}{\sqrt{\pi}}
  \exp\left( -\frac{x}{2}\right) \right\}^D (1+o(1)).\]
\end{lemma}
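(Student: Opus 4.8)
The plan is to read this identity as the computation of the $D$-th factorial moment of the exceedance count $W_{st} = \Card\{(i,j): n\hat{\rho}_{st,ij}^2 \geq y(d_{st},x)\}$, where $y(d_{st},x) = 2\log d_{st} - \log\log d_{st} + x$, so that $\{T_{st}^{(1)} \leq x\} = \{W_{st} = 0\}$. The target value $\frac{1}{D!}\lambda^D$ with $\lambda = \pi^{-1/2}e^{-x/2}$ is exactly the $D$-th factorial moment of a Poisson$(\lambda)$ law; once this lemma holds for every fixed $D$, the method of moments forces $W_{st} \Rightarrow \mathrm{Poisson}(\lambda)$ and hence $\P(W_{st}=0)\to e^{-\lambda}$, the Gumbel limit \eqref{eq:null_dist}. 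Thus the lemma isolates the single analytic step, and everything reduces to evaluating $\sum_{k_1<\dots<k_D}\P(|\vN_D|_{\min}\geq t)$ at $t=y(d_{st},x)^{1/2}\pm\eps_n(\log q_0)^{-1/2}$.

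First I would split the $\binom{K}{D}$ tuples of pairs, $K=d_{st}$. Call a tuple \emph{separated} if all its row indices avoid $\Mr_s$, all its column indices avoid $\Mr_t$, and the approximating Gaussians at the chosen pairs are pairwise negligibly correlated; on such tuples a normal-comparison (Slepian-type) bound lets $\vN_D$ factorize, so $\P(|\vN_D|_{\min}\geq t) = \{\P(|N|\geq t)\}^D(1+o(1))$. Applying the Mills-ratio asymptotic $\P(|N|\geq t) = \frac{2}{\sqrt{2\pi}\,t}e^{-t^2/2}(1+o(1))$ at $t^2 = y(d_{st},x)$ gives $\P(|N|\geq t) = (\lambda/d_{st})(1+o(1))$. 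The perturbation $\pm\eps_n(\log q_0)^{-1/2}$ shifts $t^2$ by $O\big(t\,\eps_n(\log q_0)^{-1/2}\big)=O(\eps_n)$, since $t\asymp\sqrt{2\log d_{st}}$ and $\log d_{st}\leq 2\log q_0$, so it multiplies the tail only by $e^{O(\eps_n)}=1+o(1)$. As the separated tuples number $\binom{K}{D}(1+o(1)) = \frac{d_{st}^D}{D!}(1+o(1))$, their total is $\frac{d_{st}^D}{D!}\cdot(\lambda/d_{st})^D(1+o(1)) = \frac{1}{D!}\lambda^D(1+o(1))$, the claimed value.

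The main obstacle is showing the remaining tuples contribute $o(1)$. Tuples that use an index in $\Mr_s,\Mr_t$ or in $\Dr^{(1)}_s,\Dr^{(1)}_t$ are rare: since each of these sets has cardinality $o(q_v)$, the pairs touching such an index number $o(d_{st})$, so the $D$-tuples using one form an $o(1)$ fraction of $\binom{K}{D}$ and vanish after the crude bound $\P(|\vN_D|_{\min}\geq t)\leq\{\P(|N|\geq t)\}^{D-1}\lesssim(C/d_{st})^{D-1}$. The delicate case is a tuple of otherwise good indices that contains one non-negligibly but not highly correlated pair (shared index with $|\rho_{tt,jj'}|$ or $|\rho_{ss,ii'}|$ non-negligible, correlation $r$ with $|r|\leq\rho_0<1$). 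Here I would use the bivariate tail bound $\P(|N_1|\geq t,|N_2|\geq t)\lesssim t^{-2}\exp\{-t^2/(1+\rho_0)\} = O(d_{st}^{-\nu})$ with $\nu = 2/(1+\rho_0)>1$. By (C1.1) the number of such clustered pairs is $o(d_{st}\,q_0^{\gamma})$ for every $\gamma>0$, so the clustered contribution is $O(d_{st}^{D-1}q_0^{\gamma})\cdot O(d_{st}^{-\nu})\cdot O(d_{st}^{-(D-2)}) = O(q_0^{\gamma}d_{st}^{1-\nu})$; since $d_{st}\asymp q_0^{2}$ this is $O(q_0^{\gamma+2-2\nu})\to 0$ once $\gamma$ is chosen below $2(\nu-1)=2(1-\rho_0)/(1+\rho_0)$, which is legitimate because (C1.1) is assumed for all $\gamma>0$. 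Adding the separated main term to these negligible remainders yields the identity, and the two steps requiring the most care are the Gaussian factorization on separated tuples and the correlated bivariate tail estimate used to tame the clustered ones.
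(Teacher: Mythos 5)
Your architecture matches the paper's: the main term comes from tuples whose approximating Gaussians are pairwise negligibly correlated (the paper's set $\Ir_0^c$, characterized via the 3G0E/4G0E/4G1E graph condition), evaluated by a Mills-ratio computation at $t^2=y(d_{st},x)$ with the $\pm\eps_n(\log q_0)^{-1/2}$ perturbation absorbed into the $1+o(1)$; the clustered remainder is killed by a bivariate normal tail of order $d_{st}^{-\nu}$ with $\nu>1$ together with the (C1.1) count $o(d_{st}q_0^{\gamma})$ of non-negligibly correlated pairs, choosing $\gamma$ small at the end. That part of your argument is sound and is essentially the paper's $\Ir_{0l2}$ analysis.

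The genuine gap is your disposal of tuples that touch $\Mr_s$, $\Mr_t$, $\Dr^{(1)}_s$ or $\Dr^{(1)}_t$. The number of $D$-tuples containing at least one such pair is $o(d_{st})\cdot\binom{d_{st}}{D-1}=o(d_{st}^{D})$, and multiplying by your crude bound $\{\P(\lvert N\rvert\geq t)\}^{D-1}\lesssim (C/d_{st})^{D-1}$ yields $o(d_{st})$, which does \emph{not} vanish; you would need a $D$-fold factorization, which is exactly what fails when the tuple contains a highly correlated or uncontrolled pair. (The $(D-1)$-fold product bound itself also needs justification for correlated Gaussians.) The paper sidesteps this entirely: those indices are removed \emph{before} the Gaussian approximation, in the proof of Theorem~\ref{th:Tnull}, where $\P\{\tilde{T}_{\Or_{st}\setminus\Ar_{st}}\geq y(d_{st},x)\}\leq o(d_{st})\cdot Cd_{st}^{-1}=o(1)$ by a single-exceedance union bound via Lemma~\ref{lm:dist}; the extreme-value lemma is then invoked only over the $d_1\asymp d_{st}$ pairs in $\Ar_{st}$. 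If you insist on computing factorial moments of the exceedance count over \emph{all} pairs, as your Poisson framing suggests, you must either restrict the count to $\Ar_{st}$ first (showing the restriction changes nothing in distribution) or find a genuinely $o(d_{st}^{-D})$-type bound for the bad tuples; the latter is not available.
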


\begin{proof}[Proof of Theorem~\ref{th:Tnull}]
Without loss of generality, we assume that $\mu_{s,i}=\mu_{t,j} = 0$,
$\sigma_{ss,ii} =\sigma_{tt,jj} = 1$, for $i = 1,\ldots,q_s$, and $j =
1,\ldots,q_t$. To simplify notation, let $T =n\cdot \max_{ij}\hat{\rho}_{st,ij}$.

Define
 \[\hat{T} = \max_{i,j} \frac{ (\hat{\sigma}_{st,ij}-\sigma_{st,ij})^2
 }{\theta_{st,ij}/n},\quad \text{and } \tilde{T} = \max_{i,j} \frac{
   (\tilde{\sigma}_{st,ij}-\sigma_{st,ij})^2 }{\theta_{st,ij}/n}\] By
 Lemma~\ref{lm:theta_dev}, with probability at least $1-O(q_0^{-1} +
 n^{-\eps/8})$,
\begin{align*}
  \lvert T - \hat{T}\rvert & \leq C\hat{T} \frac{1}{(\log q_0)^2}\\
  \lvert \hat{T} - \tilde{T} \rvert & \leq \max_{ij} \left\lvert
    \frac{(\hat{\sigma}_{st,ij}-\tilde{\sigma}_{st,ij})(\hat{\sigma}_{st,ij}
      + \tilde{\sigma}_{st,ij} - 2\sigma_{st,ij})}{\theta_{st,ij}/n}
  \right\rvert\\
  & \leq \max_{ij} \left\lvert
    \frac{(\bar{X}_{s,i}\bar{X}_{t,j})\left(2\tilde{\sigma}_{st,ij} -
        2\sigma_{st,ij} - \bar{X}_{s,i}\bar{X}_{t,j}\right)
    }{\theta_{st,ij}/n}
  \right\rvert\\
  & \leq n^{1/2}\tilde{T}^{1/2}\left(\max_i \bar{\bX}_{s,i}^2  + \max_{j}
    \bar{\bX}_{t,j}^2\right) + 2n \left(\max_{i} \bar{\bX}_{s,i}^4 +
    \max_j \bar{\bX}_{t,j}^4\right)
\end{align*}
By similar arguments as (\ref{eq:Xbar1}) and (\ref{eq:Xbar2}),
$\max_i\lvert\bar{X}_{s,i}\rvert + \max_j \lvert \bar{X}_{t,j} \rvert
= O_P\left\{(\log q_0/n)^{1/2}\right\}$. Set $y(d_{st},x) = 2\log
d_{st} - \log\log d_{st} + x$.  By \lemref{lm:dist}, it suffices to
show that for any $x\in \RR$,
\[\P\{\tilde{T} \leq y(d_{st},x)\} \rightarrow
\exp\left\{-\frac{1}{\pi^{1/2}}\exp\left(-\frac{x}{2}\right)\right\}.\]
as $n$ and $d\rightarrow \infty$. 

Let 
\begin{align*}
\Or_{st} & = \{(i,j):\ 1\leq i \leq q_s, 1\leq j\leq q_t\},\\
\Ar_{st} & =\{(i,j): i\not\in \Mr_s, i\not\in \Dr_s^{(1)},  j\not\in
\Mr_t,  j\not\in \Dr_t^{(1)} \}.
\end{align*}
Let 
\[ \tilde{T}_{\Ar_{st}} = \max_{(i,j)\in\Ar_{st}}
\frac{n\tilde{\sigma}_{st,ij}^2}{\theta_{st,ij}},\quad
 \tilde{T}_{\Or_{st}\setminus\Ar_{st}} = \max_{(i,j)\in\Or_{st}\setminus\Ar_{st}}
\frac{n\tilde{\sigma}_{st,ij}^2}{\theta_{st,ij}}.\]
Then 
\[
\lvert \P\{\tilde{T}\geq y(d_{st},x)\} -\P\{\tilde{T}_{\Ar_{st}} \geq y(d_{st},x)\} \rvert
\leq \P\{\tilde{T}_{\Or_{st}\setminus\Ar_{st}} \geq y(d_{st},x)\}.
\] Note that
$\Card(\Or_{st}\setminus \Ar_{st}) = o(d_{st})$. Then by \lemref{lm:dist},
\[\P\left\{\tilde{T}_{\Or_{st}\setminus\Ar_{st}} \geq y(d_{st},x)
\right\} \leq o(d_{st})\cdot C d_{st}^{-1 } + o(1) = o(1). \] It suffies to
show that for any $x\in \RR$,
\[\P\{\tilde{T}_{\Ar_{st}} \leq y(d_{st},x)\} \rightarrow
\exp\left\{-\pi^{-1/2}\exp\left(-x/2\right)\right\}.\]
as $n$ and $q_0\rightarrow \infty$.

We arrange the indices $\{(i,j): (i,j)
\in \Ar_{st}\}$ in any ordering and set them as $\{(i_m,j_m): 1\leq m \leq
d_1\}$, with $d_1 \asymp d_{st}$. Let $\theta_{st,l} = \theta_{st,i_l j_l}$. For
$k=1,\ldots, n$, define
\begin{align*}
  Z_{k,l} & = X_{k,s,i_l} X_{k,t,j_l} - \sigma_{st,i_l j_l},\\
  \hat{Z}_{k,l} & = Z_{k,l}I(\lvert Z_{k,l} \rvert \leq \tau_n) -
  \E\{Z_{k,l}I(\lvert Z_{k,l} \rvert \leq \tau_n)\},\\
  \tilde{Z}_{k,l} & = Z_{k,l}-\hat{Z}_{k,l},\\
  V_l & = \sum_{k=1}^n Z_{k,l}/(n\theta_l)^{1/2},\\
  \hat{V}_l & = \sum_{k=1}^n \hat{Z}_{k,l}/(n\theta_l)^{1/2},\\
  \tilde{V}_l &  = \sum_{k=1}^n \tilde{Z}_{k,l}/(n\theta_l)^{1/2},
\end{align*}
where 
$\tau_n = 8\eta^{-1}\log(d_{st}+n)$ if (C1.2) holds, and $\tau_n
=n^{1/2}/(\log d_{st})^2$ if (C1.2*) holds.
Note that under the null, $\sigma_{st,i_1j_1}=0$. By Markov inequality,
under (C1.2),
\[\P(Z_{k,l} > \tau_n) \leq K_1^2\exp(-\eta/2\tau_n) \leq (d_{st}+n)^{-4},\]
and under (C1.2*),
\[\P(Z_{k,l}>\tau_n) \leq \tau_n^{-4-4\gamma_1-\eps}K_2^2 \leq
C\frac{(\log d_{st})^{8+8\gamma_1+2\eps}}{n^{2+2\gamma_1 + \eps/2}}. \]
The later inequality uses the independence between $X_{k,s,i_l}$ and
$X_{k,t,j_l}$ under $\uH_{0,st}$.

Therefore,
\begin{align}
  \P\left(\max_{1\leq l \leq d_1} \lvert V_l -\hat{V}_l \rvert\geq
    (\log d_{st}+n)^{-M} \right) & = \P \left\{ \max_{1\leq l\leq
      d_1} \lvert \tilde{V}_l\rvert \geq
    (\log d_{st}+n)^{-M} \right\} \label{eq:Vdiff1} \\
  & \leq \P \left( \max_{1\leq l \leq d_1}\max_{1\leq k\leq n}
    \lvert \tilde{Z}_{kl}\rvert > 0 \right) \nonumber\\
  & = n d_{st}\cdot \P(\lvert Z_{kl} \rvert >\tau_n)\nonumber\\
  & \leq O(d_{st}^{-1} + n^{-\eps/4}).\nonumber
\end{align}

By Bernstein's inequality, 
\begin{equation}\label{eq:maxVkhat}
\P\left (\max_{1\leq l\leq d_1} \lvert
\hat{V}_l^2 \rvert\geq (\log d_{st}+n)^2\right) \leq O(d_{st}^{-1}+n^{-\eps})
\end{equation}

It is easy to see that with probability larger than $1-O(d_{st}^{-1} +
n^{-\eps/4})$,
\begin{equation}\label{eq:Vdiff2}
\left\lvert \max_{1\leq l \leq d_1} V_l^2 - \max_{1\leq l\leq d_1}
  \hat{V}_l^2 \right\rvert \leq 2 \max_{1\leq l \leq d_1} \lvert
\hat{V}_l \rvert \max_{1\leq l\leq d_1} \lvert V_l - \hat{V}_l\rvert
+ \max_{1\leq l\leq d_1} \lvert V_l - \hat{V}_l\rvert^2 \leq (\log d_{st}+n)^{-M}.
\end{equation}

It suffices to prove that
for any fixed $x\in \RR$, as $n,d \rightarrow \infty$,
\begin{equation}
  \label{eq:Vdist}
  \P\left\{\max_{1\leq l\leq d_1} \hat{V}_l^2 \leq y(d_{st},x)\right\} \rightarrow
  \exp\left\{ -\pi^{-1/2}\exp(-x/2) \right\}.
\end{equation}

By Bonferroni inequality, for any integer $m$ with $o<m<K/2$,
\begin{multline}
  \label{eq:Bonferroni}
  \sum_{d=1}^{2m} (-1)^{d-1}\sum_{1\leq l_1<\ldots< l_d \leq d_1} 
\P\left(\bigcap_{j=1}^d E_{l_j}\right) \leq \P\left\{\max_{1\leq l\leq d_1}
\hat{V}_l^2 \geq y(d_{st},x)\right\} \\
\leq  \sum_{d=1}^{2m-1} (-1)^{d-1}\sum_{1\leq l_1<\ldots< l_d \leq d_1} 
\P\left(\bigcap_{j=1}^d E_{l_j}\right), 
\end{multline}
where $E_{l_j} = \{\hat{V}_{l_j}^2 \geq y(d_{st},x)\}$.  Let $\vW_{k,d} =
(\hat{Z}_{k,l_1}/\sqrt{\theta_{l_1}}, \ldots,
\hat{Z}_{k,l_d}/\sqrt{\theta_{l_d}})$, for $1\leq k\leq n$. Define
$\lvert \va \rvert_{\min} = \min_{1\leq i \leq d} \lvert a_i \rvert$
for any vector $\va \in \RR^d$. Then,
\[\P\left( \bigcap_{j=1}^d E_{l_j} \right) = \P \left( \big\lvert
n^{-1/2}\sum_{k=1}^n \vW_{k,d} \big\rvert_{\min} \geq y(d_{st},x)^{1/2} \right) \]
By Theorem 1 in \citet{ZaitsevAsymptotic1987}, we have
\begin{multline*}
  \P \left( \big\lvert n^{-1/2}\sum_{k=1}^n \vW_{k,d}
    \big\rvert_{\min} \geq y(d_{st},x)^{1/2} \right) \leq \P\left(
    \lvert \vN_d\rvert_{\min} \geq y(d_{st},x) - \eps_n(\log
    d_{st})^{-1/2} \right) \\
  + c_1d^{5/2} \exp\left( -\frac{n^{1/2}\eps_n}{c_2 d^{5/2} \tau_n
      (\log d_{st})^{1/2}} \right),
\end{multline*}
with $c_1, c_2 >0$ are constants, $\eps_n\rightarrow0$ sufficiently
slow, and $\vN_d$ is a $d-$dimensional normal vector with zero mean
and $\Cov(N_d) = \Cov(\vW_{1,d})$.  Since $d$ is a fixed integer, $\log
q_0 \asymp \log d_{st} = o(n^{1/5})$ and $\eps_n \rightarrow 0 $
sufficiently slow such that
\[ c_1d^{5/2} \exp\left( -\frac{n^{1/2}\eps_n}{c_2 d^{5/2} \tau_n
    (\log d_{st})^{1/2}}\right) = O(q_0^{-M}).\] Thus
\begin{multline*}
  \P\left\{ \max_{1\leq l \leq d_1} \hat{V}_l^2 \geq y(d_{st},x)
  \right\} \\\leq \sum_{d=1}^{2m-1}(-1)^{d-1}\sum_{1\leq l_1<
    \ldots<l_d\leq d_{st}}\P\left\{ \lvert \vN_d\rvert_{\min} \geq
    y(d_{st},x) - \eps_n(\log d_{st})^{-1/2}\right\} + o(1),
\end{multline*}
and similarly 
\begin{multline*}
  \P\left\{ \max_{1\leq l \leq d_1} \hat{V}_l^2 \geq y(d_{st},x)
  \right\} \\\geq \sum_{d=1}^{2m}(-1)^{d-1}\sum_{1\leq l_1<
    \ldots<l_d\leq d_{st}}\P\left\{ \lvert \vN_d\rvert_{\min} \geq
    y(d_{st},x) + \eps_n(\log d_{st})^{-1/2}\right\} - o(1),
\end{multline*}
By \lemref{lm:extreme}, we get
\begin{align*}
  \limsup_{n,q_0\rightarrow \infty} \P\left( \max_{1\leq l\leq d_1}
    \hat{V}_l^2 \right) & \leq \sum_{d=1}^{2m}(-1)^{d-1} \frac{1}{d!}
  \left\{ \frac{1}{\pi^{1/2}}
    \exp\left(-\frac{x}{2}\right)\right\}^d\\
  \liminf_{n,q_0\rightarrow \infty} \P\left( \max_{1\leq l\leq d_1}
    \hat{V}_l^2 \right) & \geq \sum_{d=1}^{2m-1}(-1)^{d-1}
  \frac{1}{d!}
  \left\{ \frac{1}{\pi^{1/2}} \exp\left(-\frac{x}{2}\right)\right\}^d\\
\end{align*}
for any integer $m$. Let $m\rightarrow \infty$, we prove the theorem.
\end{proof}

Without loss of generality, in this section, we assume $\E(X_{k,s,i})
= \E(X_{k,t,j})=0$, and $\Var(X_{k,s,i}) = \Var(X_{k,t,j}) = 1$ unless
otherwise stated.

\begin{proof}[Proof of Proposition~\ref{pr:C2typeI}]
Define $T_{st,ij}^{(1)} =  n \hat{\rho}_{st,ij}$.
By the proof of Theorem~\ref{th:Tnull}, under (C2) (or (C2*)), we have 
\begin{align*}
  & \P_{\uH_0}\{T_{st,ij}^{(1)} > q_\alpha + 2\log d_{st} -
  \log\log d_{st}\} \\
  = & (1+o(1)) \P(\lvert N_1 \rvert \geq q_\alpha + 2\log d_{st} -
  \log\log d_{st}) \\
  = & (1+o(1)) \frac{1}{d_{st}}\log\left(\frac{1}{1-\alpha}\right).
\end{align*}
Note that $T_{st}^{(1)}=\max_{i,j}
T_{st,ij}^{(1)} - 2\log(d_{st}) + \log\log(d_{st})$. Then
\[\P_{\uH_0}\{T_{st}^{(1)}>q(\alpha)\} \leq d_{st}\cdot
\P_{\uH_0} \{T_{st,ij}^{(1)}\geq c(d_{st},\alpha)\} \leq \log\left(
  \frac{1}{1-\alpha}\right).\]
\end{proof}

\begin{proof}[Proof of \lemref{lm:theta_dev}]
  Under $\uH_{0,st}$, $\theta_{st,ij} = \sigma_{ss,ii}\sigma_{tt,jj}$
  and $\hat{\theta}_{st,ij} =
  \hat{\sigma}_{ss,ii}\hat{\sigma}_{tt,jj}$. Thus
\[\frac{\lvert\hat{\theta}_{st,ij} -
  \theta_{st,ij}\rvert}{\sigma_{ss,ii}\sigma_{tt,jj}} \leq
\left\lvert\frac{\hat{\sigma}_{ss,ii}}{\sigma_{ss,ii}}-1\right\rvert
\cdot \left\lvert
  \frac{\hat{\sigma}_{tt,jj}}{\sigma_{tt,jj}}\right\rvert +
\left\lvert \frac{\hat{\sigma}_{tt,jj}}{\sigma_{tt,jj}} -
  1\right\rvert\]

It suffices to show that  
\begin{equation}
  \P \left\{ \max_i \left\lvert \frac{\hat{\sigma}_{ss,ii}}{\sigma_{ss,ii}}-1
    \right\rvert \geq \frac{C}{3}\frac{1}{(\log q_0)^2} \right\} =
  O(q_0^{-1}+n^{-\eps/8}),\label{eq:hsig}
\end{equation}
 and the same holds for $\hat{\sigma}_{tt,jj}$.

Without loss of generality, we assume that $\mu_{s,i}=\mu_{t,j} = 0$,
$\sigma_{ss,ii} =\sigma_{tt,jj} = 1$, for $i = 1,\ldots,q_s$, and $j =
1,\ldots,q_t$.
We have
\[ \frac{\hat{\sigma}_{ss,ii}}{\sigma_{ss,ii}}-1 =
\frac{1}{n}\sum_{k=1}^n \left\{ \bX_{k,s,i}^2 - \E(\bX_{k,s,i}^2)
\right\} - (\bar{\bX}_{s,i})^2 \]

 We first prove the results under
(C1.2). Define $Y_{k,s,i} = X_{k,s,i}^2 - \E(X_{k,s,i}^2)$. Then
\begin{align*}
  & \P\left\{ \max_i \left\lvert
      \frac{\hat{\sigma}_{ss,ii}}{\sigma_{ss,ii}}-1 \right\rvert \geq
    \frac{C}{3}\frac{1}{(\log q_0)^2} \right\}\\
  \leq & \P\left\{\max_i \left\lvert \frac{1}{n} \sum_{k=1}^n
      Y_{k,s,i} \right\rvert \geq \frac{C}{6}\frac{1}{(\log q_0)^2}
  \right\} + \P\left\{\max_i (\bar{\bX}_{s,i})^2 \geq \frac{C}{6}
    \frac{1}{(\log q_0)^2}\right\}\\
  \leq & q_0 \cdot \P\left\{\left\lvert \frac{1}{n} \sum_{k=1}^n
      Y_{k,s,i} \right\rvert \geq \frac{C}{6}\frac{1}{(\log
      q_0)^2} \right\} + q_0 \cdot \P\left\{ \bar{\bX}_{s,i} \geq
    \left(\frac{C}{6} \frac{\vareps_n}{(\log q_0)^2}\right)^{1/2}\right\}
\end{align*}
Let $t_1 = \eta(\log q_0)^{1/2}/(2n^{1/2})$. Then
we have
\begin{align}
  & \P\left\{\left\lvert \frac{1}{n} \sum_{k=1}^n Y_{k,s,i}
    \right\rvert \geq
    \frac{C}{6}\frac{1}{(\log q_0)^2} \right\}\label{eq:Xsq1}\\
  \leq & \exp\{-Ct_1n/(6(\log q_0)^2)\}\cdot \E\left[ \exp\left\{
      \sum_{k=1}^n t_1\lvert Y_{k,s,i}\rvert
    \right\} \right]\nonumber\\
  \leq & \exp\{-Ct_1n/(6(\log q_0)^2)\} \cdot \prod_{k=1}^n
  \E\left\{ \exp(t_1 \lvert Y_{k,s,i} \rvert) \right\}\nonumber\\
  \leq & \exp\{-Ct_1n\vareps_n/(6\log q_0)\} \cdot \prod_{k=1}^n\left[
    1 + \E\left\{ t_1^2Y_{k,s,i}^2\exp(t_1\lvert Y_{k,s,i}\rvert)\right\}\right]\\
  \leq & \exp\left[-Ct_1n/(6(\log q_0)^2) + \sum_{k=1}^n \E\left\{
      t_1^2Y_{k,s,i}^2\exp(t_1 \lvert Y_{k,s,i} \rvert)\right\}\right]\nonumber\\
  \leq & \exp(-C \eta\log q_0/12 + c_{\eta} \log q_0)\nonumber\\
  \leq & Cq_0^{-M}, \nonumber
\end{align}
where $c_{\eta}$ is a positive number only depends on
$\eta$. Similarly,
\begin{align}
  & \P\left\{ \bar{\bX}_{s,i} \geq
    \left( \frac{C}{6} \frac{1}{(\log q_0)^2} \right)^{1/2} \right\} \label{eq:Xbar1}\\
  \leq & \exp\left\{ - \frac{\eta}{2}\left(\frac{Cn}{6(\log
        q_0)^2}\right)^{1/2} + c_\eta \log q_0
  \right\} \nonumber\\
  \leq & Cq_0^{-M}\nonumber
\end{align}

It remains to prove the lemma under (C1.2*). Define
\[\hat{Y}_{k,s,i} = Y_{k,s,i}I\left\{\lvert Y_{k,s,i}\rvert \leq
  n/(\log q_0)^5\right\} - \E\left[ Y_{k,s,i}I\left\{\lvert
    Y_{k,s,i}\rvert \leq n/(\log q_0)^5\right\} \right].\]
Then,
\begin{align}
  & \P\left\{\max_i \left\lvert \sum_{k=1}^n Y_{k,s,i} \right\rvert
    \geq
    \frac{C}{6} \frac{1}{(\log q_0)^2}\right\} \label{eq:Xsq2}\\
  \leq & \P\left\{\max_i \left\lvert \sum_{k=1}^n \hat{Y}_{k,s,i}
    \right\rvert \geq \frac{C}{6} \frac{1}{(\log q_0)^2}\right\} +
  \P\left\{\max_{i,k} \lvert Y_{k,s,i} \rvert \geq \frac{n}{(\log q_0)^5}\right\}\nonumber\\
  \leq & Cq_0 \exp\left\{-C (\log q_0)^2\right\} + Cn^{-\eps/4}.\nonumber
\end{align}
The last inequality is by Bernstein's inequality and condition
(C1.2*). Define 
\[\hat{X}_{k,s,i} = X_{k,s,i}I(\lvert X_{k,s,i}-\bar{X}_{s,i}\rvert
\leq n/(\log q_0)^5) - \E\left\{ X_{k,s,i}I(\lvert
  X_{k,s,i}-\bar{X}_{s,i}\rvert \leq n/(\log q_0)^5)\right\}.\] Then,
following the similar argument, we have
\begin{align}
  & \P\left\{ \max_i \bar{\bX}_{s,i} \geq \left(\frac{C}{6}
      \frac{1}{(\log q_0)^2}\right)^{1/2}\right\}\label{eq:Xbar2}\\
  \leq & \P\left\{ \max_i \left\lvert \sum_{k=1}^n \hat{X}_{k,s,i}
    \right\rvert \geq n \left(\frac{C}{6(\log
        q_0)^2}\right)^{1/2}\right\} + \P\left\{\max_{i,k} \lvert
    X_{k,s,i} \rvert \geq n/(\log q_0)^5\right\}\nonumber\\
\leq & C q_0 \exp\left\{ -C(\log q_0)^4\right\} +
Cn^{-2-2\gamma_1-\eps/2}.
\nonumber
\end{align}
\end{proof}

\begin{proof}[Proof of \lemref{lm:dist}]
  Set $Y_{k,st,ij} = X_{k,s,i}X_{k,t,j} - \sigma_{st,ij}$. Define
  $\tilde{\theta}_{st,ij} = \frac{1}{n}\sum_{k=1}^n Y_{k,st,ij}^2$ as an oracle estimator of
  $\theta_{st,ij} = \Var(X_{k,s,i}X_{k,t,j})$. By the proof of Lemma
  4 in \citet{CaiTwo2013}, it follows that
  \begin{equation}
  \P \left(\max_{ij} \left\lvert \frac{1}{n} \sum_{k=1}^n
     Y_{k,st,ij}^2-\theta_{st,ij} \right\rvert \geq C\vareps_n/\log
   q_0 \right) = O(q_0^{-M} + n^{-\eps/8}),\label{eq:tildetheta}
\end{equation}
where $\vareps_n = \max\{(\log q_0)^{1/6}/n^{1/2}, (\log
q_0)^{-1}\}$. We can write
\[\frac{(\tilde{\sigma}_{st,ij} -
  \sigma_{st,ij})^2}{\theta_{st,ij}/n} = \frac{(\sum_{k=1}^n
  Y_{k,st,ij})^2}{\sum_{k=1}^n Y_{k,st,ij}^2}\cdot \frac{\sum_{k=1}^n
  Y_{k,st,ij}^2}{\theta_{st,ij}/n}.\]
By Theorem 1 in \citet{JingSelf2003}, we have
\[\max_{i,j} \P\left\{ \frac{(\sum_{k=1}^n
    Y_{k,st,ij})^2}{\sum_{k=1}^n Y_{k,st,ij}^2}\geq x^2  \right\}
\leq C(1-\Phi(x)).\]
Together with (\ref{eq:tildetheta}), we have the conclusion. 
Note that under the null, $\theta_{1,st,ij} = \theta_{st,ij}$. So
(\ref{eq:tildetheta}) also holds for $\theta_{st,ij}$ under $\uH_{0,st}$.
\end{proof}

\begin{proof}[Proof of \lemref{lm:extreme}]
  When $d=1$, it is easy to get
  \[\P\left( \lvert \vN_1 \rvert_{\min} \geq y(d_{st},x)^{1/2} \pm \eps_n(\log
    d_{st})^{-1/2} \right) = \frac{1}{d_{st}\pi^{1/2}}\exp(-x/2)
  (1+o(1)). \] We now prove the lemma for $d\geq 2$. Note that for any
  $1\leq i,j \leq q_s $ and $1\leq k,l\leq q_t$, under $\uH_{0,st}$,
  we have
\[\Cov(X_{s,i}X_{t,k},X_{s,j}X_{t,l}) =
\sigma_{ss,ij}{\sigma_{tt,kl}}.\] To simplify notation, denote
$X_{s,i}$ by $X_{i_{m_1}}$， $X_{s,j}$ by $X_{i_{m_2}}$, $X_{t,k}$ by
$X_{j_{m_1}}$, and $X_{t,l}$ by $X_{j_{m_2}}$.  Define graph
$G_{i_{m_1}j_{m_1}i_{m_2}j_{m_2}} = (V_{i_{m_1}j_{m_1}i_{m_2}j_{m_2}},
E_{i_{m_1}j_{m_1}i_{m_2}j_{m_2}})$, where
$V_{i_{m_1}j_{m_1}i_{m_2}j_{m_2}} =
\{i_{m_1},j_{m_1},i_{m_2},j_{m_2}\}$ is the set of vertices and
$E_{i_{m_1}j_{m_1}i_{m_2}j_{m_2}}$ is the set of edges. There is an
edge between $ a \neq b \in \{i_{m_1},j_{m_1},i_{m_2},j_{m_2}\}$ if
and only if $\lvert \rho_{ss,ij}\rvert =\lvert \rho_{i_{m_1}i_{m_2}}
\rvert \geq (\log q_0)^{-1-\alpha_0}$ or $\lvert \rho_{tt,kl} \rvert =
\lvert\rho_{j_{m_1}j_{m_2}}\rvert \geq (\log q_0)^{-1-\alpha_0}$, for
all
$a,b\in\{i_{m_1},j_{m_1},i_{m_2},j_{m_2}\}$. $G_{i_{m_1}j_{m_1}i_{m_2}j_{m_2}}$
is a $v$ vertices graph ($v$-G) if the number of different vertices in
$V_{i_{m_1}j_{m_1}i_{m_2}j_{m_2}}$ is $v$. It is a $e$ edges graph
($e$-E) if $\Card(E_{i_{m_1}j_{m_1}i_{m_2}j_{m_2}})=e$. A vertex in
$G_{i_{m_1}j_{m_1}i_{m_2}j_{m_2}}$ is said to be isolated if there is
no edge connected to it. Note that for any $1\leq m_1 \neq m_2 \leq
d$, $G_{i_{m_1} j_{m_1} i_{m_2} j_{m_2}}$ could only be 3G/4G, and
0E/1E/2E. We say a graph $G = G_{i_{m_1}j_{m_1}i_{m_2}j_{m_2}}$
satisfies the weak correlation condition (\ref{eq:wk}) if
\begin{equation}
G \text{ is a 3G0E, 4G0E or 4G1E}\label{eq:wk}.
\end{equation}
For any $G_{i_{m_1} j_{m_1} i_{m_2} j_{m_2}}$ satisfying Condition
(\ref{eq:wk})
\[ \lvert \Cov(X_{i_{m_1}}X_{j_{m1}},X_{i_{m_2}}X_{j_{m_2}})\rvert
= O\{(\log d)^{-1-\alpha_0}\}.\]
We now define the following set
\begin{align*}
  \Ir  = &\left\{ 1\leq k_1 < \ldots < k_d \leq d_{st}\right\},\\
  \Ir_0 = &\left\{ 1\leq k_1 < \ldots < k_d \leq d_{st}: \text{ for some }
    m_1, m_2 \in \{k_1,\ldots,k_d\} \text{ with } m_1\ldots m_2
  \right.\\
  & \left. G_{i_{m_1} j_{m_1} i_{m_2} j_{m_2}} \text{does not satisfy
      Condition (\ref{eq:wk})}\right\},\\
  \Ir_0^c = & \left\{ 1\leq k_1 < \ldots < k_d \leq d_{st}: \text{ for any
    } m_1, m_2 \in \{k_1,\ldots,k_d\} \text{ with } m_1\ldots m_2
  \right.\\
  & \left. G_{i_{m_1} j_{m_1} i_{m_2} j_{m_2}} \text{ satisfies
      Condition (\ref{eq:wk})}\right\},\\
\end{align*}
Obviously, $\Ir = \Ir_0 \bigcup \Ir_0^c$. For any subset $\Sr$ of
$\{k_1,\ldots, k_d\}$, we say that $\Sr$ satisfies (\ref{eq:lwk}) if 
\begin{equation}
  \label{eq:lwk}
  \text{For any } m_1\neq m_2\in \Sr,\ G_{i_{m_1} j_{m_1} i_{m_2}
    j_{m_2}} \text{ satisfies (\ref{eq:wk})}.
\end{equation}
For $2\leq l\leq d$, let 
\begin{align*}
  \Ir_{0l} = & \{1\leq k_1 <\ldots < k_d \leq d_{st}: \text{ the
    cardinality
    of the largest subset } \Sr \text{ is } l, \text{ where }\\
  & \Sr \text{ is a subset of } \{k_1,\ldots,k_d\} \text{ satisfies
    (\ref{eq:lwk})} \}\\
  \Ir_{01} = & \{1\leq k_1 <\ldots < k_d \leq d_{st}: \text{ For any }
  m_1,m_2 \in \{k_1,\ldots,k_d\} \text{ with } m_1 \neq m_2\\
  & G_{i_{m_1} j_{m_1} i_{m_2} j_{m_2}} \text{ does not satisfy (\ref{eq:wk})} \}\\
\end{align*}
Obviously, $\Ir_0^c = \Ir_{0d}$ and $\Ir_0 = \bigcup_{l=1}^{d_{st}-1}
\Ir_{0l}$. It is easy to show that $\Card(\Ir_{0l}) \leq
d_{st}^{l+\gamma(d-l)}$ and $\Card(\Ir_{0}^c) \leq {d_{st}
\choose d}$. It suffices to prove
\begin{align}
  \sum_{\Ir_0^c} \P (\lvert\bN\rvert_{\min}\geq y(d_{st},q_\alpha)^{1/2}\pm
  \epsilon_n(\log q_0)^{-1/2}) & =
  (1+o(1))\frac{1}{d!}\left\{\pi^{-1/2}\exp(-x/2)\right\}^d \label{eq:I0c}\\
  \sum_{\Ir_0}\P\left(\lvert \bN \rvert_{\min} \geq y(d_{st},q_\alpha)^{1/2}\pm
    \epsilon_n(\log q_0)^{-1/2} \right) & = o(1) \label{eq:I0}
\end{align}

We first prove (\ref{eq:I0}). 
Further divide $\Ir_{0l}$ as follows. Let $(k_1,\ldots,k_d)\in
\Ir_{0l}$ and let $\Sr_* \subseteq (k_1,\ldots,k_d)$ be the largest
cardinality subset satisfying (\ref{eq:lwk}). Define
\begin{align*}
  \Ir_{0l1} &  = \{(k_1,\ldots,k_d)\in \Ir_{0l}: \text{there exists an }
  a \not\in \Sr_* \text{ such that for some } b_1,b_2 \in \Sr_*\\
          & \phantom{aaaaa} \text{
  with }b_1\neq b_2, \text{ both } G_{i_a j_a i_{b_1}j_{b_1}} \text{ and
} G_{i_a j_a i_{b_2} j_{b_2}} \text{ is 3G1E  or  4G2E}. \}\\
  \Ir_{0l2} & = \Ir_{0l}\setminus \Ir_{0l1}.
\end{align*}

It is easy to see that $\Ir_{0l1} = \emptyset$ and $\Ir_{0l2} =
\Ir_{0l}$. Recall that $d$ is fixed and $l\leq d-1$. We can show that
$\Card(\Ir_{0l1}) \leq C_d d_{st}^{l-1 + \gamma(d-l+1)}$ and
$\Card(\Ir_{0l2})\leq C_d d_{st}^{l+\gamma(d-l)}$. Let $\Sr_* =
\{b_1,\ldots,b_l\}$ and $x(d_{st}) = y(d_{st},x)^{1/2}\pm\epsilon_n(\log d_{st})^{-1/2}$.

For any $(k_1,\ldots,k_d) \in \Ir_{0l}$, let $\vU_l$ be the covariance
matrix of $(N_{b_1},\ldots,N_{b_l})$. By (\ref{eq:wk}), $\lVert
\vU_l-\Ir_l\rVert_2 \leq O\{(\log q_0)^{-1-\alpha_0}\}$. Let $\lvert \by
\rvert_{\max} = \max_{1\leq i\leq l}\lvert y_i\rvert$ for
$\by=(y_1,\ldots,y_l)$. Then
\begin{align}
  \P\{\lvert \vN_d\rvert_{\min} \geq x(d_{st})\} & \leq \P\{\lvert
  N_{b_1}\rvert \geq x(d_{st}),\ldots,\lvert N_{b_l}\rvert \geq x(d_{st})\} \nonumber\\
  & = \frac{1}{(2\pi)^{l/2}\lvert\vU_l\rvert^{1/2}}\int_{\lvert \by
    \rvert_{\min}\geq x(d_{st})} \exp(-\frac{1}{2}\by^\trans\vU_l^{-1}\by)
  \ud
  \by \nonumber\\
  & = \frac{1}{(2\pi)^{l/2}\lvert\vU_l\rvert^{1/2}}\int_{\lvert \by
    \rvert_{\min}\geq x(d_{st}), \lvert \by \rvert_{\max} \leq (\log
    q_0)^{1/2+\alpha_0/4}} \exp(-\frac{1}{2}\by^\trans\vU_l^{-1}\by) \ud
  \by \nonumber\\
  & \phantom{aa} + O[\exp\{-(\log q_0)^{1+\alpha_0/2}/4\}]\nonumber\\
 & = \frac{1+O\{(\log q_0)^{-\alpha_0/2}\}}{(2\pi)^{l/2}} \int_{\lvert \by
    \rvert_{\min}\geq x(d_{st}), \lvert \by \rvert_{\max} \leq (\log
    q_0)^{1/2+\alpha_0/4}} \exp(-\frac{1}{2}\by^\trans\by) \ud
  \by\nonumber\\
  & \phantom{aa} + O[\exp\{-(\log q_0)^{1+\alpha_0/2}/4\}]\nonumber\\
  & = O(d_{st}^{-l}) \label{eq:integral}
\end{align}
Thus,
\[\sum_{I_{0l1}} \P (\lvert \vN\rvert_{\min} \geq x(d_{st})) \leq C_d
d_{st}^{-1+\gamma(d-l+1)}= o(1).\]
           
For $(k_1,\ldots,k_d) \in \Ir_{0l2}$, let $a_1 =\min\{a:\
a\in(k_1,\ldots,k_d), a\not\in \Sr_*\}$. WLOG, assume $G_{i_{a_1}
  j_{a_1} i_{b_1} j_{b_1}}$ is 3G1E or 4G2E.
Because $(k_1,\ldots,k_d) \in \Ir_{0l2}$, by definition of
$\Ir_{0l2}$, 
\begin{align*}
  \Cov(N_{a_1},N_{b_j}) &  = O((\log q_0)^{-1-\alpha_0}),\quad j = 2,
  \ldots, l\\
  \Cov(N_{b_i},N_{b_j}) & = O((\log q_0)^{-1-\alpha_0}),\quad i,j = 1,
  \ldots, l,\ i\neq j.
\end{align*}
Let $\vV_l$ be the covariance matrix of
$(N_{a_1},N_{b_1},\ldots,N_{b_l})$. It follows that
$\lVert \vV_l-\hat{\vV}_l\rVert_2 = O((\log q_0)^{-1-\alpha_0})$, where
$\hat{\vV}_l = \diag(\vD,\vI_{l-1})$ with $\vD$ to be the covariance
matrix of $(N_{a_1},N_{b_1})$.

By the conditions, for all $a_1$ and $b_1$,
\[
 \frac{ \lvert \E X_{i_{a_1}} Y_{j_{a_1}}  X_{i_{b_1}}
   Y_{j_{b_1}}\rvert}{ (E X_{i_{a_1}}^2 Y_{j_{a_1}}^2)^{1/2}(E
   X_{i_{b_1}}^2 Y_{j_{b_1}}^2)^{1/2}}  = \rho_{ss,
   i_{a_1}i_{b_1}}\rho_{tt,j_{a_1} j_{b_1}}\leq (\rho_0+1)/2.
\]

Using the similar argument as (\ref{eq:integral}), we can show that
\begin{align*}
  & \sum_{\Ir_{0l2}} \P\{\lvert N_{a_1} \rvert \geq x(d_{st}), \lvert
  N_{b_1}
  \rvert \geq x(d_{st}), \ldots, \lvert N_{b_l}\rvert \geq x(d_{st})\}\\
  \leq & C\sum_{\Ir_{0l2}}\left[ \P\{\lvert N_{a_1}\rvert \geq
    x(d_{st}), \lvert N_{b_1}\rvert \geq x(d_{st})\} \times
    d_{st}^{-(l-1)} + \exp\{-(\log
    q_0)^{1+\alpha_0/2}/4\} \right]\\
  \leq & C\sum_{\Ir_{0l2}}\left[ d_{st}^{-1-(1-\rho_0)/(3+\rho_0)}
    \times d_{st}^{-(l-1)}
    + \exp(-(\log q_0)^{1+\alpha_0/2}/4) \right]\\
  \leq & C d_{st}^{-\frac{1-\rho_0}{3+\rho_0}+\gamma(d-l)} + q_0^{-M} = o(1)\\
\end{align*}
Thus (\ref{eq:I0}) is proved. Following the same argument as
(\ref{eq:integral}) and $\Card(I_{0}^c) = (1+o(1)){d_{st}\choose d}$,
we can prove (\ref{eq:I0c}).
\end{proof}

\section{Simulated Network in Section 5.2}

\begin{figure}[htbp] 
   \centering
   \includegraphics[width=6in]{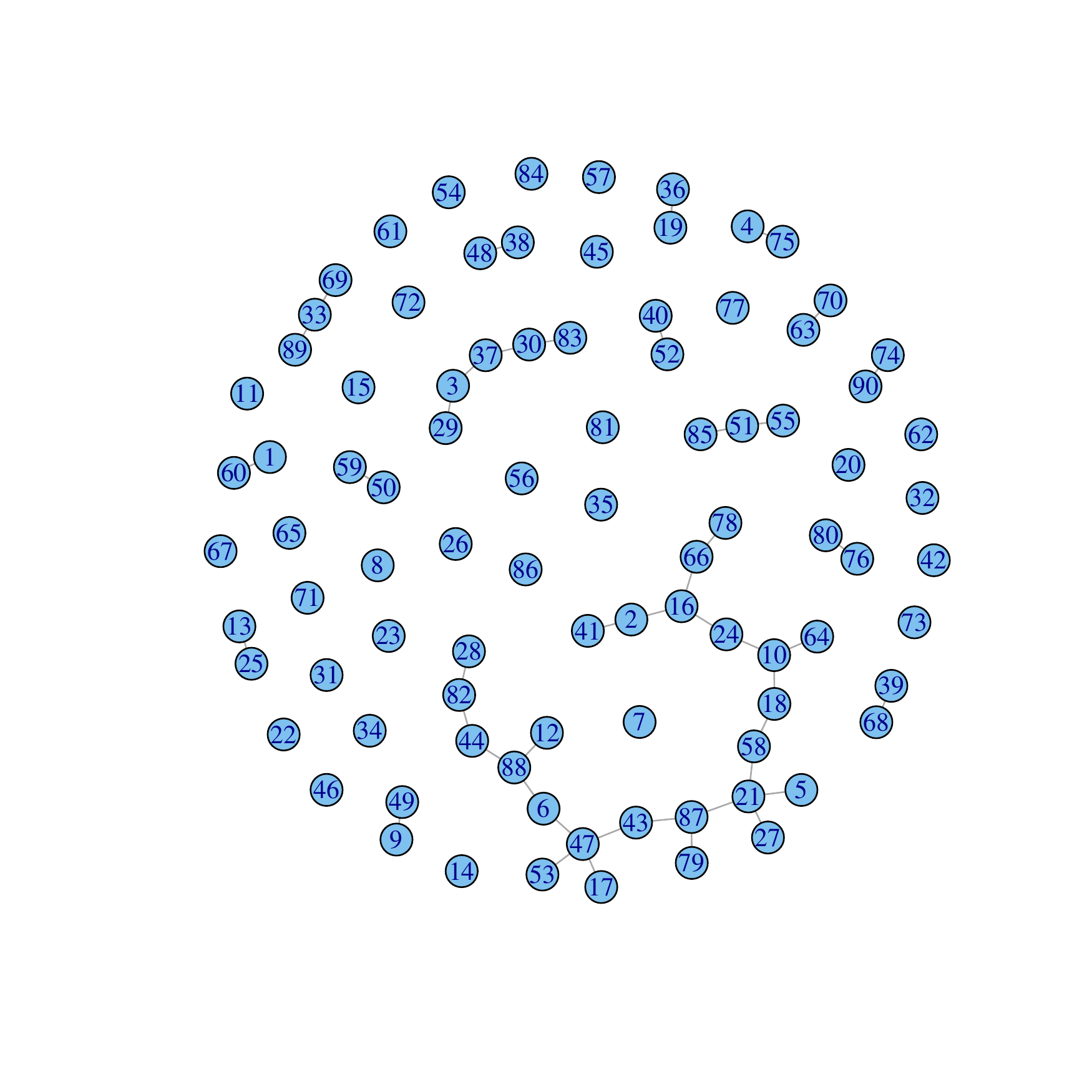} 
   \caption{Simulated network on 90 regions using the
     Er\"{d}os-R\'{e}nyi model}
   \label{fig:simulNet}
\end{figure}

\end{document}